\newcommand{\titel}{The Local Semicircle Law for \\Random Matrices with a Fourfold Symmetry}
\title{\titel} 
\author{Johannes Alt\footnote{Partially funded by ERC Advanced Grant RANMAT No. 338804.}\\{\small IST Austria, Am Campus 1, A-3400 Klosterneuburg, Austria. jalt@ist.ac.at}}
\date{}
\numberwithin{equation}{section}
\newcommand{\R}{\mathbb{R}}  
\C\renewcommand{\C}{\mathbb{C}}\else\newcommand{\C}{\mathbb{C}}\fi 
\renewcommand{\Im}{\mathrm{Im}\,} 
\renewcommand{\i}{\mathrm{i}\,} 
\newcommand{\N}{\mathbb{N}}  
\newcommand{\E}{\mathbb{E}}  
\newcommand{\e}{\mathcal{E}}  
\newcommand{\F}{\mathbb{F}}  
\newcommand{\Z}{\mathbb{Z}}  
\newcommand{\di}{\text{d}} 
\newcommand{\eps}{\varepsilon} 
\newcommand*{\defeq}{\mathrel{\vcenter{\baselineskip0.5ex \lineskiplimit0pt\hbox{\scriptsize.}\hbox{\scriptsize.}}}=}
\providecommand{\norm}[1]{\lVert#1\rVert} 
\providecommand{\abs}[1]{\lvert#1\rvert} 
\begin{document}
\newtheoremstyle{test}
  {}
  {}
  {\itshape}
  {}
  {\bfseries}
  {.}
  { }
  {}

\theoremstyle{test}
\newtheorem{defi}{Definition}[section]
\newtheorem{ex}[defi]{Example}
\newtheorem{rem}[defi]{Remark}
\newtheorem{exer}[defi]{Exercise}
\newtheorem{thm}[defi]{Theorem}
\newtheorem{lem}[defi]{Lemma}
\newtheorem{cor}[defi]{Corollary}
\newtheorem{pro}[defi]{Proposition}
\newtheorem*{rem*}{Remark}   
\newtheorem*{ex*}{Example}   
\newtheorem*{pro*}{Proposition} 
\newtheorem*{def*}{Definition}
\newtheorem*{cor*}{Corollary}
\newtheorem*{thm*}{Theorem}

\maketitle
\vspace*{-1.1cm}
\begin{abstract}
We consider real symmetric and complex Hermitian random matrices with the additional symmetry $h_{xy}=h_{N-y,N-x}$. 
The matrix elements are independent (up to the fourfold symmetry) and not necessarily
identically distributed. This ensemble naturally arises as the Fourier transform of a Gaussian orthogonal ensemble (GOE). 
It also occurs as the flip matrix model -- an approximation of the two-dimensional Anderson model at small disorder. 
We show that the density of states converges to the Wigner semicircle law despite the new symmetry type.
We also prove the local version of the semicircle law on the optimal scale.
\end{abstract}

\noindent \emph{Keywords:} Wigner semicircle law, flip matrix model\\
\textbf{AMS Subject Classification:} 15B52, 82B44

\section{Introduction}

In 1955, Wigner conjectured that the eigenvalues of large random matrices describe the energy levels of large atoms \cite{Wigner1955}. 
Therefore, the distribution of the eigenvalues of a random matrix is an interesting and often studied object in random matrix theory. 
For an $N\times N$ random matrix with eigenvalues $(\lambda_i)_{i=1}^N$, let $\mu_N \defeq N^{-1} \sum_{i=1}^N \delta_{\lambda_i}$
be the \emph{empirical spectral measure}.
The celebrated Wigner semicircle law \cite{Wigner1955} asserts that $\mu_N$ converges to the semicircle law 
given by the density $\sqrt{(4-x^2)_+}/(2\pi)$ in the limit that the matrix size $N$ goes to infinity. 

The Wigner-Dyson-Gaudin-Mehta conjecture in \cite{mehta2004random} asserts 
that the distribution of the difference between consecutive eigenvalues of a large random matrix only depends on the symmetry type of the matrix and not 
on the distribution of the entries. This independence of the actual distribution is called universality. 
The proof of this conjecture by Erd\H os, Schlein, Yau and Yin in \cite{ErdosSchleinYau2011,erdoes_relaxation_flow_2012} 
is built upon establishing a local semicircle law in the first step (see \cite{ErdoesYau2012} for a review). 
An alternative approach was pursued by Tao and Vu in \cite{TaoVu2011_Acta}.

Wigner's semicircle law can be used to compute the number of eigenvalues contained in a fixed interval for a large random matrix.
With the help of a local semicircle law such prediction can also be made in the case of a variable interval size as long as it is considerably bigger than $N^{-1}$
 which is the typical distance of neighbouring eigenvalues. A local semicircle law is most commonly proved by establishing a convergence of the Stieltjes transform $m_N(z) \defeq 
N^{-1} \sum_{i=1}^N (\lambda_i - z)^{-1}$ of $\mu_N$ to the Stieltjes transform $m$ of Wigner's semicircle law. Then an interval size of $N^{-1}$ corresponds
to showing the convergence when $\eta=\Im z$ is of this order. 

One of the most general versions of a local semicircle law is presented in \cite{EJP2473}. They suppose that the random matrix $H=(h_{xy})_{x,y}$
is complex Hermitian (or real symmetric), i.e., $h_{xy}=\bar h_{yx}$ for all $x$ and $y$ with real-valued random variables $h_{xx}$ for all $x$ 
such that $(h_{xy})_{x \leq y}$ forms an independent family of centered random variables. 
Besides assuming that the variances $s_{xy}\defeq \E \abs{h_{xy}}^2$ of a row sum up to one, i.e,
\begin{equation}
\sum_{y} s_{xy} =1 
\label{eq:sum_s_xy_1_intro}
\end{equation}
for all $x$ which ensures that the eigenvalues stay of order 1, 
the most important requirement is the independence of the entries (up to the symmetry constraint).

Many works in random matrix theory start with this independence assumption. However, some naturally arising random 
matrix models do not fulfill it. An example is the Fourier transform of a Gaussian Orthogonal Ensemble (GOE).
For an $N\times N$ matrix $H=(h_{xy})_{x,y=1}^N$ the Fourier transform $\hat H=(\hat h_{pq})_{p,q \in \Z\slash N\Z}$ is defined through 
\[\hat h_{pq} = \frac{1}{N} \sum_{x,y=1}^N h_{xy}\exp\left( -\i \frac{2\pi}{N}(px-qy)\right) \]
for $p,q \in \Z\slash N\Z$. 
If $H=(h_{xy})_{x,y=1}^N$ is a real symmetric matrix then $\hat H=(\hat h_{pq})_{p,q \in \Z\slash N\Z}$ fulfills the relations 
\[\hat h_{pq}=\overline{\hat h}_{qp}=\hat h_{-q, -p}=\overline{\hat h}_{-p,-q}\]
for $p,q \in \Z\slash N\Z$. 
If the entries of $H$ are, in addition, centered Gaussian distributed random variables such that $\{h_{xy}; x \leq y\}$ are independent with $\E h_{xx}^2=2\E h_{xy}^2$ for $x\neq y$ 
then the entries of $\hat H$ will be independent up to this symmetry which we call \emph{fourfold symmetry}.

Interestingly, this symmetry also arises in random matrix approximations of the Anderson model. 
In \cite{Bellisard2003}, it is argued that the fourfold symmetry with a constant diagonal 
-- called the flip symmetry -- is a good approximation of the two-dimensional Anderson model in the regime of small disorder (see \cite{disertori2008} for a review on random matrix
models of the Anderson model).

The first local law for Wigner matrices on the optimal scale $\eta \approx N^{-1}$ (with logarithmic corrections) in the bulk has been proved by 
Erd\H os, Schlein and Yau in \cite{localsemicirclelaw1}. 
In \cite{localsemicircleBernoulli}, Erd\H os, Yau and Yin proved that $m_N-m$ is of the optimal order $(N\eta)^{-1}$ in the bulk and they could extend this result to the 
edges in \cite{Erdoes20121435}. In the more general case with non-identical variances and the assumption \eqref{eq:sum_s_xy_1_intro}, a local semicircle law on the scale 
$\eta \approx M^{-1}$ with $M\defeq (\max_{x,y} s_{xy})^{-1}$ has been established by Erd\H os, Yau and Yin in \cite{Bulk_Universality_Gen_Wigner_Matrices}.
For this case, Erd\H os, Knowles, Yau and Yin obtained the optimal order $(M\eta)^{-1}$ of $m_N-m$ in \cite{EJP2473} even at the edge.
A more detailed overview of the historical development of the local semicircle law can be found in section 2.1 of \cite{2012arXiv1212.0839E}. 

Our main result is a proof of the local semicircle law for random matrices possessing 
the fourfold symmetry. Despite the different symmetry type compared to the case in \cite{EJP2473} the limiting distribution of the 
empirical spectral measure will still be Wigner's semicircle law. The basic structure of the proof follows \cite{EJP2473}. The main novelty is that not only the diagonal elements of the Green function have to be 
treated separately from the offdiagonal ones, but elements on the counterdiagonal need to be estimated separately via a new 
self-consistent equation.

We conclude this introduction with an outline of the structure of the present article. In the following section, we 
introduce our model and some notation and state our main result. In section 3, we prove that the Fourier transform of a GOE 
satisfies the assumptions of Theorem \ref{thm:Main_Fourfold}. The remaining part is devoted to the proof 
of our main result. Section 4 contains a collection of the tools used in the proof which is given in the subsequent section.  
In the appendix, we show that the fluctuation averaging holds true for the fourfold symmetry as well.

\emph{Acknowledgement:} I am very grateful to László Erd\H os for drawing my attention to this question, for suggesting the method and for numerous helpful 
comments during the preparation of this article. 
Moreover, I thank Oskari Ajanki and Torben Krüger for useful discussions. 

\section{Main Result}

For $N\in \N$ and $x, y \in \Z\slash N\Z$, let $\zeta_{xy}^{(N)}$ be real or complex valued random variables (in the following we drop the $N$-dependence in our notation) such that 
$\zeta_{xx}$ is real valued, $\E \zeta_{xy}=0$ and $\E \abs{\zeta_{xy}}^2=1$ for all $x, y$. 
Moreover, we assume that for every $p \in \N$ there is a constant $\mu_p$ such that 
\begin{equation}
\E\abs{\zeta_{xy}}^p\leq \mu_p
\label{eq:finite_moments}
\end{equation}
for all $x,y\in \Z\slash N\Z$ and $N\in \N$.
For fixed $N\in \N$, the entries are supposed to be independent up to the fourfold symmetry  
$\zeta_{xy}=\bar \zeta_{yx}=\zeta_{-y, -x}=\bar \zeta_{-x,-y}$
for all $x,y \in \Z\slash N\Z$.

For $N \in \N$, let $S=(s_{xy})_{x,y\in \Z\slash N\Z}$ be an $N\times N$-matrix of nonnegative real numbers such that 
$s_{xy}=s_{yx}=s_{-y, -x}=s_{-x,-y}$
for all $x,y$ and $S$ is stochastic, i.e., for every $x$ we have 
\begin{equation}
\sum_y s_{xy}=1.
\label{eq:condition_s_xy} 
\end{equation}
Furthermore, we assume that the $N$-dependent parameter $M\defeq (\max_{x,y} s_{xy})^{-1}$ satisfies 
\begin{equation}
N^\delta\leq M \leq N
\label{eq:assumption_M}
\end{equation}
for some $\delta>0$. Note that the first estimate is an assumption on $S$ whereas the second bound follows 
from the definition of $M$ and \eqref{eq:condition_s_xy}.

Defining $h_{xy}\defeq s_{xy}^{1/2}\zeta_{xy}$ we obtain the Hermitian random matrix $H^{(N)}=(h_{xy})_{x,y\in \Z\slash N\Z}$ which 
fulfills the following fourfold symmetry
\begin{equation}
h_{xy}=\bar h_{yx}=h_{-y, -x}=\bar h_{-x,-y}
\label{eq:fourfold_symmetry}
\end{equation}
because of the definition of $\zeta_{xy}$ and the conditions on $S$.
By definition, $S$ describes the variances of $H^{(N)}$.

Let $\rho$ denote Wigner's semicircle law and $m$ its Stieltjes transform, i.e.,
\begin{equation}
\rho(x)\defeq \frac{1}{2\pi} \sqrt{(4-x^2)_+}, \qquad m(z)\defeq\frac{1}{2\pi}\int_{-2}^{2} \frac{\sqrt{4-x^2}}{x-z} \di x
\label{eq:semicircle1}
\end{equation}
for $x\in \R$ and $z \in \C\backslash \R$. 
For the real and imaginary part of $z\in \C$, we will use the abbreviations $E$ and $\eta$, respectively, i.e., $z=E+\i \eta$ with $E, \eta \in \R$.

With this definition the complex valued function $m(z)$ is the unique solution of 
\begin{equation}
m(z)+\frac{1}{m(z)+z}=0
\label{eq:semicircle2}
\end{equation}
such that $\Im m(z)>0$ for $\eta >0$. 
Denoting the resolvent or Green function of $H$ by
\[ G(z) \defeq (H-z)^{-1}\]
and its entries by $G_{ij}(z)$ for $z\in \C\backslash \R$ we obtain for the Stieltjes transform $m_N$ of the empirical spectral measure
\[m_N(z) = \frac{1}{N} \text{Tr} G(z).\]


We use the definitions of stochastic domination and spectral domain given in \cite{EJP2473}.

\begin{defi}[Stochastic Domination]
Let $X=(X^{(N)}(u); u \in U^{(N)}, N \in \N)$ and $Y=(Y^{(N)}(u); u \in U^{(N)}, N \in \N)$ be two 
families of nonnegative random variables for a possibly $N$-dependent parameter set $U^{(N)}$. 
We say that $X$ is \emph{stochastically dominated} by $Y$, uniformly in $u$, if for all $\eps>0$
and $D>0$ there is a $N_0(\eps, D)\in \N$ such that
$$\sup_{u\in U^{(N)}}\mathbb P\left[X^{(N)}(u) > N^\eps Y^{(N)}(u)\right]\leq N^{-D}$$
for all $N \geq N_0$. In this case, we use the notation $X\prec Y$. 
If $X$ is a family consisting of complex valued random variables and $\abs{X} \prec Y$ then we 
write $X\in O_\prec (Y)$.
\end{defi}

The definition of stochastic domination implies the following estimate which is important for our arguments
\begin{equation}
\abs{h_{xy}}\prec s_{xy}^{1/2} \leq M^{-1/2}.
\label{eq:estimate_h_xy_stoch_dom}
\end{equation}

\begin{defi}
An $N$-dependent family $\mathbf D = (\mathbf D^{(N)})_{N \in \N}$ of subsets of the complex plane with
$$ \mathbf D^{(N)} \subset \{z=E+\i\eta \in \C; E\in [-10,10], M^{-1}\leq \eta \leq 10\}$$
for every $N \in \N$ is called a \emph{spectral domain}.
\end{defi}

In analogy to the matrix $S$, we define $R=(r_{xy})=(\E h_{xy}^2)^{x\neq -x}_{y\neq -y}$. If $N$ is odd then $R$ is 
an $(N-1)\times(N-1)$ matrix, otherwise it is an $(N-2)\times(N-2)$ matrix.
For $\eta>0$, we introduce the corresponding two control parameters
\begin{equation}
\Gamma_S(z) \defeq \norm{(1-m^2(z)S)^{-1}}_{\ell^\infty\to \ell^\infty},\qquad \Gamma_R(z)  \defeq \norm{(1-m^2(z)R)^{-1}}_{\ell^\infty\to \ell^\infty}
\label{eq:definition_Gamma}
\end{equation}
and their maximum $\Gamma(z) \defeq \max\{\Gamma_S(z),\Gamma_R(z)\}$ (Note that $\Gamma_S$ is denoted by $\Gamma$ in \cite{EJP2473}).

For the definition of the spectral domain underlying our estimates, we define
\begin{equation}
\eta_E\defeq\min\left\{\eta; \frac{1}{M\eta}\leq \min\left\{\frac{M^{-\gamma}}{\Gamma(z)^3},\frac{M^{-2\gamma}}{\Gamma(z)^4 \Im m(z)}\right\} 
\text{ for all }z\in [E+\i\eta ,E+\i 10]\right\}
\label{eq:definition_eta_E}
\end{equation}
for $\gamma\in (0,1/2)$ and $E\in \R$.
Then, for $\gamma \in (0,1/2)$ the spectral domain $\mathbf S\equiv \mathbf S(\gamma)= (\mathbf S^{(N)})_{N \in \N}$ is defined as
\begin{equation}
\mathbf S^{(N)} \defeq \left\{E+\i\eta; \abs{E}\leq 10, \eta_E\leq \eta \leq 10\right\}.
\label{eq:definition_S}
\end{equation}
Note that the spectral domain $\mathbf S$ differs from the spectral domain $\mathbf S$ in \cite{EJP2473} due to the new definition of $\Gamma(z)$.
Besides this difference the following main result of this article has the same form as Theorem 5.1 in \cite{EJP2473}.

\begin{thm}[Local Semicircle Law]
\label{thm:Main_Fourfold}
Let $H$ be a random matrix with the fourfold symmetry \eqref{eq:fourfold_symmetry} such that the conditions \eqref{eq:finite_moments} and \eqref{eq:condition_s_xy} are fulfilled.
For $\gamma \in (0,1/2)$, we have 
\begin{equation}
\abs{G_{xy}(z)-\delta_{xy}m(z)}\prec \sqrt{\frac{\Im m(z)}{M\eta}} + \frac{1}{M\eta}
\label{eq:main_fourfold_estimate_Lambda}
\end{equation}
uniformly in $x,y$ and $z \in \mathbf S$, as well as 
\begin{equation}
\abs{m_N(z)-m(z)} \prec \frac{1}{M\eta}
\label{eq:main_fourfold_estimate_m_N}
\end{equation}
uniformly in $z \in \mathbf S$.
\end{thm}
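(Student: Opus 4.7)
The proof follows the framework of \cite{EJP2473}, the key new ingredient being that the counter-diagonal entries $G_{x,-x}$ must be controlled by their own self-consistent equation. The obstruction to a one-index Schur complement is that $h_{xa}$ is correlated with $h_{-a,-x}$, so the usual minor $H^{(x)}$ is not independent of row $x$. I would therefore use the \emph{two-index} Schur complement, removing both rows and columns $\{x,-x\}$ at once. For $x$ away from the involution fixed points $\{0,N/2\}$ this gives a $2\times 2$ block inverse formula
\[
\begin{pmatrix} G_{xx} & G_{x,-x} \\ G_{-x,x} & G_{-x,-x} \end{pmatrix}^{-1} = A - zI - K,
\]
where $A$ is the corresponding $2\times 2$ block of $H$ and $K_{jk} \defeq \sum_{a,b \notin\{x,-x\}} h_{ja} G^{(\{x,-x\})}_{ab} h_{bk}$. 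The relations \eqref{eq:fourfold_symmetry} produce two nonzero pairings in $\E_{\{x,-x\}} K_{jk}$: the standard one gives $\sum_a s_{xa} G^{(\{x,-x\})}_{aa}$ on the diagonal, while a new pairing coming from $h_{ja}=h_{-a,-j}$ contributes $\sum_a r_{xa} G^{(\{x,-x\})}_{a,-a}$ on the anti-diagonal.

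Writing $v_x \defeq G_{xx}-m$ and $w_x \defeq G_{x,-x}$, expanding the $2\times 2$ inverse, using $-z-m=1/m$, and replacing minors by the full Green function by standard resolvent perturbation identities, one obtains (to leading order) the decoupled linearised system
\[
(1 - m^2 S)\,v \;=\; m^2 \mathcal{E}^{(v)}, \qquad (1 - m^2 R)\,w \;=\; m^2 \mathcal{E}^{(w)},
\]
where $\mathcal{E}^{(v)}_x, \mathcal{E}^{(w)}_x$ gather the genuine fluctuations $K_{jk}-\E_{\{x,-x\}}K_{jk}$, the bare entries $h_{xx}, h_{x,-x}$, and the minor-replacement errors. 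Inverting the two kernels produces exactly the control parameters $\Gamma_S$ and $\Gamma_R$ of \eqref{eq:definition_Gamma}, so that $\|v\|_\infty + \|w\|_\infty \lesssim \Gamma \bigl(\|\mathcal{E}^{(v)}\|_\infty + \|\mathcal{E}^{(w)}\|_\infty\bigr)$. The off-diagonal entries $G_{xy}$ with $y \notin\{x,-x\}$ are handled by a separate, routine Schur/large-deviation argument once $v$ and $w$ are under control, and the fixed points $x\in\{0,N/2\}$ are absorbed precisely by the reduced size of $R$ noted before \eqref{eq:definition_Gamma}.

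To reach the optimal scale, two further inputs are required. First, standard large-deviation bounds for quadratic forms, together with \eqref{eq:estimate_h_xy_stoch_dom}, give a crude a priori bound on $v, w$ of order $(\Gamma/(M\eta))^{1/2}(\Im m(z) + (M\eta)^{-1})^{1/2}$. Second, the fluctuation averaging lemma, in both a diagonal and a novel counter-diagonal version (whose proof under the fourfold symmetry is relegated to the appendix), upgrades weighted averages of $\mathcal{E}^{(v)}$ and $\mathcal{E}^{(w)}$ against the rows of $(1-m^2 S)^{-1}$ and $(1-m^2 R)^{-1}$ respectively to the optimal size $(M\eta)^{-1}$. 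A continuity/bootstrap argument in $\eta$, anchored by the trivial bound at $\eta=10$, then propagates the estimates down to the lower boundary $\eta_E$ of $\mathbf S$, whose definition \eqref{eq:definition_eta_E} is precisely tailored to the stability of both self-consistent equations under this iteration, and yields \eqref{eq:main_fourfold_estimate_Lambda}. The improved bound \eqref{eq:main_fourfold_estimate_m_N} for $m_N-m$ then follows by averaging $v$ against the rows of $(1-m^2 S)^{-1}$ and invoking the fluctuation averaging lemma once more.

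I expect the main obstacle to lie in the counter-diagonal fluctuation averaging: the sums $\sum_a r_{xa} G^{(\{x,-x\})}_{a,-a}$ couple the summation variable $a$ with its reflection $-a$, so the combinatorial graph expansion underlying the classical high-moment proof must be revisited in order to verify that the fourfold pairings do not generate new, obstructive contractions. A secondary technical point is keeping track of the two-index minor structure throughout the iteration, in particular verifying that minor-swap errors remain consistent with the size of the fluctuation-averaged errors.
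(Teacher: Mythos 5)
Your proposal follows essentially the same route as the paper: two self-consistent equations (one for $G_{xx}-m$ with kernel $1-m^2S$, one for the counterdiagonal entries $G_{x,-x}$ with kernel $1-m^2R$), inverted through $\Gamma_S$ and $\Gamma_R$, combined with a new counter-diagonal fluctuation averaging whose combinatorics must account for the pairing of $a$ with $-a$, an a priori bound at large $\eta$, and a self-improving bootstrap down to the boundary of $\mathbf S$ — all of which is exactly the paper's structure. The only difference is in packaging: the paper derives the equations from the scalar Schur complement for $G_{xx}$ (expanding out the $-x$ row via resolvent identities, with partial expectation conditioned on $H^{(x,-x)}$) and from applying the resolvent identity twice for $G_{x,-x}$, rather than from a $2\times 2$ block Schur complement, but the leading terms and error terms obtained coincide.
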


The proof of our main result is based on studying self-consistent equations in the same way as the proof of Theorem 5.1 in \cite{EJP2473} 
which uses one self-consistent equation for $G_{xx}-m$. 
However, due to the fourfold symmetry it is no longer possible to directly show that the entries $G_{x,-x}$ are small as in \cite{EJP2473}.
Therefore, we introduce a second, new self-consistent equation for $G_{x,-x}$.
While deriving these self-consistent equations we will see that the expressions $G_{xx}-m$ for $x \in \Z\slash N\Z$ and $G_{x,-x}$ for $x\neq -x$ are
connected among each other via $\E \abs{h_{xa}}^2$ and $\E h_{xa}^2$, respectively. Therefore, we introduce the matrix $R$ in an analogous fashion as $S$ is introduced in \cite{EJP2473}. 
The corresponding control parameters $\Gamma_R$ and $\Gamma_S$ will appear in our estimates in section \ref{sec:preliminary_bound}. 
Whereas the latter control parameter is present in \cite{EJP2473} and denoted by $\Gamma$ in there, the matrix $R$ and the corresponding parameter $\Gamma_R$ are new in our work. 
The role of $\Gamma$ in \cite{EJP2473} is filled by the maximum $\Gamma(z) = \max\{\Gamma_S(z),\Gamma_R(z)\}$. 
Estimates on $\Gamma$ similar to the ones in \cite{EJP2473} are collected in Lemma \ref{lem:estimate_Gamma} and Remark \ref{rem:estimate_Gamma_R}.

\begin{rem}\label{rem:smaller_spectral_domain}
If the random variables $h_{xy}$ are complex valued with $\E h_{xy}^2=0$ for all $x\neq y$ then $\Gamma_R(z) \leq C\Gamma_S(z)$ for $z \in \{ E+\i \eta; E \in [-10,10], \eta \in (0,10]\}$ and therefore 
we can replace $\Gamma$ by $\Gamma_S$ in \eqref{eq:definition_eta_E}. 
Thus, in this case, our estimates hold on the spectral domain used in Theorem 5.1 in \cite{EJP2473}.
\end{rem}

To have a shorter notation in the following arguments, we introduce the $z$-dependent stochastic control parameters
\begin{align}
 \Lambda_d(z)\defeq &\max_x \abs{G_{xx}(z)-m(z)},&   \Lambda_g(z)\defeq & \max_{x\neq y\neq -x}\abs{G_{xy}(z)}, & \Lambda_-(z)\defeq & \max_{x\neq -x}\abs{G_{x,-x}(z)}, \nonumber \\
\Lambda_o(z)\defeq &\max\{\Lambda_g(z),\Lambda_-(z)\}, & \Lambda(z)\defeq & \max\{\Lambda_d(z),\Lambda_o(z)\}.
\label{eq:definition_Lambda}
\end{align}
Compared to \cite{EJP2473} we added the control parameter $\Lambda_-$ since the off-diagonal terms $G_{x,-x}$ will be estimated differently than the generic
off-diagonal terms. 

\section{Fourier Transform of Random Matrices} \label{ch:fouriertransform}



In this section, we give an example of a random matrix satisfying the conditions of Theorem \ref{thm:Main_Fourfold}, namely the Fourier transform (in the following sense) of a Gaussian orthogonal ensemble. 

\begin{defi}[Fourier Transform]
Let $H=(h_{xy})_{x,y=1}^N$ be an $N\times N$ matrix. The \emph{Fourier transform} $\hat{H}=(\hat h_{pq})_{p,q \in \Z\slash N\Z}$ is the $N\times N$ matrix whose entries are given by 
$$ \hat{h}_{pq}=\frac{1}{N}\sum_{x,y=1}^N h_{xy} \exp\left(-\i\frac{2\pi}{N}(px-qy)\right)$$ 
for $p,q \in \Z\slash N\Z$. 
\end{defi}

In the next Lemma we collect the basic properties of the Fourier transform of a Gaussian orthogonal ensemble which will imply the conditions of Theorem \ref{thm:Main_Fourfold}.
%


\begin{lem}
Let $H$ be a GOE and $\hat{H}$ its Fourier transform. Then the entries
$\hat{h}_{pq}$ and $\hat{h}_{rs}$ are independent if and only if $$(p,q)\notin \{(r,s),(s,r),(-r,-s),(-s,-r)\}.$$
Moreover, $\hat{H}$ satisfies the fourfold symmetry \eqref{eq:fourfold_symmetry} for all $p,q\in \Z\slash N\Z$.
We have 
\begin{equation}
\E \abs{\hat{h}_{pq}}^2=N^{-1}, \quad \E \hat{h}_{pr}^2 =0
\label{eq:fourier_properties}
\end{equation}
for all $q$ and $p\neq r$.
\end{lem}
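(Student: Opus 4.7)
The plan is to exploit the Gaussianity of $\hat H$: since $\hat H$ is a deterministic linear combination of the entries of the GOE, each $\hat h_{pq}$ is a complex centered Gaussian random variable, so the entire family $(\hat h_{pq})$ is jointly Gaussian. Independence therefore reduces to checking uncorrelatedness, and all claims boil down to computing the two-point covariances $\E[\hat h_{pq}\overline{\hat h}_{rs}]$ and $\E[\hat h_{pq}\hat h_{rs}]$ together with a direct change of variables in the Fourier sum.

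For the fourfold symmetry, I would verify each of the three identities $\hat h_{pq}=\overline{\hat h}_{qp}$, $\hat h_{pq}=\hat h_{-q,-p}$, $\hat h_{pq}=\overline{\hat h}_{-p,-q}$ separately. Each is a one-line calculation: write out the defining Fourier sum, use the real symmetry $h_{xy}=h_{yx}$ of $H$ together with the fact that the $h_{xy}$ are real, and relabel $(x,y)\mapsto(y,x)$ (or send $(x,y)\mapsto(-x,-y)$, remembering that $\Z/N\Z$ sums are invariant under this shift) so that the new phase matches the right-hand side.

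For the covariances I would begin from the GOE identity
\[
\E[h_{xy}h_{x'y'}]=\tfrac{1}{N}\bigl(\delta_{xx'}\delta_{yy'}+\delta_{xy'}\delta_{yx'}\bigr),
\]
substitute it into the double Fourier sum, and collapse two of the four indices. What remains in each of the two terms is a factored geometric sum over $\Z/N\Z$, which I evaluate using $\sum_{x\in\Z/N\Z}e^{2\pi i kx/N}=N\delta_{k,0}$. This gives schematically
\[
\E[\hat h_{pq}\overline{\hat h}_{rs}]=\frac{1}{N}\bigl(\delta_{p,r}\delta_{q,s}+\delta_{r,-q}\delta_{s,-p}\bigr),\qquad \E[\hat h_{pq}\hat h_{rs}]=\frac{1}{N}\bigl(\delta_{r,q}\delta_{s,p}+\delta_{r,-p}\delta_{s,-q}\bigr).
\]
These two formulas simultaneously yield all three assertions of the lemma: setting $(r,s)=(p,q)$ gives $\E|\hat h_{pq}|^2=N^{-1}$ (away from the fixed points of the fourfold symmetry, where the two Kronecker deltas coincide); setting $(r,s)=(p,r)$ with $p\neq r$ in the second formula gives $\E\hat h_{pr}^2=0$; and the joint vanishing of both covariances whenever $(r,s)\notin\{(p,q),(q,p),(-p,-q),(-q,-p)\}$ together with joint Gaussianity yields the independence part.

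The only real subtlety is the fixed-point set of the fourfold symmetry (the four indices $p\in\{0,N/2\}$ with $q=-p$), where the two delta terms in the covariance formula reinforce rather than being disjoint; I would either absorb these $O(1)$ many entries into the bookkeeping or note that they do not affect the hypotheses of Theorem~\ref{thm:Main_Fourfold} (since only the bound \eqref{eq:finite_moments} and the stochasticity \eqref{eq:condition_s_xy} matter, and the exceptional entries can be treated as having the same variance $s_{pq}=N^{-1}$ up to a harmless rescaling). No step is genuinely difficult; the main care is simply tracking which pairs $(r,s)$ make the phase factors collapse to a nonzero value.
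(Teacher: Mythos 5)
Your proposal is correct in substance and follows essentially the same route as the paper: joint Gaussianity of $(\hat h_{pq})$ reduces independence to the vanishing of the two covariances, which are computed from $\E[h_{x_1y_1}h_{x_2y_2}]=N^{-1}(\delta_{x_1x_2}\delta_{y_1y_2}+\delta_{x_1y_2}\delta_{y_1x_2})$ and the geometric-sum identity, with the fourfold symmetry read off from the real symmetry of $H$; your two covariance formulas agree with the paper's computation. One correction to your closing remark: in the variance computation the two Kronecker terms reinforce precisely when $q=-p$, i.e.\ on the entire counterdiagonal (where one gets $\E\abs{\hat h_{p,-p}}^2=2N^{-1}$), not merely at the $O(1)$ fixed points $p\in\{0,N/2\}$ (and similarly $\E\hat h_{pr}^2\neq 0$ for the exceptional pair $p,r\in\{0,N/2\}$, $p\neq r$, when $N$ is even); the paper's own statement and proof pass over these exceptional indices silently, so your attempt is no less rigorous, but the exceptional set is larger than you claim and cannot be dismissed as finitely many entries.
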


\begin{proof}
To prove the if-part it suffices to show that $\hat{H}$ satisfies \eqref{eq:fourfold_symmetry} which is a direct consequence of the fact that $H$ is symmetric.

Since $\hat{h}_{pq}$ and $\hat{h}_{rs}$ are jointly normally distributed and $\E \hat{h}_{pq}=\E\hat{h}_{rs}=0$, it suffices to prove that 
$\E\hat{h}_{pq}\overline{\hat{h}_{rs}}=0$ and $\E\hat{h}_{pq}\hat{h}_{rs}=0$ in order to show that these random variables are independent.
The formula $\E h_{x_1y_1}h_{x_2y_2}=N^{-1}(\delta_{x_1x_2}\delta_{y_1y_2}+\delta_{x_1y_2}\delta_{y_1x_2})$ together with 
$$\sum_{x=1}^{N} \exp\left(-\i\frac{2\pi}{N}mx\right)=\begin{cases} N, & m=0,\\ 0, &\text{otherwise}\end{cases}$$
for $m \in \Z\slash N\Z$ yields
$\E \hat{h}_{pq}{\hat{h}_{rs}}=N^{-1}$ for $(p,q)\in \{(s,r),(-r,-s)\}$ and $\E \hat{h}_{pq}\hat{h}_{rs}= 0$ otherwise.
Thus, $\E \hat{h}_{pq} {\hat{h}_{rs}}\neq 0$ if and only if $(p,q)\in \{(s,r),(-r,-s)\}$. In particular, $\E\hat{h}_{pq}^2=0$ for $p \neq q$.

The relation $\overline{\hat{h}_{rs}} = {\hat{h}_{sr}}$ implies the first part of \eqref{eq:fourier_properties} and concludes the proof of 
the only-if part.
\end{proof}

Therefore, the Fourier transform of a Gaussian orthogonal ensemble fulfills all requirements of  
Theorem \ref{thm:Main_Fourfold}  with $s_{pq}\defeq N^{-1}$ and $\zeta_{pq}\defeq N^{-1/2}\hat{h}_{pq}$.
Because of the first result in \eqref{eq:fourier_properties} the condition \eqref{eq:condition_s_xy} is fulfilled.
By the second part of \eqref{eq:fourier_properties} Remark \ref{rem:smaller_spectral_domain} is applicable.
Thus, the local semicircle law holds true for these random matrices.

\section{Tools}
In this section, we collect the tools for the proof of Theorem \ref{thm:Main_Fourfold}. We start with listing some resolvent identities which are the basic tool for all our estimates
as they encode the dependences between diagonal and off-diagonal entries of the resolvents. Computing the partial expectation of certain terms in expansions of the resolvent entries 
with respect to a minor will be an important step to derive the self-consistent equations. Thus, we introduce some notation in the second subsection. We conclude with the fluctuation averaging, 
an important mechanism to improve some bounds, and some estimates on $m$ and $\Gamma$ which are frequently used in our proofs.

\subsection{Minors and Resolvent Identities}
Let $H=(h_{xy})_{x,y \in \Z\slash N\Z}$ be a Hermitian matrix and $\mathbb T \subset \Z\slash N\Z$.  

\begin{defi} \label{def:minors_()} 
We define the $N\times N$ matrix $H^{(\mathbb T)}$ and its \emph{resolvent} or \emph{Green function} $G^{(\mathbb T)}$ through
$$(H^{(\mathbb T)})_{ij} \defeq \mathbf 1(i \notin \mathbb T)\mathbf 1(j \notin \mathbb T) h_{ij}, \quad G^{(\mathbb T)}(z) \defeq (H^{(\mathbb T)}-z)^{-1}$$
for $i,j \in \Z\slash N\Z$ and for $z\in \C\backslash\R$. We denote the entries of $G^{(\mathbb T)}(z)$ by $G_{ij}^{(\mathbb T)}(z)$.
We set 
$$\sum_i^{(\mathbb T)} \defeq \sum_{i; i \notin \mathbb T}.$$
In both cases, we write $(a_1,\ldots,a_n,\mathbb T)$ for $(\{a_1,\ldots,a_n\}\cup \mathbb T)$.
\end{defi}

Note that $H^{(\mathbb T)}$ is still a Hermitian $N \times N$ matrix, in particular $G^{(\mathbb T)}$ exists.
To estimate the resolvent entries we make essential use of the following relations.

\begin{lem}[Resolvent Identities] \label{Lem:resolv_ident} 
For $i,j,k \notin \mathbb T$, the following statements hold:\\
\begin{equation}
\frac{1}{G^{(\mathbb T)}_{ii}}=h_{ii}-z-\sum_{a,b}^{(\mathbb T,i)}h_{ia}G_{ab}^{(\mathbb T,i)}h_{bi}.
\label{eq:Schur_formula}
\end{equation}
If $i,j \neq k$ then
\begin{equation}
G_{ij}^{(\mathbb T)} = G_{ij}^{(\mathbb T,k)}+\frac{G_{ik}^{(\mathbb T)}G_{kj}^{(\mathbb T)}}{G_{kk}^{(\mathbb T)}},\quad
\frac{1}{G_{ii}^{(\mathbb T)}}=\frac{1}{G_{ii}^{(\mathbb T,k)}} -\frac{G_{ik}^{(\mathbb T)}G_{ki}^{(\mathbb T)}}{G_{ii}^{(\mathbb T)}G_{ii}^{(\mathbb T,k)}G_{kk}^{(\mathbb T)}}.
\label{eq:resolvent_identity1}
\end{equation}
If $i \neq j$ then 
\begin{equation}
G_{ij}^{(\mathbb T)}=-G_{ii}^{(\mathbb T)}\sum_a^{(\mathbb T,i)} h_{ia}G_{aj}^{(\mathbb T,i)}=-G_{jj}^{(\mathbb T)}\sum_a^{(\mathbb T,j)}G_{ia}^{(\mathbb T,j)}h_{aj}.
\label{eq:resolvent_identity2}
\end{equation}
\end{lem}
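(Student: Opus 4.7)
The plan is to derive all three identities from the Schur complement formula for block matrix inversion; no probabilistic input is required. Fix $i \notin \mathbb{T}$ and view the restriction of $H^{(\mathbb{T})} - z$ to indices outside $\mathbb{T}$ as a $2\times 2$ block matrix with the $1\times 1$ block $(h_{ii}-z)$, off-diagonal row/column vector $(h_{ia})_{a\neq i,\,a\notin \mathbb{T}}$, and bottom-right block equal to the restriction of $H^{(\mathbb{T},i)}-z$ to indices outside $\mathbb{T}\cup\{i\}$. The standard block inversion formula identifies $G_{ii}^{(\mathbb{T})}$ with the reciprocal of the Schur complement $(h_{ii}-z) - \sum_{a,b}^{(\mathbb{T},i)} h_{ia} G_{ab}^{(\mathbb{T},i)} h_{bi}$, which is exactly \eqref{eq:Schur_formula}. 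Reading off the $(i,j)$ entry with $j\neq i$ of the same block inverse produces $-G_{ii}^{(\mathbb{T})}\sum_a^{(\mathbb{T},i)} h_{ia} G_{aj}^{(\mathbb{T},i)}$, i.e.\ the first half of \eqref{eq:resolvent_identity2}. The second half follows by pivoting the block decomposition on $j$ instead of $i$ and appealing to Hermitian symmetry.

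For the first identity in \eqref{eq:resolvent_identity1}, I would instead pivot on $k$: the bottom-right block of $(H^{(\mathbb{T})}-z)^{-1}$ equals $D^{-1} + D^{-1} C S^{-1} B D^{-1}$, where $D^{-1}$ restricts to $G^{(\mathbb{T},k)}$, $B,C$ are the $k$-row/column vectors and $S^{-1} = G_{kk}^{(\mathbb{T})}$. Reading off the $(i,j)$ entry with $i,j\neq k$ yields the leading term $G_{ij}^{(\mathbb{T},k)}$ plus a rank-one correction $(D^{-1} C)_i \,S^{-1}\,(B D^{-1})_j$. Recognizing the off-diagonal block formulas — which by the previous paragraph identify $-G_{kk}^{(\mathbb{T})}(D^{-1}C)_i$ with $G_{ik}^{(\mathbb{T})}$ and $-G_{kk}^{(\mathbb{T})}(BD^{-1})_j$ with $G_{kj}^{(\mathbb{T})}$ — rewrites this correction as $G_{ik}^{(\mathbb{T})} G_{kj}^{(\mathbb{T})}/G_{kk}^{(\mathbb{T})}$. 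The second equation in \eqref{eq:resolvent_identity1} is then purely algebraic: specialize $i=j$ in the first equation, rearrange $1/G_{ii}^{(\mathbb{T})} - 1/G_{ii}^{(\mathbb{T},k)} = (G_{ii}^{(\mathbb{T},k)} - G_{ii}^{(\mathbb{T})})/(G_{ii}^{(\mathbb{T})} G_{ii}^{(\mathbb{T},k)})$, and substitute.

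The only point requiring a little care is bookkeeping of index sets: since Definition \ref{def:minors_()} keeps $H^{(\mathbb{T})}$ as an $N\times N$ matrix with zeroed rows and columns on $\mathbb{T}$, one must check that the relevant blocks in each Schur-complement step really coincide with the natural minors on the complementary index set. This is immediate from the definition, so no genuine analytic obstacle arises; the whole proof is a short deterministic computation.
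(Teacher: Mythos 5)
Your proof is correct, and it is essentially the argument the paper itself delegates to the literature: the paper gives no proof of Lemma \ref{Lem:resolv_ident} but cites Lemma 4.2 of \cite{Bulk_Universality_Gen_Wigner_Matrices} and Lemma 6.10 of \cite{Spec_Stat}, whose proofs are exactly this Schur-complement block-inversion computation (including deducing the second identity in \eqref{eq:resolvent_identity1} algebraically from the first, as you do). One small remark: the second equality in \eqref{eq:resolvent_identity2} already drops out of the pivot-on-$j$ decomposition by reading off the lower-left block $-D^{-1}CS^{-1}$, so no appeal to Hermitian symmetry is needed (and if you do invoke it, note it relates $G(z)$ to $G(\bar z)$, so a conjugation in $z$ must be tracked).
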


The proof of Schur's complement formula, \eqref{eq:Schur_formula}, and the first identity in \eqref{eq:resolvent_identity1} can be found in Lemma 4.2 in \cite{Bulk_Universality_Gen_Wigner_Matrices}
and the second identity follows directly from the first one. Lemma 6.10 in \cite{Spec_Stat} contains a proof of \eqref{eq:resolvent_identity2}.

Moreover, if $\eta>0$ then the spectral theorem for self-adjoint matrices yields
\begin{equation}
\sum_l \abs{G_{kl}^{(\mathbb T)} (z)}^2 = \frac 1 \eta \Im G_{kk}^{(\mathbb T)} (z).
\label{eq:Ward_identity}
\end{equation}
This identity is sometimes called \emph{Ward identity}. 

The functional calculus implies the following estimates on the entries of the resolvent:
\begin{equation}
\abs{G_{ij}^{(\mathbb T)}(z)} \leq \eta^{-1} \leq M
\label{eq:deterministic_bound_G_ij^T}
\end{equation}
for $\eta>0$ and all $i,j\in \Z\slash N\Z$. The second estimate holds if $z\in \mathbf D$ where $\mathbf D$ is a spectral domain.

\subsection{Partial Expectation}

For the partial expectation with respect to the $\sigma$-algebra generated by $H^{(x,-x)}$, we introduce the following notation.
\begin{defi}[Partial Expectation] \label{def:partial_expectation}
Let $X$ be an integrable random variable. For $x \in \Z\slash N\Z$ we define the random variables $\E_x X$ and $\F_x X$ through
$$ \E_x X \defeq \E[X|H^{(x,-x)}], \quad \F_x X \defeq X -\E_x X.$$
The random variable $\E_x X$ is called the \emph{partial expectation} of $X$ with respect to $x$. 
\end{defi}

The symbols $\E_x$ and $\F_x$ are the analogues of $P_i$ and $Q_i$ in \cite{EJP2473} that were defined by considering the minor $H^{(i)}$. 
Due to the fourfold symmetry column $x$, $-x$ and row $x$, $-x$ contain the same information, so the conditional expectation is taken with 
respect to the minor $H^{(x,-x)}$. Notice that it may happen that $x=-x$, in which case $H^{(x,-x)}$ is an $(N-1)\times(N-1)$ minor. 

\begin{defi}[Independence] \label{def:independent}
We say that the integrable random variable $X$ is \emph{independent} of $\mathbb T\subset \Z\slash N\Z$ if $X=\E_x X$ for 
all $x \in \mathbb T$.
\end{defi}

\noindent If $Y$ is independent of $x$ then $\F_x(X)Y = XY - \E_x(X\E_x Y)=\F_x(XY)$ and therefore 
\begin{equation}
\E \F_x(X)Y = \E \F_x (XY) = \E (XY) -\E \E_x (XY)=0.
\label{eq:independent_expectation}
\end{equation}

\subsection{Fluctuation Averaging}

Let $\mathbf D$ be a spectral domain, $H$ satisfy the requirements of Theorem \ref{thm:Main_Fourfold} and $\Psi$ a deterministic (possibly $z$-dependent) control parameter which satisfies
\begin{equation}
M^{-1/2} \leq \Psi \leq M^{-c}
\label{eq:def_determ_con_para}
\end{equation}
for all $z \in \mathbf D$ and for some $c>0$. 

The aim of the fluctuation averaging is to estimate linear combinations of the form $\sum_k t_{ik}X_k$
with special random variables $X_k$ and a family of complex weights $T=(t_{ik})$ that satisfy 
\begin{equation}
0 \leq \abs{t_{ik}}\leq M^{-1}, \quad \sum_k \abs{t_{ik}}\leq 1.
\label{eq:def_weight_averaging}
\end{equation}
Note that the family $T$ may be $N$-dependent.
Examples of such weights are given by $t_{ik}=s_{ik}=\E \abs{h_{ik}}^2$, $t_{ik}=N^{-1}$ or $t_{ik}=r_{ik}=\E h_{ik}^2$. 
Recall that $\Lambda(z) = \max_{x,y} \abs{G_{xy}(z) - \delta_{xy} m(z)}$ which is the basic quantity we want to estimate (cf. \eqref{eq:definition_Lambda}). 

\begin{thm}[Fluctuation Averaging] \label{thm:fluct_averag1}
Let $\mathbf D$ be a spectral domain, $\Psi$ a deterministic control parameter satisfying \eqref{eq:def_determ_con_para} and $T=(t_{ik})$ a 
weight satisfying \eqref{eq:def_weight_averaging}. If $\Lambda \prec \Psi$ then
\begin{equation}
\left|\sum_k t_{ik} \mathbb F_k \frac{1}{G_{kk}}\right| \prec \Psi^2, \quad \left|\sum_k t_{ik} \mathbb F_k G_{kk}\right| \prec \Psi^2, \quad \left|\sum_{k\neq -k} t_{ik} \mathbb F_k G_{k,-k}\right| \prec \Psi^2
\label{eq:fluct_averag1}
\end{equation}
uniformly in $i$ and $z \in \mathbf D$. If $\Lambda \prec \Psi$ and $T$ commutes with $S$ then we have
\begin{equation}
\left|\sum_k  t_{ik} (G_{kk}-m) \right| \prec \Gamma_S \Psi^2
\label{eq:fluct_averag2}
\end{equation}
uniformly in $i$ and $z \in \mathbf D$. If $\Lambda \prec \Psi$ and $T$ commutes with $R$ then we have
\begin{equation}
\left|\sum_{k\neq -k}  t_{ik} G_{k,-k} \right| \prec \Gamma_R \Psi^2
\label{eq:fluct_averag2_G_k-k}
\end{equation}
uniformly in $i$ and $z \in \mathbf D$.
\end{thm}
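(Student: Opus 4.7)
My plan is to establish the three bounds in \eqref{eq:fluct_averag1} first via the standard high-moment method adapted to the fourfold symmetry, and then deduce \eqref{eq:fluct_averag2} and \eqref{eq:fluct_averag2_G_k-k} as corollaries using the self-consistent equations for $G_{kk}-m$ and $G_{k,-k}$ together with the commutativity hypothesis on $T$.

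For the first three bounds, I would compute $\E\bigl|\sum_k t_{ik}\F_k X_k\bigr|^{2p}$ for arbitrary $p\in\N$, where $X_k$ stands for $1/G_{kk}$, $G_{kk}$, or (in the third case) $G_{k,-k}$. Expanding the power yields a sum over $2p$-tuples $(k_1,\dots,k_{2p})$ of factors $\prod_\ell t_{ik_\ell}\,\E\prod_\ell (\F_{k_\ell} X_{k_\ell})^{\sigma_\ell}$ with $\sigma_\ell\in\{1,*\}$. The key observation, which generalises the $P_i/Q_i$ scheme of \cite{EJP2473}, is that $\F_k X = X - \E[X|H^{(k,-k)}]$, so the $\sigma$-algebras indexed by pairs $\{k,-k\}$ replace the singleton ones used there. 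Whenever an index $k_\ell$ appears only once in the product, a successive application of \eqref{eq:independent_expectation} with $\E_{k_\ell}$ kills the whole term, because the other factors can be made independent of $\{k_\ell,-k_\ell\}$ by expanding them via \eqref{eq:resolvent_identity1} in the minor $H^{(k_\ell,-k_\ell)}$. This forces every surviving $k_\ell$ to appear at least twice, which via the weight bound $|t_{ik}|\le M^{-1}$ in \eqref{eq:def_weight_averaging} and the Ward identity \eqref{eq:Ward_identity} produces the gain of $\Psi^2$ over the naive bound $\Psi$ on each term.

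For \eqref{eq:fluct_averag2}, I would use the Schur complement identity \eqref{eq:Schur_formula} together with the resolvent expansions \eqref{eq:resolvent_identity1} to write
\[
\frac{1}{G_{kk}} = -z - m - \sum_a s_{ka}(G_{aa}-m) + Z_k,
\]
where $Z_k$ collects the fluctuating remainders. Applying the three bounds of \eqref{eq:fluct_averag1} (now already proved) to the pieces of $Z_k$ yields $|\sum_k t_{ik}Z_k|\prec \Psi^2$. Combining this with the defining equation $m+(m+z)^{-1}=0$, expanding $1/G_{kk}$ around $1/m$ (valid because $\Lambda\prec\Psi\le M^{-c}$), and contracting with $T$, one obtains a vector identity of the form $(1-m^2 S)v = O_\prec(\Psi^2)$ with $v_i = \sum_k t_{ik}(G_{kk}-m)$. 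The commutation $[T,S]=0$ lets $(1-m^2 S)^{-1}$ be applied directly and gives $|v_i|\prec \Gamma_S \Psi^2$. The argument for \eqref{eq:fluct_averag2_G_k-k} is parallel: iterating \eqref{eq:resolvent_identity2} expresses $G_{k,-k}$ as $-G_{kk} G_{-k,-k} \sum_{a,b} h_{ka} G^{(k,-k)}_{ab} h_{b,-k}$ and its partial expectation with respect to $\{k,-k\}$ produces the coupling $\E h_{ka} h_{-k,b} = \E h_{ka}\bar h_{b,k} = $ a sum involving $r_{ka}$ via the fourfold symmetry, so that $R=(r_{xy})$ now plays the role of $S$, and the same inversion yields the $\Gamma_R$ prefactor.

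The main obstacle will be the combinatorial bookkeeping of the high-moment step. Because $\F_k$ conditions on the minor $H^{(k,-k)}$ rather than $H^{(k)}$, the indices $k$ and $-k$ play essentially the same role, and the equivalence-class argument organising surviving indices has to be refined: classes must be indexed by pairs $\{k,-k\}$ and the extra correlations $\E h_{ka}^2=r_{ka}$ coming from the symmetry must be tracked alongside the variances $s_{ka}$. A further subtlety is the new third bound in \eqref{eq:fluct_averag1}: since $G_{k,-k}$ has only $\Psi$-size fluctuations but is not a diagonal entry, its Schur-type expansion needs both minors $k$ and $-k$ to be removed simultaneously, and one must check that the ensuing matrix $G^{(k,-k)}$ still satisfies the deterministic bound \eqref{eq:deterministic_bound_G_ij^T} and the Ward identity \eqref{eq:Ward_identity}, a step which represents the genuinely new technical input beyond \cite{EJP2473}.
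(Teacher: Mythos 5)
Your proposal follows essentially the same route as the paper: the three averaged bounds in \eqref{eq:fluct_averag1} are proved by the high-moment partial-expectation method of \cite{EJP2473}, adapted exactly as you describe by conditioning on the minors $H^{(k,-k)}$ and organising indices into classes $\{k,-k\}$ (with $r_{ka}=\E h_{ka}^2$ tracked alongside $s_{ka}$), while \eqref{eq:fluct_averag2} and \eqref{eq:fluct_averag2_G_k-k} are deduced from the self-consistent relations $\E_x(G_{xx}-m)=m^2\sum_a s_{xa}(G_{aa}-m)+O_\prec(\Psi^2)$ and $\E_x G_{x,-x}=m^2\sum_{a\neq-a} r_{xa}G_{a,-a}+O_\prec(\Psi^2)$, the commutativity of $T$ with $S$ resp.\ $R$, and inversion of $1-m^2S$ resp.\ $1-m^2R$. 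The only cosmetic difference is that the paper routes the first bound through the stronger Theorem \ref{thm:fluct_averag2} and obtains the second and third bounds by rerunning that proof with the simpler inputs $\abs{\F_x G_{xx}^{(\mathbb T)}}\prec\Psi$ and $\abs{\F_x G_{x,-x}^{(\mathbb T)}}\prec\Psi$, which is what your plan amounts to in substance.
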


A similar result was proved in \cite{EJP2473}, but 
due to the fourfold symmetry we need the third estimate in \eqref{eq:fluct_averag1} and \eqref{eq:fluct_averag2_G_k-k} which were not present there.
For the first estimate in \eqref{eq:fluct_averag1}, there is the following stronger bound assuming that there is a stronger apriori bound on the off-diagonal terms,
i.e., on $\Lambda_o(z)=\max_{x\neq y}\abs{G_{xy}(z)}$ (cf. \eqref{eq:definition_Lambda}):
\begin{thm} \label{thm:fluct_averag2}
Let $\mathbf D$ be a spectral domain, $\Psi$ and $\Psi_o$ deterministic control parameters satisfying \eqref{eq:def_determ_con_para} and $T=(t_{ik})$ a 
weight satisfying \eqref{eq:def_weight_averaging}. If $\Lambda \prec \Psi$ and $\Lambda_o \prec \Psi_o$ then
\begin{equation}
\left|\sum_k t_{ik} \mathbb F_k \frac{1}{G_{kk}}\right| \prec \Psi_o^2
\label{eq:fluct_averag3}
\end{equation}
uniformly in $i$ and $z \in \mathbf D$.
\end{thm}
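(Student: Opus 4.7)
The plan is to rerun the high-moment argument underlying Theorem \ref{thm:fluct_averag1} (detailed in the appendix) but with a sharper pointwise input, namely
\[
\bigl|\F_k\tfrac{1}{G_{kk}}\bigr|\prec\Psi_o,
\]
uniformly in $k$, rather than the $\prec\Psi$ bound that would follow from $\Lambda\prec\Psi$ alone. This is the only place where the additional hypothesis $\Lambda_o\prec\Psi_o$ enters; once it is available, the combinatorial cancellation proved in the appendix transfers verbatim.

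\emph{Step 1 (pointwise estimate).} Apply Schur's complement formula with the $\{k,-k\}$-block removed, writing $1/G_{kk}$ as the corresponding $2\times2$-ratio $M_{kk}-M_{k,-k}M_{-k,k}/M_{-k,-k}$, where
\[
M_{pq}=\delta_{pq}(h_{pp}-z)+(1-\delta_{pq})h_{pq}-\sum_{a,b}^{(k,-k)}h_{pa}G_{ab}^{(k,-k)}h_{bq},\qquad p,q\in\{k,-k\}.
\]
Applying $\F_k$ kills the conditional means: for $p=q$ this is $-\delta_{pq}z-\sum_a s_{pa}G_{aa}^{(k,-k)}$, while the fourfold coupling $h_{-a,-k}=h_{ka}$ produces, for $p=k$, $q=-k$, the extra counter-diagonal mean $-\sum_a r_{ka}G_{a,-a}^{(k,-k)}$. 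What remains is a quadratic form in the independent entries of row $k$ whose kernel is supported on the off-diagonal $G_{ab}^{(k,-k)}$ ($a\neq b$), the counter-diagonal $G_{a,-a}^{(k,-k)}$ ($a\neq -a$), and the diagonal fluctuations $(|h_{ka}|^2-s_{ka})G_{aa}^{(k,-k)}$. A standard large-deviation estimate for such quadratic forms combined with the Ward identity \eqref{eq:Ward_identity} and $\Lambda_o\prec\Psi_o$ bounds each contribution by $\Psi_o$; assembling the $2\times2$-ratio then yields $|\F_k(1/G_{kk})|\prec\Psi_o$.

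\emph{Step 2 (high-moment expansion).} Setting $X_k\defeq\F_k(1/G_{kk})$ and fixing $p\in\N$, expand
\[
\E\Bigl|\sum_k t_{ik}X_k\Bigr|^{2p}=\sum_{\mathbf k}\Bigl(\prod_{j\leq p}t_{ik_j}\prod_{j>p}\bar t_{ik_j}\Bigr)\E\Bigl[\prod_{j\leq p}X_{k_j}\prod_{j>p}\bar X_{k_j}\Bigr]
\]
and partition the $2p$-tuples $\mathbf k=(k_1,\dots,k_{2p})$ by their coincidence pattern. The key cancellation, identical to the one in the appendix---iteratively applying \eqref{eq:resolvent_identity1} to expose the $k_j$-dependence of the remaining factors and then invoking \eqref{eq:independent_expectation}---shows that any tuple in which some $k_j$ appears only once contributes negligibly, so that surviving tuples have at most $p$ distinct indices. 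Using \eqref{eq:def_weight_averaging} one pairs them up to produce $\prod_j|t_{ik_j}|\leq M^{-p}$ while $\sum_k|t_{ik}|\leq1$ sums freely over the distinct indices; combining with $|X_k|\prec\Psi_o$ from Step~1 (and $\Lambda\prec\Psi$ to control the resolvent-expansion remainder terms) gives
\[
\E\Bigl|\sum_k t_{ik}X_k\Bigr|^{2p}\leq C_p N^{c_p\eps}\Psi_o^{2p}
\]
for any $\eps>0$, and Markov's inequality together with the arbitrariness of $p$ yields $|\sum_k t_{ik}\F_k(1/G_{kk})|\prec\Psi_o^2$.

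\emph{Main obstacle.} The delicate point is Step~1 under the fourfold symmetry. Since $\E_k$ conditions on $H^{(k,-k)}$ rather than on $H^{(k)}$, the extra counter-diagonal mean $\sum_a r_{ka}G_{a,-a}^{(k,-k)}$ must be recognised and cancelled, and the surviving off-diagonal and counter-diagonal contributions must simultaneously lie at the sharper $\Psi_o$-scale. This is precisely why the hypothesis carries $\Lambda_o=\max\{\Lambda_g,\Lambda_-\}\prec\Psi_o$: without a sharper bound on $G_{x,-x}$, the large-deviation argument in Step~1 cannot be closed at scale $\Psi_o$.
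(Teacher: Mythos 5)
Your Step 1 is essentially the paper's Lemma \ref{lem:prep_fluct_3} in disguise: the paper proves $|\F_x(1/G_{xx}^{(\mathbb T)})|\prec\Psi_o$ by a rank-one Schur complement followed by one further expansion of the $-x$ row, while your $2\times2$ block Schur complement reaches the same pointwise bound, and this is indeed where the hypothesis $\Lambda_o\prec\Psi_o$ enters. (Note, though, that the expansion machinery needs this bound uniformly for the minors $G_{kk}^{(\mathbb T)}$ with $k\notin\mathbb T\cup(-\mathbb T)$ and $|\mathbb T|$ bounded, not only for $\mathbb T=\emptyset$.) The genuine gap is in Step 2. The claim that ``any tuple in which some $k_j$ appears only once contributes negligibly'' by invoking \eqref{eq:independent_expectation} is precisely where the fourfold symmetry requires a new argument and cannot be transferred verbatim: since $\F_k$ is the fluctuation with respect to $H^{(k,-k)}$, the cancellation $\E\,\F_{k_s}(X)Y=0$ requires $Y$ to be measurable with respect to $H^{(k_s,-k_s)}$, and this fails whenever $-k_s$ occurs among the other labels --- for a tuple containing both $k$ and $-k$, the factor $\F_{-k}(1/G_{-k,-k})$ can never be made independent of $k$ by resolvent expansion. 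For this reason the paper identifies labels with $k_r=\pm k_s$, calls an entry maximally expanded only when $\mathbf k_L\cup(-\mathbf k_L)$ lies in the upper index set, and reproves the key assertion $(*)$ with ``lower index $k_s$ or $-k_s$''. Your partition by plain equality misclassifies the $\pm$-paired tuples (the claimed cancellation is unavailable for them; they must instead be absorbed by the weight counting), and the cancellation for genuinely lone labels is asserted rather than adapted to the $H^{(k,-k)}$-conditioning.

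The quantitative bookkeeping also does not close. From $\E\bigl|\sum_k t_{ik}X_k\bigr|^{2p}\leq C_pN^{c_p\eps}\Psi_o^{2p}$, Markov's inequality only yields $\bigl|\sum_k t_{ik}X_k\bigr|\prec\Psi_o$, not $\Psi_o^2$; you need the $2p$-th moment to be of size $N^{O(\eps)}\Psi_o^{4p}$. The correct accounting is: for tuples in which every ($\pm$-)coincidence class has size at least two, the weight sum is at most $M^{-(2p-d)}\leq M^{-p}\leq\Psi_o^{2p}$ (with $d\leq p$ the number of classes), which combined with $|X_k|\prec\Psi_o$ gives $\Psi_o^{4p}$; for tuples with lone labels the weight sum can be of order one, so one must show that each lone label produces an extra factor $\Psi_o$ in the expectation --- this is the whole content of the modified assertion $(*)$ and the maximal-expansion algorithm. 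Your final display drops the factor $M^{-p}$ and never quantifies the gain coming from lone labels, so the stated moment bound is both too weak for the conclusion and not actually derived.
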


The proof of Theorem \ref{thm:fluct_averag1} and \ref{thm:fluct_averag2} can be found in section \ref{sec:fluct_averag}.

\subsection{Estimates on $m$ and $\Gamma$}

For convenience, we list some elementary estimates from \cite{EJP2473} which are often used in the following proofs.

\begin{lem} There is a constant $c>0$ such that for $z \in \{E+\text i \eta ; E \in [-10,10], \eta \in (0,10]\}$ we have  
\begin{equation}
	c \leq \abs{m(z)}, \quad \abs{m(z)} \leq ~ 1-c\eta,\quad \abs{m(z)} \leq ~ \eta^{-1}, \quad \Im m(z) \geq ~ c\eta. \label{eq:estimate_m}
\end{equation}
\end{lem}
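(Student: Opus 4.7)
The four estimates all follow from the Stieltjes-transform identity $m(m+z) = -1$ (equivalent to \eqref{eq:semicircle2}) together with the Poisson-integral representation
\[
m(z) = \int \frac{\rho(x)}{x-z}\,\di x, \qquad \Im m(z) = \int \frac{\eta\,\rho(x)}{|x-z|^2}\,\di x.
\]
I would proceed through the four bounds in the order: third, fourth, first, second, because each one relies on the previous.

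The third bound $|m(z)| \leq \eta^{-1}$ is immediate from $|x-z| \geq \eta$ and $\int \rho = 1$. The fourth bound $\Im m(z) \geq c\eta$ follows by restricting the integral defining $\Im m$ to $x \in [-1,1]$, on which $\rho(x) \geq \sqrt 3/(2\pi)$ and $|x-z|^2 \leq (1+|E|)^2 + \eta^2 \leq 221$, giving a constant multiple of $\eta$. For the first bound, rewrite \eqref{eq:semicircle2} as $|m(z)|\cdot|m(z)+z| = 1$; combined with $|m+z| \leq |m|+|z| \leq |m|+10\sqrt 2$ this yields a quadratic inequality $|m|^2 + 10\sqrt 2\,|m| \geq 1$, which forces $|m| \geq c$ for a concrete positive constant.

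The most delicate estimate is the second, $|m(z)| \leq 1 - c\eta$, and this is where I expect to spend the most effort. The key observation is that \eqref{eq:semicircle2} can be rewritten as $m + z = -\bar m/|m|^2$; taking the imaginary part gives
\[
\Im m + \eta = \frac{\Im m}{|m|^2}, \qquad \text{hence} \qquad 1 - |m(z)|^2 = \frac{\eta\,|m(z)|^2}{\Im m(z)}.
\]
To convert this into a linear-in-$\eta$ decrease of $|m|$ below $1$, I need an \emph{upper} bound on $\Im m$ complementing the lower bound already established. This follows by writing $\Im m$ as $\pi$ times the convolution of $\rho$ with the Poisson kernel $P_\eta$: since $\sup \rho = 1/\pi$ and $\int P_\eta = 1$, one obtains $\Im m(z) \leq 1$ on all of $\{\eta>0\}$. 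Plugging in $\Im m \leq 1$ and the already-proved lower bound $|m| \geq c$ gives $1 - |m|^2 \geq c^2\eta$, so $|m| \leq \sqrt{1-c^2\eta} \leq 1 - c^2\eta/2$ via the elementary inequality $\sqrt{1-x} \leq 1 - x/2$ for $x \in [0,1]$ (which holds on our domain since $c^2\eta \leq 10 c^2 < 1$ for $c$ small enough).

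The main obstacle is the second estimate: it is the only one not following directly from either the integral representation or the fixed-point equation in a single step. The step that makes it work is the realization that one needs both a lower bound on $|m|$ \emph{and} an upper bound on $\Im m$, and that the latter comes essentially for free from the Poisson-kernel boundedness of $\rho$. Once assembled, all four estimates hold with the same universal constant $c$ after possibly shrinking it, and this is what is used throughout the rest of the paper.
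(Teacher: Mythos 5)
Your proof is correct: all four bounds are verified, the order (third, fourth, first, second) respects the logical dependencies, and the key identity $1-|m|^2=\eta|m|^2/\Im m$ together with the Poisson-kernel bound $\Im m\leq\pi\,(\sup\rho)=1$ and the already-established lower bound $|m|\geq c$ does give $|m|\leq 1-c'\eta$ after adjusting the constant. Note, however, that the paper itself offers no proof of this lemma: it is quoted as a collection of elementary estimates from \cite{EJP2473}, where they are obtained from the explicit closed form $m(z)=\tfrac12\bigl(-z+\sqrt{z^2-4}\bigr)$ of the Stieltjes transform by direct computation. Your route through the integral representation $m(z)=\int\rho(x)(x-z)^{-1}\,\di x$ and the fixed-point equation \eqref{eq:semicircle2} avoids the square root and its branch issues entirely, and the argument for $|m|\leq 1-c\eta$ (lower bound on $|m|$ plus upper bound on $\Im m$ fed into the exact identity) is arguably more robust, since it would survive for Stieltjes transforms of other bounded, compactly supported densities satisfying a quadratic self-consistent equation; the explicit-formula approach is shorter but tied to the semicircle law specifically. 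Two cosmetic points: the first display in your plan should make explicit that $\sup\rho=\rho(0)=1/\pi$ (which you use for $\Im m\leq 1$), and in the final step one should record that the single constant $c$ in the lemma is obtained by taking the minimum of the constants produced in the four estimates, as you indicate.
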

Since $\Gamma\geq \Gamma_S$ it suffices to prove the following lower bounds on $\Gamma$ for $\Gamma_S$.

\begin{lem} \label{lem:estimate_Gamma} There is a constant $c>0$ such that 
\begin{equation}
	 c \leq \Gamma(z), \quad \abs{1-m^2(z)}^{-1} \leq \Gamma(z)
\label{eq:estimate_Gamma} 
\end{equation}
for all $z \in \{E+\text i \eta ; E \in [-10,10], \eta \in (0,10]\}$. 
\end{lem}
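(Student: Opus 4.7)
The plan is to reduce everything to $\Gamma_S$ (which is legitimate since $\Gamma = \max\{\Gamma_S,\Gamma_R\} \geq \Gamma_S$), and then extract both bounds from a single test vector, namely the constant vector $\mathbf{1} = (1,\ldots,1)^T$.

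First I would use the stochasticity of $S$ (assumption \eqref{eq:condition_s_xy}): this says $S\mathbf{1} = \mathbf{1}$, so $\mathbf{1}$ is an eigenvector of $m^2(z) S$ with eigenvalue $m^2(z)$, and consequently
\[
(1 - m^2(z) S)^{-1} \mathbf{1} \;=\; \frac{1}{1-m^2(z)}\, \mathbf{1}.
\]
Taking $\ell^\infty \to \ell^\infty$ operator norms and noting $\norm{\mathbf{1}}_{\ell^\infty} = 1$ immediately gives the lower bound
\[
\Gamma_S(z) \;\geq\; \abs{1 - m^2(z)}^{-1},
\]
which is the second inequality in \eqref{eq:estimate_Gamma}.

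For the first inequality, I would combine this with the elementary bound $\abs{m(z)} \leq 1 - c\eta \leq 1$ from \eqref{eq:estimate_m}. This yields $\abs{m^2(z)} \leq 1$ and hence $\abs{1 - m^2(z)} \leq 1 + \abs{m(z)}^2 \leq 2$, so $\abs{1-m^2(z)}^{-1} \geq 1/2$. Together with the previous step this gives $\Gamma(z) \geq \Gamma_S(z) \geq 1/2$, proving the first estimate with $c = 1/2$.

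There is no real obstacle here: the whole statement is essentially the observation that $\mathbf{1}$ is a common eigenvector of $S$ (by stochasticity) realizing the worst possible resolvent bound on it, together with the a priori bound $\abs{m(z)}\leq 1$. The fourfold symmetry and the matrix $R$ play no role in the argument, which is why the reduction $\Gamma \geq \Gamma_S$ at the start suffices.
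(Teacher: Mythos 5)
Your proof is correct and is essentially the standard argument that the paper cites from \cite{EJP2473} without reproducing: the paper itself only notes that the reduction $\Gamma \geq \Gamma_S$ makes it enough to prove the bounds for $\Gamma_S$, exactly as you begin. Your use of the stochasticity $S\mathbf{1}=\mathbf{1}$ to compute $(1-m^2S)^{-1}\mathbf{1}=(1-m^2)^{-1}\mathbf{1}$, combined with $\lvert m\rvert\le 1$ from \eqref{eq:estimate_m} to conclude $\lvert 1-m^2\rvert\le 2$, is precisely the intended elementary argument, and your remark that $R$ and the fourfold symmetry play no role here is accurate.
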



\begin{rem} \label{rem:estimate_Gamma_R}
Since $\norm{R}_{\ell^\infty\to\ell^\infty} \leq 1$ the proof of Proposition A.2 in \cite{EJP2473} yields that
\[\Gamma_R(z) \leq \frac{C\log N}{1-\max_{\pm} \left|\frac{1\pm m^2}{2}\right|} \leq \frac{C\log N}{\min\{\eta+E^2,\theta\}}\]
for $z\in\{E+\i \eta; -10 \leq E\leq 10, M^{-1}\leq \eta  \leq 10\}$ with 
\[\theta \equiv \theta(z) \defeq \begin{cases} \kappa + \frac{\eta}{\sqrt{\kappa+\eta}}, & \text{if } \abs E \leq 2,\\
	\sqrt{\kappa+\eta}, & \text{if } \abs E >2,\end{cases}\]
and $\kappa \defeq \abs{\abs{E}-2}$. 
\end{rem}

\section{Proof of the Main Result}

This section contains the proof of our main result, Theorem \ref{thm:Main_Fourfold}. 
First, we establish the two self-consistent equations which will be the basis of all our estimates. In section 5.2, we bound the error terms in these 
self-consistent equations so that we can use them to prove a preliminary bound on the central quantity $\Lambda$ (cf. \eqref{eq:definition_Lambda}) in section 5.3. Finally, 
we complete the proof of Theorem \ref{thm:Main_Fourfold} in section 5.4 by iteratively improving the preliminary bound from the previous section.

\subsection{Self-consistent Equations}

The goal of this section is to establish the two self-consistent equations for the difference $G_{xx}-m$ and for the off-diagonal terms $G_{x,-x}$.
As the matrices are indexed by elements in $\Z\slash N\Z$ it might happen that $x = -x$ for $x \in \Z\slash N\Z$, more precisely we have $0= -0$ in 
$\Z\slash N\Z$ and moreover if $N$ is even $N/2=-N/2$. Since the expansion of the diagonal term $G_{xx}$ by means of the resolvent 
identities is a bit different for $x=-x$ and in this cases the entry $G_{x,-x}$ is in fact a diagonal term we have to distinguish the two cases, $x\neq -x$ and 
$x = -x$, in the sequel. 

Recall for the following lemma that $s_{xa}= \E \abs{h_{xa}^2}$ and $r_{xa}=\E h_{xa}^2$. 

\begin{lem} \label{lem:self_consistent_eq}
For $v_x \defeq G_{xx}-m$ we have the self-consistent equation 
\begin{equation}
-\sum_{a}s_{xa}v_a+\Upsilon_x=\frac{1}{v_x+m}-\frac 1 m
\label{eq:self_const_2}
\end{equation}
with the error term 
\[ \Upsilon_x = \begin{cases} h_{xx} + A_x - Z_x, & x = -x, \\h_{xx} +A_x +B_x -C_x -Y_x -Z_x, & x \neq -x, \end{cases}\]
 and the abbreviations
\begin{align}
A_x&\defeq \sum_{a}s_{xa}\frac{G_{ax}G_{xa}}{G_{xx}},  \hspace*{4cm} 
 B_x  \defeq \sum_{a}^{(x,-x)}s_{xa}\frac{G_{a,-x}^{(x)}G_{-x,a}^{(x)}}{G_{-x,-x}^{(x)}}, 
\label{eq:definition_A_x_B_x}\\ 
C_x&\defeq \left(|h_{x,-x}|^2-s_{-x,x}\right)G_{-x,-x}^{(x)}+h_{-x,x}\sum_{a}^{(x,-x)}h_{xa}G_{a,-x}^{(x)}+h_{x,-x}\sum_{b}^{(x,-x)}G_{-x,b}^{(x)}h_{bx},
\label{eq:definition_C_x} \\
Y_x&\defeq\left(G_{-x,-x}^{(x)}\right)^{-1}\sum_{a,b}^{(x,-x)}h_{xa}G_{a,-x}^{(x)}G_{-x,b}^{(x)}h_{bx},\label{eq:definition_Y_x} 
\qquad  
Z_x\defeq \begin{cases} 
\sum_{a,b}^{(x)} \F_x \left[h_{xa}G_{ab}^{(x)}h_{bx} \right], & x = -x, \\ 
\sum_{a,b}^{(x,-x)}\mathbb F_x\left[h_{xa}G_{ab}^{(x,-x)}h_{bx} \right] , & x\neq -x .
\end{cases} 
\end{align}
The self-consistent equation for $G_{x,-x}$ is given by 
\begin{equation}
G_{x,-x}=m^2\sum_{a\neq -a}r_{xa} G_{a,-a}+\e_x,
\label{eq:self_averaging_G_x-x}
\end{equation}
for $x \neq -x$ where we defined $\e_x\defeq \e^1_x +\e^2_x-\e^3_x-\e^4_x$ with the error terms 
\begin{align*}
\e^1_x \defeq & -m^2\sum_{a\in \{x,-x\}} r_{xa}G_{a,-a} +  m^2\sum_{a=-a} r_{xa}G_{aa}+ \left(G_{xx}G_{-x,-x}^{(x)}-m^2\right)\sum_a^{(x,-x)}r_{xa}G_{a,-a} - G_{xx}G_{-x,-x}^{(x)}h_{x,-x} , \\
\e^2_x \defeq  & G_{xx}G_{-x,-x}^{(x)}\sum_a^{(x,-x)} \F_x \left[h_{xa}G_{ab}^{(x,-x)} h_{b,-x}\right], 
\\ \e^3_x \defeq &  G_{-x,-x}^{(x)}\sum_a^{(x,-x)}r_{xa}G_{ax}G_{x,-a}, \qquad \qquad \e^4_x \defeq G_{xx}\sum_a^{(x,-x)}r_{xa}G_{a,-x}^{(x)}G_{-x,-a}^{(x)}.
\end{align*}
\end{lem}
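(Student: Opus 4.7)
The plan is to derive both identities from Schur's complement formula (\ref{eq:Schur_formula}) and repeated use of the resolvent identities in Lemma~\ref{Lem:resolv_ident}; the fourfold symmetry enters only through second-moment computations for the entries $h_{xa}$.

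For the first equation, by (\ref{eq:semicircle2}) and $\sum_a s_{xa}=1$, identity (\ref{eq:self_const_2}) is equivalent to
\[ \Upsilon_x \;=\; \frac{1}{G_{xx}} + z + \sum_a s_{xa}\, G_{aa},\]
so the task is to massage Schur's formula into this form. When $x=-x$, I would decompose the Schur sum into $\E_x+\F_x$. Using $\E[h_{xa}h_{bx}]=\delta_{ab}s_{xa}$ for $a,b\neq x$, the partial expectation becomes $\sum_a^{(x)} s_{xa} G_{aa}^{(x)}$ and the fluctuation is $Z_x$; one application of (\ref{eq:resolvent_identity1}) substitutes $G_{aa}^{(x)}=G_{aa}-G_{ax}G_{xa}/G_{xx}$ and restoring the $a=x$ contribution yields $\sum_a s_{xa}G_{aa}-A_x$. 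When $x\neq -x$ the same scheme runs with an extra layer. The fourfold tie $h_{xa}=h_{-a,-x}$ means rows $x$ and $-x$ carry the same randomness, so the appropriate conditioning is on $H^{(x,-x)}$ rather than $H^{(x)}$. I would first peel off the terms of the Schur sum with $a=-x$ or $b=-x$, which collapse into $C_x+s_{-x,x}G_{-x,-x}^{(x)}$; then apply (\ref{eq:resolvent_identity1}) to rewrite $G^{(x)}$ in terms of $G^{(x,-x)}$, producing $Y_x$; and take partial expectation of the remaining double sum to get $\sum_a^{(x,-x)} s_{xa}G_{aa}^{(x,-x)}+Z_x$. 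Two further applications of (\ref{eq:resolvent_identity1}) convert $\sum_a^{(x,-x)} s_{xa} G_{aa}^{(x,-x)}+s_{-x,x}G_{-x,-x}^{(x)}$ into $\sum_a s_{xa}G_{aa}-A_x-B_x$ through a telescoping cancellation at $a=x$ and $a=-x$, giving $\Upsilon_x=h_{xx}+A_x+B_x-C_x-Y_x-Z_x$.

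For the second equation I would apply (\ref{eq:resolvent_identity2}) to write $G_{x,-x}=-G_{xx}\sum_a^{(x)}h_{xa}G_{a,-x}^{(x)}$, peel off the $a=-x$ contribution $-G_{xx}G_{-x,-x}^{(x)}h_{x,-x}$ (which will sit inside $\e_x^1$), and apply (\ref{eq:resolvent_identity2}) a second time to expand $G_{a,-x}^{(x)}=-G_{-x,-x}^{(x)}\sum_b^{(x,-x)}G_{ab}^{(x,-x)}h_{b,-x}$ for $a\notin\{x,-x\}$. The decisive fourfold-symmetry identity is $\E[h_{xa}h_{b,-x}]=r_{xa}\delta_{a,-b}$ for $a,b\notin\{x,-x\}$: the entry $h_{b,-x}$ shares its symmetry class with $h_{x,-b}$, so the covariance is non-zero only when $a=-b$, in which case $h_{xa}h_{b,-x}=h_{xa}^2$ and the expectation gives $r_{xa}$. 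Splitting into $\E_x+\F_x$ and multiplying back by the prefactor $G_{xx}G_{-x,-x}^{(x)}$ yields $G_{xx}G_{-x,-x}^{(x)}\sum_a^{(x,-x)}r_{xa}G_{a,-a}^{(x,-x)}+\e_x^2$. Two uses of (\ref{eq:resolvent_identity1}) reduce $G_{a,-a}^{(x,-x)}$ to $G_{a,-a}$; the two correction terms, multiplied by the prefactor, become $-\e_x^3-\e_x^4$. Finally, substituting $G_{xx}G_{-x,-x}^{(x)}=m^2+(G_{xx}G_{-x,-x}^{(x)}-m^2)$ and enlarging the summation from $\sum_a^{(x,-x)}$ to $\sum_{a\neq -a}$ (which produces corrections at $a\in\{x,-x\}$ and at the fixed points $a=-a$) gathers everything left over, together with the separately isolated $-G_{xx}G_{-x,-x}^{(x)}h_{x,-x}$, into $\e_x^1$.

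The main obstacle is combinatorial rather than conceptual: the fourfold symmetry forces one to track several distinct configurations simultaneously — the boundary cases $a=-x$, $b=-x$, and $a=b=-x$ that generate $C_x$ in the first equation, and in the second equation the fixed points $a=-a$ (namely $a=0$ and, for even $N$, $a=N/2$) which behave as diagonal contributions — all while keeping sign and minor conventions consistent when passing from $G^{(x,-x)}$ up through $G^{(x)}$ to $G$. Each of $A_x,B_x,C_x,Y_x,Z_x$ and $\e_x^1,\dots,\e_x^4$ is precisely the bookkeeping entry for one such configuration, and verifying the lemma amounts to checking that nothing is double-counted or dropped.
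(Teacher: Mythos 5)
Your argument is essentially the paper's own proof: Schur's complement formula \eqref{eq:Schur_formula} with conditioning on $H^{(x,-x)}$, peeling off the $a=-x$, $b=-x$ terms into $C_x+s_{-x,x}G^{(x)}_{-x,-x}$, passing between $G^{(x)}$ and $G^{(x,-x)}$ via \eqref{eq:resolvent_identity1} to produce $Y_x$ and $B_x$, the $\E_x+\F_x$ split giving $Z_x$ with the telescoping that yields $\sum_a s_{xa}G_{aa}-A_x-B_x$, and, for \eqref{eq:self_averaging_G_x-x}, two applications of \eqref{eq:resolvent_identity2} together with $\E_x\bigl[h_{xa}G^{(x,-x)}_{ab}h_{b,-x}\bigr]=r_{xa}\delta_{b,-a}G^{(x,-x)}_{a,-a}$, followed by adding and subtracting $m^2\sum_a r_{xa}G_{a,-a}$ to collect $\e_x^1,\dots,\e_x^4$. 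The only caveat concerns the case $x=-x$, where your identity $\E[h_{xa}h_{bx}]=\delta_{ab}s_{xa}$ overlooks that the fourfold symmetry forces $h_{x,-a}=\bar h_{xa}$, so the partial expectation also produces $\sum_{a\neq -a}^{(x)}(\E h_{xa}^2)\,G^{(x)}_{a,-a}$ unless $\E h_{xa}^2=0$; however, the paper makes exactly the same simplification by deferring this case to \cite{EJP2473}, so this point does not distinguish your proof from the paper's.
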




The self-consistent equation \eqref{eq:self_const_2} has the same form as (5.9) in \cite{EJP2473} and it is proved in a similar way by expanding by means of Schur's complement 
formula and computing the partial expectation of a term in this expansion. 
However, we had to replace $P_i$ by $\E_x$ to derive it and the error term $\Upsilon_x$ contains terms which did not appear in (5.8) from \cite{EJP2473}. 
(If $x=-x$ then $\Upsilon_x$ has the same form as in \cite{EJP2473}.) 
The term $A_x$ is exactly the same as $A_i$ in (5.8) of 
\cite{EJP2473}. The term $Z_x$ is the analogue of $Z_i$ in \cite{EJP2473} but the terms $B_x$, $C_x$ and $Y_x$ are completely new
and will require new estimates.

The self-consistent equation \eqref{eq:self_averaging_G_x-x} is new and does not have a counterpart in \cite{EJP2473}. Due to the fourfold symmetry there is the necessity to 
introduce it since in contrast to the symmetry studied in \cite{EJP2473} proving directly that the off-diagonal elements $G_{x,-x}$ are small is not possible. 

As deriving this self-consistent equation follows the same line as the proof of \eqref{eq:self_const_2} -- expanding and computing the partial expectation of a term in this expansion -- 
it is not surprising that some error terms in \eqref{eq:self_averaging_G_x-x} have counterparts in \eqref{eq:self_const_2}. Namely, $\e^2_x$ is the counterpart of $Z_x$. Moreover, 
 $\e^3_x$ and $\e^4_x$ are the error terms corresponding to $A_x$ and $B_x$, respectively.

\begin{proof}
We start with the proof of \eqref{eq:self_const_2}. 
For $x=-x$ the derivation of \eqref{eq:self_const_2} follows exactly as (5.9) in section 5.1 of \cite{EJP2473} since $\E_x$ and $\F_x$ agree with $P_x$ and $Q_x$ respectively in this case. 
Similarly, for $x \neq -x$ the self-consistent equation \eqref{eq:self_const_2} will be obtained from Schur's complement formula \eqref{eq:Schur_formula} with $\mathbb T=\emptyset$. 
In this case, its last term can be written in the form
\begin{align}
\sum_{a,b}^{(x)}h_{xa}G_{ab}^{(x)}h_{bx}= &
h_{x,-x}G_{-x,-x}^{(x)}h_{-x,x}+\sum_{a}^{(x,-x)}h_{xa}G_{a,-x}^{(x)}h_{-x,x}+\sum_{b}^{(x,-x)}h_{x,-x}G_{-x,b}^{(x)}h_{bx}\nonumber\\
&+\sum_{a,b}^{(x,-x)}
h_{xa}G_{ab}^{(x,-x)}h_{bx} + \left(G_{-x,-x}^{(x)}\right)^{-1}\sum_{a,b}^{(x,-x)}h_{xa}G_{a,-x}^{(x)}G_{-x,b}^{(x)}h_{bx} 
\label{eq:expansion_expectation_Schur}
\end{align}
by applying the resolvent identity \eqref{eq:resolvent_identity1}. 
Since the random variables $h_{xa}$ and $h_{-x,b}$ are independent of $H^{(x,-x)}$ we have
$\E_x\left[h_{xa}G_{ab}^{(x,-x)}h_{bx} \right]=s_{xa}G_{aa}^{(x,-x)}\delta_{ab}.$ 
\noindent Thus, 
\begin{align}
\sum_{a,b}^{(x,-x)}\E_x\left[h_{xa}G_{ab}^{(x,-x)}h_{bx} \right]=& \sum_{a}^{(x,-x)}s_{xa}G_{aa}^{(x,-x)}\nonumber\\ =&
\sum_{a}s_{xa}G_{aa}-\sum_{a}s_{xa}\frac{G_{ax}G_{xa}}{G_{xx}} -s_{-x,x}G_{-x,-x}^{(x)} - \sum_{a}^{(x,-x)}s_{xa}\frac{G_{a,-x}^{(x)}G_{-x,a}^{(x)}}{G_{-x,-x}^{(x)}}, \nonumber
\end{align}
where we used in the second step the resolvent identity \eqref{eq:resolvent_identity1} twice. 
By splitting the fourth summand on the right-hand side of \eqref{eq:expansion_expectation_Schur} according to $\E_x +\F_x=1$, we get 
\begin{align}
\sum_{a,b}^{(x,-x)}h_{xa}G_{ab}^{(x,-x)}h_{bx} = &\sum_{a,b}^{(x,-x)}\E_x\left[h_{xa}G_{ab}^{(x,-x)}h_{bx} \right]+\sum_{a,b}^{(x,-x)}\mathbb F_x\left[h_{xa}G_{ab}^{(x,-x)}h_{bx} \right]\nonumber \\
=&\sum_{a}s_{xa}G_{aa}-A_x-s_{-x,x}G_{-x,-x}^{(x)}-B_x+Z_x.
\label{eq:expansion_term4}
\end{align}
Therefore, the results of \eqref{eq:expansion_expectation_Schur} and \eqref{eq:expansion_term4} allow us to write \eqref{eq:Schur_formula} in the form
\begin{equation*}
 \frac{1}{G_{xx}}=-z-m+\Upsilon_x -\sum_a s_{xa}v_a,
\end{equation*}
which implies \eqref{eq:self_const_2} using \eqref{eq:semicircle2}.

We fix $x \neq -x$. To derive \eqref{eq:self_averaging_G_x-x} we apply the resolvent identity 
\eqref{eq:resolvent_identity2} twice to get 
\begin{equation}
G_{x,-x} = - G_{xx}G_{-x,-x}^{(x)} h_{x,-x} + G_{xx} G_{-x,-x}^{(x)} \sum_{a,b}^{(x,-x)} h_{xa}G_{ab}^{(x,-x)} h_{b,-x}.
\label{eq:G_offdiag_ansatz}
\end{equation}
Since $\E_x h_{xa}G_{ab}^{(x,-x)} h_{b,-x} = G_{a,-a}^{(x,-x)} r_{xa} \delta_{b,-a}$ splitting up the sum in the second term in \eqref{eq:G_offdiag_ansatz} according to $\E_x + \F_x=1$ yields 
\begin{equation}
G_{x,-x} = - G_{xx}G_{-x,-x}^{(x)} h_{x,-x} + G_{xx} G_{-x,-x}^{(x)} \sum_{a}^{(x,-x)} r_{xa} G_{a,-a} +\e^2_x-\e^3_x -\e^4_x
\label{eq:G_offdiag_derivation2}
\end{equation}
where we used the resolvent identity \eqref{eq:resolvent_identity1} twice. We obtain \eqref{eq:self_averaging_G_x-x} by adding and substracting $m^2\sum_a r_{xa} G_{a,-a}$ to the right-hand side of 
\eqref{eq:G_offdiag_derivation2}.
\end{proof}


\subsection{Auxiliary Estimates}

The next lemma contains bounds on the resolvent entries of minors of $H$ if there exists an apriori bound on $\Lambda$ (Recall its definition 
in \eqref{eq:definition_Lambda}). We will use a deterministic (possibly $z$-dependent) parameter $\Psi$ which fulfills 
\begin{equation}
cM^{-\frac 1 2} \leq \Psi \leq M^{-c}
\label{eq:deterministic_control_general}
\end{equation}
for some $c>0$ and all large enough $N$.

\begin{lem}
\label{lem:aux_estimate_G_ij}
Let $\mathbf D$ be a spectral domain and $\varphi$ the indicator function of a (possibly $z$-dependent) event. Let $\Psi$ be a deterministic control 
parameter satisfying \eqref{eq:deterministic_control_general}. If $\varphi \Lambda \prec \Psi$ and $\mathbb T \subset \N$ is a fixed finite subset then
$$
\varphi |G_{ij}^{(\mathbb T)}| \prec \varphi \Lambda_o \prec \Psi, \quad \varphi |G_{ii}^{(\mathbb T)}| \prec 1,\quad \frac{\varphi}{|G_{ii}^{(\mathbb T)}|} \prec 1, \quad 
\varphi\abs{G_{ii}^{(\mathbb T)}-m} \prec \varphi \Lambda, \quad \varphi \Im G_{ii}^{(\mathbb T)} \prec \Im m+\Lambda
$$
uniformly in $z \in \mathbf D$ and in $i,j$ for $i\neq j$ and $i,j \notin \mathbb T$.
\end{lem}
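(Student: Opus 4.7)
The plan is to prove all five estimates by a single induction on the cardinality $|\mathbb T|$, using the first resolvent identity in \eqref{eq:resolvent_identity1} and the elementary bounds on $m$ from \eqref{eq:estimate_m}.

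The base case $\mathbb T = \emptyset$ is immediate from the definitions. The off-diagonal bound $\abs{G_{ij}} \leq \Lambda_o$ for $i \neq j$ follows because $\Lambda_o = \max\{\Lambda_g,\Lambda_-\}$ in \eqref{eq:definition_Lambda} covers both the case $j = -i$ and the case $j \neq \pm i$. For the diagonal entry, $\varphi \abs{G_{ii} - m} \leq \varphi \Lambda_d \leq \varphi \Lambda$. Combined with $\abs{m} \leq 1$ and $\varphi \Lambda \prec \Psi \leq M^{-c}$, this gives $\varphi \abs{G_{ii}} \prec 1$. Similarly, the lower bound $\abs{m} \geq c$ together with $\varphi \abs{G_{ii} - m} \prec \Psi \to 0$ yields $\varphi \abs{G_{ii}} \geq c/2$ with overwhelming probability, hence $\varphi/\abs{G_{ii}} \prec 1$. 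Finally, $\varphi \Im G_{ii} \leq \Im m + \varphi \Lambda$ is immediate.

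For the inductive step, assume all five estimates hold for some $\mathbb T'$, pick any $k \notin \mathbb T'$, and set $\mathbb T = \mathbb T' \cup \{k\}$. The first identity in \eqref{eq:resolvent_identity1}, applied with $\mathbb T'$ in place of $\mathbb T$, rearranges to
\[
  G_{ij}^{(\mathbb T)} \;=\; G_{ij}^{(\mathbb T')} \;-\; \frac{G_{ik}^{(\mathbb T')} G_{kj}^{(\mathbb T')}}{G_{kk}^{(\mathbb T')}}
\]
for $i,j \notin \mathbb T$, including $i=j$. By the inductive hypothesis the two numerator factors are $\prec \varphi \Lambda_o$ on the event $\varphi$, and $1/\abs{G_{kk}^{(\mathbb T')}} \prec 1$; hence the correction is $\prec \varphi \Lambda_o^2$. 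Since $\varphi \Lambda_o \leq \varphi \Lambda \prec \Psi$ and $\Psi$ is deterministic with $\Psi \leq M^{-c}$ eventually at most one, we conclude $\varphi \Lambda_o^2 \prec \varphi \Lambda \cdot \Psi \leq \varphi \Lambda$. This propagates the off-diagonal estimate $\varphi \abs{G_{ij}^{(\mathbb T)}} \prec \varphi \Lambda_o$ and the diagonal estimate $\varphi \abs{G_{ii}^{(\mathbb T)} - m} \prec \varphi \Lambda$ from $\mathbb T'$ to $\mathbb T$. The remaining three estimates for $\mathbb T$ follow from the last bound exactly as in the base case, using $\abs{m} \leq 1$, $\abs{m} \geq c$ and $\varphi \Lambda \prec \Psi \to 0$.

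The only delicate point is bookkeeping: one must ensure that the correction $\Lambda_o^2$ does not accumulate beyond $\Lambda$ across the induction. This is guaranteed by the hypothesis that $\mathbb T$ is a fixed finite subset independent of $N$, so the number of resolvent expansions contributes only an $N$-independent multiplicative constant. The key algebraic observation is that one factor of $\Lambda_o$ can always be absorbed into the $N^\eps$ slack in the definition of $\prec$ (via $\Lambda_o \prec \Psi \ll 1$), while the other factor is retained to give a bound by $\Lambda$; this is what upgrades the naive $\Lambda_o^2$ to the required $\Lambda$.
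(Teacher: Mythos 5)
Your proof is correct and takes the same approach the paper sketches in a single sentence (``This result follows by induction on the size of $\mathbb T$ using \eqref{eq:estimate_m} and \eqref{eq:resolvent_identity1}''): an induction on $\abs{\mathbb T}$ via the first identity in \eqref{eq:resolvent_identity1}, with the correction term absorbed because $\varphi\Lambda_o \prec \Psi \ll 1$. You have simply spelled out the base case, the single expansion step, and the bookkeeping that the paper leaves implicit.
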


\begin{proof}
This result follows by induction on the size of $\mathbb T$ using \eqref{eq:estimate_m} and \eqref{eq:resolvent_identity1}. 
\end{proof}

Using this result we will establish the first bounds on the error terms in the self-consistent equations in the next lemma.
When applying the first part of the following lemma the indicator $\varphi$ will be defined precisely in such way that the condition $\varphi \Lambda \prec M^{-c}$ holds, i.e., 
to ensure that $\varphi\Lambda$ is small. 

\begin{lem} \label{lem:aux_estimates_fourfold}Let $\mathbf D$ be a spectral domain.
\begin{enumerate}[(i)]
\item If $\varphi$ is an indicator function 
such that $\varphi\Lambda \prec M^{-c}$ (for some $c>0$) then
\begin{align}
\varphi(\Lambda_g+\abs{A_x}+\abs{B_x}+\abs{C_x}+\abs{Y_x}+\abs{Z_x})&\prec \varphi \Lambda^2+\sqrt{\frac{\Im m +\Lambda}{M\eta}},\label{eq:aux_est_lambda_o_Ups_x_phi} \\
\varphi(\abs{\e_x^1}+\abs{\e_x^2}+\abs{\e_x^3}+\abs{\e_x^4}) &\prec \varphi \Lambda^2 +\sqrt{\frac{\Im m +\Lambda}{M\eta}}\label{eq:aux_est_mathcal_E_phi}
\end{align}
uniformly in $x$ and $z\in \mathbf D$.
\item For fixed $\eta>0$ we have the estimates
\begin{equation}
\Lambda_- \leq  \eta^{-2}\Lambda_-+2\eta^{-3}\Lambda_-^2+\epsilon 
\label{eq:inequality_Lambda-}
\end{equation}
with $\epsilon\prec M^{-1/2}$ uniformly in $z \in \{w \in \C; \Im w = \eta\}$, and 
\begin{align}
\Lambda_g \prec & ~M^{-1/2}+\Lambda_-, \label{eq:estimate_Lambda_g_eta_fixed}\\
\abs{A_x}+\abs{B_x}+\abs{C_x}+\abs{Y_x}+\abs{Z_x} \prec & ~M^{-1/2}+\Lambda_o \label{eq:Upsilon_for_constant_eta}
\end{align}
uniformly in $x$ and in $z\in \{w \in \C; \Im w = \eta\}$.
\end{enumerate}
\end{lem}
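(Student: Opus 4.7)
The plan is to bound each error term using three basic tools: Lemma \ref{lem:aux_estimate_G_ij}, which replaces entries of any minor $G^{(\mathbb T)}$ by entries of $G$ (up to the indicator $\varphi$); the Ward identity \eqref{eq:Ward_identity}, which converts $\sum_b|G_{ab}^{(\mathbb T)}|^2$ into $\eta^{-1}\Im G_{aa}^{(\mathbb T)}$; and the standard large-deviation bound for quadratic forms in the independent family $\{h_{xa},h_{-x,a}\}_{a\notin\{x,-x\}}$, which states that if the kernel $K$ is measurable with respect to $H^{(x,-x)}$ then $\bigl|\sum_{a,b}\F_x[h_{xa}K_{ab}h_{bx}]\bigr|\prec\bigl(\sum_{a,b}s_{xa}s_{xb}|K_{ab}|^2\bigr)^{1/2}$, and similarly with $h_{bx}$ replaced by $h_{b,-x}$ upon using the fourfold identity $\E h_{xa}h_{b,-x}=r_{xa}\delta_{b,-a}$ that already appeared in the derivation of \eqref{eq:self_averaging_G_x-x}.

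For part (i) the ten error quantities split into three types. The \emph{Ward} terms $A_x,B_x,Y_x,\e_x^3,\e_x^4$ each consist of a product of two off-diagonal Green function entries weighted by $s_{xa}$ or $r_{xa}$ (with $|r_{xa}|\leq s_{xa}\leq M^{-1}$); Cauchy--Schwarz together with Ward applied to the resulting $\sum_b|G_{ab}^{(\mathbb T)}|^2$ bounds each by $M^{-1}\eta^{-1}\Im G_{**}^{(\mathbb T)}\prec (\Im m+\Lambda)/(M\eta)$, which is dominated by $\sqrt{(\Im m+\Lambda)/(M\eta)}$. The \emph{large-deviation} quantities $\Lambda_g$, $Z_x$ and $\e_x^2$ all reduce by Schur-type or \eqref{eq:resolvent_identity2}-type expansion to a quadratic form in the $h_{x\cdot}$'s whose kernel is $G^{(x,-x)}$, reached after expanding $G^{(x)}$ via \eqref{eq:resolvent_identity1}; the large-deviation estimate followed by Ward then yields the desired $\sqrt{(\Im m+\Lambda)/(M\eta)}$. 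The remaining pieces $C_x$ and $\e_x^1$ are handled directly: $(|h_{x,-x}|^2-s_{-x,x})G^{(x)}_{-x,-x}$ is $\prec M^{-1}$ via $|h_{x,-x}|\prec M^{-1/2}$; the crossed sums in $C_x$ collapse under \eqref{eq:resolvent_identity2} to $|h_{x,-x}|(\Lambda_-+|h_{x,-x}|)\prec M^{-1/2}\Lambda_-+M^{-1}$; and in $\e_x^1$ the prefactor $G_{xx}G^{(x)}_{-x,-x}-m^2$ factors as $(G_{xx}-m)G^{(x)}_{-x,-x}+m(G^{(x)}_{-x,-x}-m)$ and is $\prec\Lambda$, so $|\e_x^1|\prec\Lambda\cdot\Lambda_-+M^{-1/2}\leq\Lambda^2+\sqrt{(\Im m+\Lambda)/(M\eta)}$ (using $\eta\geq M^{-1}$ and $\Im m\geq c\eta$).

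For part (ii), at fixed $\eta>0$ every entry satisfies $|G^{(\mathbb T)}_{ij}|\leq\eta^{-1}$ deterministically, so all factors depending on $\eta$ may be absorbed into $\prec$. The bound \eqref{eq:Upsilon_for_constant_eta} follows by specializing (i) and replacing $\Im m+\Lambda$ by its deterministic upper bound. For $\Lambda_g\prec M^{-1/2}+\Lambda_-$ I write $G_{xy}=-G_{xx}\sum_a^{(x)}h_{xa}G_{ay}^{(x)}$ for $y\neq x,-x$ and expand $G^{(x)}$ via \eqref{eq:resolvent_identity1}: the $H^{(x,-x)}$-measurable sum $\sum_a^{(x,-x)}h_{xa}G_{ay}^{(x,-x)}$ is $\prec M^{-1/2}$ by large deviation, while the remainder collapses under \eqref{eq:resolvent_identity2} to a contribution of size $\Lambda_-$. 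Finally, inserting the self-consistent equation \eqref{eq:self_averaging_G_x-x} and using $|m|^2\leq\eta^{-2}$ together with $\sum_a|r_{xa}|\leq 1$ yields $\Lambda_-\leq\eta^{-2}\Lambda_-+\max_x|\e_x|$; the quadratic term $2\eta^{-3}\Lambda_-^2$ appears from $\e_x^3,\e_x^4$, which contain two off-diagonal factors bounded via $\Lambda_g\prec M^{-1/2}+\Lambda_-$ by $\eta^{-1}(M^{-1}+\Lambda_-^2)$ and carry an extra $\eta^{-1}$ from the prefactor $G_{xx}G^{(x)}_{-x,-x}$, while all remaining contributions are $\prec M^{-1/2}$.

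The principal obstacle, and the chief source of new difficulty compared to \cite{EJP2473}, is the fourfold symmetry itself: it identifies $h_{xa}$ with $\bar h_{-a,-x}$, so rows $x$ and $-x$ contain linked information and the partial expectation must be taken with respect to the larger minor $H^{(x,-x)}$. Every $G^{(x)}$-entry that appears after Schur expansion or \eqref{eq:resolvent_identity2} must therefore be further expanded via \eqref{eq:resolvent_identity1} to reach $G^{(x,-x)}$ before large deviation can be invoked; this expansion is precisely what produces the new pieces $B_x,C_x,Y_x$ of $\Upsilon_x$ and all of the error $\e_x$ of the new self-consistent equation \eqref{eq:self_averaging_G_x-x}, each of which must then be estimated at the same sharp order as the classical $Z_x$ term.
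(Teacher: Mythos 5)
Your overall strategy (Ward identity, large deviation bounds after reducing to the $(x,-x)$-minor, collapsing crossed sums via \eqref{eq:resolvent_identity2}) is the right one, and your treatment of $A_x$, $B_x$, $C_x$, $Z_x$, $\e^1_x$, $\e^2_x$, $\e^3_x$ is sound. There is, however, a genuine gap in part (i) at $Y_x$. You classify $Y_x$ as a term ``weighted by $s_{xa}$'' and claim a Cauchy--Schwarz plus Ward bound of order $M^{-1}\eta^{-1}\Im G$, but $Y_x=\bigl(G^{(x)}_{-x,-x}\bigr)^{-1}\bigl(\sum_a h_{xa}G^{(x)}_{a,-x}\bigr)\bigl(\sum_b G^{(x)}_{-x,b}h_{bx}\bigr)$ carries the random weights $h_{xa}h_{bx}$, not the deterministic weights $s_{xa}\le M^{-1}$: Cauchy--Schwarz only gives $\sum_a\abs{h_{xa}}^2\prec 1$, so CS+Ward yields $\eta^{-1}(\Im m+\Lambda)$ with no factor $M^{-1}$, which is not controlled by the right-hand side of \eqref{eq:aux_est_lambda_o_Ups_x_phi}. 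Nor can your quoted large deviation bound be invoked directly, because the kernel $G^{(x)}_{a,-x}G^{(x)}_{-x,b}$ is not $H^{(x,-x)}$-measurable. The paper expands both linear forms via \eqref{eq:resolvent_identity2} down to $G^{(x,-x)}$, treats the ``diagonal'' part $\sum_a h_{xa}^2 G^{(x,-x)}_{a,-a}$ separately (it contributes $\Lambda_o+M^{-1}$, whence the $\varphi\Lambda_o^2$ term -- which your claimed bound for $Y_x$ omits, and which cannot be avoided, since $Y_x$ is generically of size $\Lambda_-^2$), and applies (C.4) with Ward only to the genuinely fluctuating part; alternatively each linear form collapses via \eqref{eq:resolvent_identity2} to $-G_{x,-x}/G_{xx}-h_{x,-x}G^{(x)}_{-x,-x}$, of size $\Lambda_o+M^{-1/2}$.

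In part (ii) there is a second gap, concerning \eqref{eq:inequality_Lambda-}. This is a pathwise inequality with the deterministic coefficients $\eta^{-2}$ and $2\eta^{-3}$, and precisely this form is needed for the absorption at $\eta=2$ in Lemma \ref{Lem:estimate_Im_z=2} (there $\Lambda_-\le 1/2$ deterministically, so $\tfrac14\Lambda_-^2\le\tfrac18\Lambda_-$). You generate the quadratic term from $\e^3_x$ by squaring the stochastic-domination bound $\Lambda_g\prec M^{-1/2}+\Lambda_-$; this only gives $\eta^{-1}\Lambda_g^2\le N^{\eps}C\,\eta^{-1}(M^{-1}+\Lambda_-^2)$ on a high-probability event, i.e.\ an $N^{\eps}$ in front of $\Lambda_-^2$, which neither matches the stated inequality nor permits the later absorption. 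This is exactly why the paper first proves the intermediate bound \eqref{eq:lem_fixed_eta_aux_claim}, $\Lambda_g\le\eta^{-1}\Lambda_-+\tilde\epsilon$ with \emph{additive} error $\tilde\epsilon\prec M^{-1/4}$ (obtained in \eqref{eq:estimate_third_term_G_xy} by inserting $s_{-x,a}$ so that the coefficient of $\Lambda_-$ stays deterministic), and only then squares. Two further points: $\e^4_x$ contains minor entries $G^{(x)}_{a,-x}$, not entries of $G$, so it is not bounded by $\eta^{-1}\Lambda_g^2$; the paper instead puts it into $\epsilon$ via an LDE bound $\prec M^{-1/2}$ in the fashion of \eqref{eq:aux_est_B_x_2}. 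And \eqref{eq:Upsilon_for_constant_eta} cannot be obtained ``by specializing (i)'', since (i) is conditional on $\varphi\Lambda\prec M^{-c}$, unavailable at fixed $\eta$; the bounds must be re-derived with \eqref{eq:deterministic_bound_G_ij^T} and \eqref{eq:estimate_sum_a_h_xa_G_ab^(x)} in place of Lemma \ref{lem:aux_estimate_G_ij}, noting in particular that $1/\abs{G_{xx}}$ admits no deterministic bound and has to be eliminated through \eqref{eq:resolvent_identity2} as in \eqref{eq:aux_est_A_x_2}.
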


\begin{proof}
In this proof we will occasionally split the index set of a summation into the parts $\{a\neq -a\}$ and $\{a = -a \}$
and use that the latter set contains at most two elements.

In the following proof of the first part Lemma \ref{lem:aux_estimate_G_ij} will be applied several times with $\Psi=M^{-c}$. Note that $M^{-1/2}\prec \sqrt{(\Im m +\Lambda)/(M\eta)}$ because of the fourth estimate 
in \eqref{eq:estimate_m}.
First, we assume $x\neq -x$. 
Applying the second estimate in \eqref{eq:estimate_h_xy_stoch_dom} and \eqref{eq:condition_s_xy} to the definition of $A_x$ in \eqref{eq:definition_A_x_B_x} yields 
\begin{equation}
\varphi \abs{A_x} \prec s_{xx}\abs{G_{xx}}+ \sum_a^{(x)} s_{xa}\varphi \frac{\abs{G_{xa}G_{ax}}}{\abs{G_{xx}}} \prec M^{-1}+ \varphi \Lambda_o^2.
\label{eq:estimate_A_x}
\end{equation}
Similarly, using the first estimate in Lemma \ref{lem:aux_estimate_G_ij} we get $\varphi\abs{B_x}\prec \varphi\Lambda_o^2$.

The representation 
\begin{equation}
C_x=|h_{x,-x}|^2G_{-x,-x}^{(x)}-s_{-x,x}G_{-x,-x}^{(x)}-\frac{G_{x,-x}}{G_{xx}}h_{-x,x}-h_{x,-x}\frac{G_{-x,x}}{G_{xx}},
\label{eq:second_representation_C_x}
\end{equation}
which follows from the resolvent identity \eqref{eq:resolvent_identity2}, together with \eqref{eq:estimate_h_xy_stoch_dom} implies 
\begin{equation}
\varphi \abs{C_x} \prec M^{-1/2}.
\label{eq:est_C_x}
\end{equation}

To estimate $Y_x$ we need the following two auxiliary bounds: We have
\begin{equation}
\varphi\left|\sum_a^{(x,-x)} h_{xa}^2 G_{a,-a}^{(x,-x)}\right| \leq \sum_{a\neq -a}^{(x,-x)} \abs{h_{xa}}^2 \varphi \abs{G_{a,-a}^{(x,-x)}} + \sum_{a=-a}^{(x,-x)} \abs{h_{xa}}^2 \varphi \abs{G_{aa}^{(x,-x)}} 
\prec \varphi \Lambda_o +M^{-1},
\label{eq:estimate_h_xk^2G_k-k^x-x}
\end{equation}
where we used \eqref{eq:estimate_h_xy_stoch_dom} and \eqref{eq:condition_s_xy} in last step. 
Now, we use the quadratic Large Deviation Bounds from \cite{EJP2473} after conditioning on $G^{(x,-x)}$.
By applying (C.4) in \cite{EJP2473} with $X_k=\zeta_{xk}$ and $a_{kl}=s_{xk}^{1/2}G_{k,-l}^{(x,-x)}s_{xl}^{1/2}$ we get
\begin{equation}
\varphi\left|\sum_{k \neq l}^{(x,-x)}h_{xk}G_{k,-l}^{(x,-x)}h_{xl}\right|^2
\prec \sum_{k \neq l}^{(x,-x)}s_{xk} s_{xl} \varphi\abs{G_{k,-l}^{(x,-x)}}^2 \prec  \frac{\varphi}{M\eta}\sum_{k}^{(x,-x)}s_{xk} \Im G_{kk}^{(x,-x)}\prec \frac{\Im m + \Lambda}{M\eta},
\label{eq:bound_h_xk_G_kl^x-x_h_l-x_k_neq_l}
\end{equation}
where we used the second estimate in \eqref{eq:estimate_h_xy_stoch_dom} and \eqref{eq:Ward_identity} in the second step. 
Thus, the representation 
\begin{equation}
Y_x=G_{-x,-x}^{(x)}\left(\sum_{a,k}^{(x,-x)} h_{xa}G_{ak}^{(x,-x)}h_{k,-x}\right)\left(\sum_{b,l}^{(x,-x)}h_{-x,l}G_{lb}^{(x,-x)}h_{bx}\right),
\label{eq:expansion_Y_x}
\end{equation}
which follows from the resolvent identity \eqref{eq:resolvent_identity2}, yields (after separating the case $k=-a$)
\begin{equation}
\varphi\abs{Y_x} \prec \varphi\left|\sum_a^{(x,-x)} h_{xa}^2 G_{a,-a}^{(x,-x)}\right|^2+\varphi\left|\sum_{a \neq k}^{(x,-x)}h_{xa}G_{a,-k}^{(x,-x)}h_{xk}\right|^2 \prec \varphi\Lambda_o^2 + \frac{\Im m + \Lambda}{M\eta} 
\prec \varphi\Lambda_o^2 + \sqrt\frac{\Im m + \Lambda}{M\eta}.
\label{eq:est_Y_x_varphi}
\end{equation}
%
%
Before estimating $Z_x$ we note that 
\[Z_x \defeq \begin{cases} \displaystyle \sum_{a,b}^{(x,-x)}\mathbb F_x\left[h_{xa}G_{ab}^{(x,-x)}h_{bx} \right] =\sum_{a}^{(x,-x)}\left(|h_{xa}|^2-s_{xa}\right)G_{aa}^{(x,-x)} 
+\sum_{a\neq b}^{(x,-x)}h_{xa}G_{ab}^{(x,-x)}h_{bx}, & x \neq -x, \\
\displaystyle \sum_{a,b}^{(x)} \F_x [h_{xa}G_{ab}^{(x)}h_{bx} ] = \sum_{a}^{(x)} \left(|h_{xa}|^2-s_{xa}\right)G_{aa}^{(x)} + \sum_{a\neq b}^{(x)}h_{xa}G_{ab}^{(x)}h_{bx}, & x = -x. \end{cases} \] 
We fix $x\neq -x$ and apply (C.4) in \cite{EJP2473} with $X_i=\zeta_{xi}$ and $a_{ij}=s_{xi}^{1/2} G_{ij}^{(x,-x)}s_{jx}^{1/2}$ to get
\begin{equation}
\varphi \left|\sum_{i\neq j}^{(x,-x)} h_{xi}G_{ij}^{(x,-x)}h_{jx}\right|^2 \prec \left(\sum_{ i \neq j}^{(x,-x)} s_{xi}s_{jx} \varphi\abs{G_{ij}^{(x,-x)}}^2\right)^{1/2} \prec \frac{\Im m + \Lambda} {M\eta},
\label{eq:first_estimate_X_x}
\end{equation}
where the last step follows in the same way as the last step in \eqref{eq:bound_h_xk_G_kl^x-x_h_l-x_k_neq_l}.  
Moreover, (C.2) in \cite{EJP2473} with $X_i=(\abs{\zeta_{xi}}^2-1)(\E\abs{\zeta_{xi}}^4-1)^{-1/2}$ and $a_i=(\E\abs{\zeta_{xi}}^4-1)^{1/2}s_{xi}G_{ii}^{(x,-x)}$
implies
\begin{equation}
\varphi \left|\sum_{i}^{(x,-x)}\left(|h_{xi}|^2-s_{xi}\right)G_{ii}^{(x,-x)}\right|^2 \prec \sum_i^{(x,-x)}s_{xi}^2 (\E|\zeta_{xi}|^4-1)
\varphi\abs{G_{ii}^{(x,-x)}}^2 
\prec M^{-1},
\label{eq:est_sum_abs_h^2-s_G_ii}
\end{equation}
where we used \eqref{eq:finite_moments}, the second estimate in \eqref{eq:estimate_h_xy_stoch_dom} and \eqref{eq:condition_s_xy} in the last step.
Therefore, absorbing $M^{-1/2}$ into the second summand we get 
\begin{equation}
\varphi \abs{Z_x}\leq \varphi \left|\sum_{i\neq j}^{(x,-x)} h_{xi}G_{ij}^{(x,-x)}h_{jx}\right| +\varphi \left|\sum_{i}^{(x,-x)}\left(|h_{xi}|^2-s_{xi}\right)
G_{ii}^{(x,-x)}\right| \prec \sqrt\frac{\Im m + \Lambda} {M\eta}.
\label{eq:est_Z_x_varphi}
\end{equation}
If $x=-x$ then $Z_x$ can be bounded by the right-hand side in \eqref{eq:aux_est_lambda_o_Ups_x_phi} similarly to the previous estimate and for $A_x$ in exactly the same way as in \eqref{eq:estimate_A_x}.

To estimate the generic off-diagonal entry $G_{xy}$ under the assumption that all of $x,-x,y,-y$ are different, we use the expansion
\begin{align}
G_{xy}=& -G_{xx}^{(-x,-y)}G_{yy}^{(x,-x,-y)}\left(h_{xy}
-\sum_{k,l}^{(x,-x,y,-y)}h_{xk}G_{kl}^{(x,-x,y,-y)}h_{ly}\right)+\frac{G_{x,-y}^{(-x)}G_{-y,y}^{(-x)}}{G_{-y,-y}^{(-x)}}+\frac{G_{x,-x}G_{-x,y}}{G_{-x,-x}},
\label{eq:exp_G_xy^(-x,-y)}
\end{align}
which follows from applying \eqref{eq:resolvent_identity2} twice and afterwards applying the first identity in \eqref{eq:resolvent_identity1} twice.
Conditioning on $G^{(x,-x,y,-y)}$ and applying (C.3) in \cite{EJP2473} with $X_k=\zeta_{xk}$, $Y_l=\zeta_{ly}$ and 
$a_{kl}=s_{xk}^{1/2} G_{kl}^{(x,-x,y,-y)} s_{ly}^{1/2}$ yield 
\begin{equation}
\varphi\left|\sum_{k,l}^{(x,-x,y,-y)}h_{xk}G_{kl}^{(x,-x,y,-y)}h_{ly}\right|^2\prec\varphi\sum_{k,l}^{(x,-x,y,-y)}s_{xk}\abs{G_{kl}^{(x,-x,y,-y)}}^2 s_{ly}
\prec {\frac{\Im m +\Lambda} {M\eta}},
\label{eq:est_first_term_exp_G_xy}
\end{equation}
where the last step follows exactly as in \eqref{eq:bound_h_xk_G_kl^x-x_h_l-x_k_neq_l}, which implies
\[
\varphi\abs{G_{xy}} \prec M^{-1/2}+\sqrt\frac{\Im m +\Lambda} {M\eta} + \varphi \Lambda_o^2. \]
If $x=-x$ or $y=-y$ then the proof of the last statement is easier. 
 This finishes the proof of \eqref{eq:aux_est_lambda_o_Ups_x_phi}.

Now, we turn to the proof of \eqref{eq:aux_est_mathcal_E_phi}. The trivial estimate $\abs{\E h_{xy}^2} \leq \E \abs{h_{xy}}^2=s_{xy} \leq M^{-1}$ implies that the first two terms in $\varphi\abs{\e_x^1}$ 
are bounded by $M^{-1}$. By \eqref{eq:estimate_h_xy_stoch_dom} its last term is bounded by $M^{-1/2}$. 
Splitting the summation in the third term of $\varphi\abs{\e_x^1}$ into $a\neq -a$ and $a=-a$ and using the estimate on $\abs{\E h_{xy}}^2$ 
we obtain $\varphi\abs{\e^1_x}\prec \varphi \Lambda \Lambda_-+M^{-1/2}$ due to \eqref{eq:condition_s_xy},
\eqref{eq:estimate_m}, the fourth estimate in Lemma \ref{lem:aux_estimate_G_ij} and \eqref{eq:estimate_h_xy_stoch_dom}. 
Similarly to the bound on the third term in $\varphi\abs{\e_x^1}$, we get $\varphi\abs{\e_x^3}\prec \varphi \Lambda_o^2$ and $\varphi\abs{\e_x^4}\prec \varphi \Lambda_o^2$.
To estimate $\e_x^2$ we calculate the partial expectation in its definition which yields  
\[
\e_x^2=
 G_{xx}G_{-x,-x}^{(x)}\sum_a^{(x,-x)}\left(h_{xa}^2-r_{xa}\right)G_{a,-a}^{(x,-x)}+ G_{xx}G_{-x,-x}^{(x)}\sum_{a\neq b}^{(x,-x)}h_{xa}G_{a,-b}^{(x,-x)}h_{xb}.
\]
Similarly to \eqref{eq:est_sum_abs_h^2-s_G_ii} the first term can be bounded by $M^{-1}$. 
Using \eqref{eq:bound_h_xk_G_kl^x-x_h_l-x_k_neq_l} for the second term implies \[\varphi\abs{\e_x^2} \prec \sqrt\frac{\Im m+\Lambda}{M\eta}\] which 
completes the proof of \eqref{eq:aux_est_mathcal_E_phi}.

Finally, we prove part (ii) of Lemma \ref{lem:aux_estimates_fourfold}. In contrast to part (i), we fix $\eta>0$. 
Since constants do not matter in the estimates with respect to the stochastic domination we will not keep track of $\eta$ in such estimates.
We start the proof of part (ii) of Lemma \ref{lem:aux_estimates_fourfold} with verifying \eqref{eq:Upsilon_for_constant_eta}. 
First, we remark that applying \eqref{eq:estimate_h_xy_stoch_dom}, \eqref{eq:Ward_identity} and \eqref{eq:deterministic_bound_G_ij^T} yields 
\begin{equation}
\left|\sum_{a}^{(\mathbb T)}h_{xa}G_{ab}^{(\mathbb T')}\right|\leq \left(\sum_a\abs{h_{xa}}^2\right)^{1/2}\left(\sum_{a}\abs{G_{ab}^{(\mathbb T')}}^2\right)^{1/2}
\prec \left(\sum_a s_{xa} \right)^{1/2}\left(\eta^{-1}\Im G_{bb}^{(\mathbb T')}\right)^{1/2} \leq \eta^{-1}
\label{eq:estimate_sum_a_h_xa_G_ab^(x)}
\end{equation}
for arbitrary finite subsets $\mathbb T, \mathbb T' \subset \mathbb N$.
The resolvent identity \eqref{eq:resolvent_identity2} and the previous bound imply 
\begin{equation} 
\abs{A_x}\leq \abs{s_{xx}G_{xx}} + \sum_a^{(x)}s_{xa}\abs{G_{ax}}\left|\sum_b^{(x)}h_{xb}G_{ba}^{(x)}\right|
\prec M^{-1} +\Lambda_o,
\label{eq:aux_est_A_x_2}
\end{equation}
where we used \eqref{eq:estimate_h_xy_stoch_dom} and \eqref{eq:deterministic_bound_G_ij^T} in the second step.
The estimate 
\begin{equation}
\abs{B_x} \leq \sum_a^{(x,-x)} s_{xa} \left|\sum_k^{(x,-x)}G_{ak}^{(x,-x)}h_{k,-x}\right|  \abs{G^{(x)}_{-x,a}} \prec M^{-1/2}
\label{eq:aux_est_B_x_2}
\end{equation}
is a consequence of (C.2) in \cite{EJP2473}  with $X_k=\zeta_{k,-x}$ and $a_k=s_{k,-x}^{1/2}G_{ak}^{(x,-x)}$, \eqref{eq:Ward_identity}, \eqref{eq:deterministic_bound_G_ij^T} 
and \eqref{eq:condition_s_xy}.

Applying \eqref{eq:estimate_sum_a_h_xa_G_ab^(x)} to the second and third term in \eqref{eq:definition_C_x} and \eqref{eq:estimate_h_xy_stoch_dom} to the first term yields $\abs{C_x} \prec M^{-1/2}$.

To estimate $Y_x$ we start from \eqref{eq:expansion_Y_x} but \eqref{eq:estimate_h_xk^2G_k-k^x-x} is estimated differently.
Using the resolvent identity \eqref{eq:resolvent_identity1} twice we get
\begin{align*}
\left|\sum_k^{(x,-x)} h_{xk}^2 G_{k,-k}^{(x,-x)}\right| \prec  & \quad \sum_{k\neq -k}^{(x,-x)} s_{xk} \abs{G_{k,-k}} +\sum_{k=-k}^{(x,-x)} s_{xk} \abs{G_{kk}} \\
 & + \sum_k^{(x,-x)} s_{xk} \frac{\abs{G_{k,-x}^{(x)}G_{-x,-k}^{(x)}}}{\abs{G_{-x,-x}^{(x)}}} + \sum_k^{(x,-x)} s_{xk} \frac{\abs{G_{kx}G_{x,-k}}}{\abs{G_{xx}}} \prec \Lambda_o + M^{-1/2},
\end{align*}
where the last step follows similarly to \eqref{eq:aux_est_A_x_2} and \eqref{eq:aux_est_B_x_2}.
Combining this with the usage of \eqref{eq:deterministic_bound_G_ij^T} instead of Lemma \ref{lem:aux_estimate_G_ij} in \eqref{eq:bound_h_xk_G_kl^x-x_h_l-x_k_neq_l} yields $\abs{Y_x}\prec M^{-1/2} + \Lambda_o$.
We get $\abs{Z_x} \prec M^{-1/2}$ by similar adjustments of \eqref{eq:est_Z_x_varphi}.
This completes the proof of \eqref{eq:Upsilon_for_constant_eta}.

Before proving \eqref{eq:inequality_Lambda-} we show 
\begin{equation} 
\Lambda_g \leq \eta^{-1} \Lambda_-+ \tilde\epsilon 
\label{eq:lem_fixed_eta_aux_claim}
\end{equation} 
with some $\tilde\epsilon \prec M^{-1/4}$ uniformly for $z \in \{w\in \C; \Im w=\eta\}$. In case all of $x$, $-x$, $y$ and $-y$ are different it 
will be derived from the representation in \eqref{eq:exp_G_xy^(-x,-y)}. For the fourth term in \eqref{eq:exp_G_xy^(-x,-y)} we obtain 
\begin{align}
\frac{\abs{G_{x,-x}G_{-x,y}}}{\abs{G_{-x,-x}}}\leq &\Lambda_-\left|\sum_a^{(-x)}h_{-x,a}G_{ay}^{(-x)}\right| \leq \Lambda_-\left(\sum_a^{(-x)}\abs{h_{-x,a}}^2\right)^{1/2} \left(\sum_a^{(-x)} \abs{G_{ay}^{(-x)}}^2\right)^{1/2}
\nonumber \\
\leq & \Lambda_-\eta^{-1}+\eta^{-2}\left|\sum_a^{(-x)}(\abs{h_{-x,a}}^2-s_{-x,a})\right|^{1/2} \label{eq:estimate_third_term_G_xy}
\end{align}
by applying the resolvent identity \eqref{eq:resolvent_identity2} and inserting $s_{-x,a}$.
In the last step, we applied \eqref{eq:Ward_identity} and \eqref{eq:deterministic_bound_G_ij^T}. Note that similarly to \eqref{eq:est_sum_abs_h^2-s_G_ii} we conclude that the second term is dominated by $M^{-1/4}$.
For the third summand in \eqref{eq:exp_G_xy^(-x,-y)} we use the estimate
\[
\left| \frac{G_{x,-y}^{(-x)}G_{-y,y}^{(-x)}}{G_{-y,-y}^{(-x)}}\right| = \left|G_{xx}^{(-x)}\sum_a^{(x,-x)} h_{xa}G_{a,-y}^{(x,-x)}\right| \left| \sum_a^{(-y,-x)} h_{-y,a}G_{ay}^{(-y,-x)}\right| \prec M^{-1/2},
\]
where we used (C.2) in \cite{EJP2473} as in the proof of \eqref{eq:aux_est_B_x_2} for the first factor and \eqref{eq:estimate_sum_a_h_xa_G_ab^(x)} for the second factor.
The first summand in \eqref{eq:exp_G_xy^(-x,-y)} is bounded by $M^{-1/2}$ due to \eqref{eq:estimate_h_xy_stoch_dom} and \eqref{eq:deterministic_bound_G_ij^T}.
Using \eqref{eq:deterministic_bound_G_ij^T} instead of Lemma \ref{lem:aux_estimate_G_ij} in \eqref{eq:est_first_term_exp_G_xy} yields that the second term in \eqref{eq:exp_G_xy^(-x,-y)} is dominated by $M^{-1/2}$ as well. 

We denote the sum of the absolute values of the first three summands in \eqref{eq:exp_G_xy^(-x,-y)} and the second summand in \eqref{eq:estimate_third_term_G_xy} by $\tilde\epsilon_{xy}$ and set $\tilde\epsilon \defeq 
\sup_{x,y} \tilde\epsilon_{xy}$.  Then the above considerations show $\tilde\epsilon \prec M^{-1/4}$ 
in this case. If $x=-x$ or $y=-y$ then estimating $G_{xy}$ is easier. Thus, \eqref{eq:lem_fixed_eta_aux_claim} follows.

Without inserting $s_{-x,a}$ in \eqref{eq:estimate_third_term_G_xy} and instead using \eqref{eq:estimate_h_xy_stoch_dom} we see that the representation \eqref{eq:exp_G_xy^(-x,-y)} implies 
\eqref{eq:estimate_Lambda_g_eta_fixed}.

To prove \eqref{eq:inequality_Lambda-} we assume $x\neq -x$ and consider the expansion
\[G_{x,-x} = G_{xx}G_{-x,-x}^{(x)}\sum_{a\neq -a}^{(x,-x)} r_{xa} G_{a,-a} + G_{xx}G_{-x,-x}^{(x)}\sum_{a=-a}^{(x,-x)} r_{xa} G_{a,-a}- G_{xx}G_{-x,-x}^{(x)}h_{x,-x} + \e_x^2 - \e_x^3 -\e_x^4 . \]
Obviously, the absolute value of the first summand on the right-hand side is not bigger than $\eta^{-2}\Lambda_-$ and $\abs{\e_x^3} \leq \eta^{-1}\Lambda_g^2$.
We call the sum of the second and the third term on the right-hand side $\e_x^5$ and obtain $\abs{\e_x^5}\prec M^{-1/2}$ by \eqref{eq:estimate_h_xy_stoch_dom}. 
Similarly as before, we get $\abs{\e_x^2}\prec M^{-1/2}$ by using \eqref{eq:deterministic_bound_G_ij^T} instead of Lemma \ref{lem:aux_estimate_G_ij}.
An argument in the fashion of \eqref{eq:aux_est_B_x_2} yields $\abs{\e_x^4} \prec M^{-1/2}$.

Thus, by setting $\epsilon_x \defeq 2\eta^{-1} {\tilde\epsilon}^2 + \abs{\e_x^2}+ \abs{\e_x^4}+\abs{\e_x^5}$ 
and using \eqref{eq:lem_fixed_eta_aux_claim} we get 
\[ \abs{G_{x,-x}} \leq \eta^{-2}\Lambda_- +\eta^{-1}\Lambda_g^2+ \abs{\e_x^2}+ \abs{\e_x^4}+ \abs{\e_x^5} \leq \eta^{-2}\Lambda_- + 2\eta^{-3}\Lambda_-^2 +\epsilon_x. \]
Since $\epsilon_x \prec M^{-1/2}$ uniformly in $x$ the estimate \eqref{eq:inequality_Lambda-} follows from the definition $\epsilon\defeq \sup_{x} \epsilon_x$. 
\end{proof}

\subsection{Preliminary Bound on $\Lambda$} \label{sec:preliminary_bound}

In this section, we establish a deterministic bound on $\Lambda$. The proof will make essential use of the self-consistent equations in Lemma \ref{lem:self_consistent_eq}.
\begin{pro} \label{pro:fourfold_basic_bound_Lambda}
We have $\Lambda \prec M^{-\gamma/3}\Gamma^{-1}$ uniformly in $\mathbf S$.
\end{pro}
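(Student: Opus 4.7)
The plan is a continuity (bootstrap) argument along vertical lines in $\mathbf{S}$: fix $E \in [-10,10]$, start at $\eta = 10$, and decrease $\eta$ in small steps down to $\eta_E$, inverting the two self-consistent equations of Lemma \ref{lem:self_consistent_eq} at each step. Since $\abs{\partial_\eta G_{xy}(z)} \leq \eta^{-2} \leq M^2$ on $\mathbf{S}$, both $\Lambda$ and $\Gamma$ are Lipschitz in $\eta$ with polynomial constants, so it suffices to verify the bound on a polynomially fine lattice $\{E+\i\eta_k\}\subset \mathbf{S}$ and interpolate. The initialization at $\eta = 10$ is handled by Lemma \ref{lem:aux_estimates_fourfold}(ii): bootstrapping \eqref{eq:inequality_Lambda-} (noting $\eta^{-2} = 10^{-2}\ll 1$) yields $\Lambda_-\prec M^{-1/2}$, then \eqref{eq:estimate_Lambda_g_eta_fixed} gives $\Lambda_g\prec M^{-1/2}$, and finally \eqref{eq:Upsilon_for_constant_eta} combined with \eqref{eq:self_const_2} together with $\abs{m(E+10\i)}\leq 1-c$ yields $\Lambda_d\prec M^{-1/2}$. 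Since $\Gamma(E+10\i)$ is a constant, this is far stronger than $M^{-\gamma/3}/\Gamma$ at the initial point.

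For the inductive step, assume the target bound holds at $z_k$. Lipschitz continuity and the fine grid give the crude apriori $\Lambda(z_{k+1})\prec M^{-c}$ for some $c>0$; we encode this in an indicator $\varphi$ of a high probability event. Part (i) of Lemma \ref{lem:aux_estimates_fourfold} then provides
\[
\varphi \Lambda_o + \varphi\max_x\abs{\Upsilon_x} + \varphi\max_x\abs{\e_x} \;\prec\; \varphi \Lambda^2 + \sqrt{\frac{\Im m + \Lambda}{M\eta}}.
\]
I invert the first self-consistent equation \eqref{eq:self_const_2} by expanding $(v_x+m)^{-1} - m^{-1} = -v_x/m^2 + O(v_x^2)$, which turns \eqref{eq:self_const_2} into $(1-m^2S)v = -m^2\Upsilon + O(m^2\Lambda_d^2)$; applying $(1-m^2S)^{-1}$ and using the definition of $\Gamma_S$ yields
\[
\varphi \Lambda_d \;\leq\; C\,\Gamma_S\bigl(\varphi\max_x\abs{\Upsilon_x} + \varphi\Lambda_d^2\bigr).
\]
The equation \eqref{eq:self_averaging_G_x-x} reads $(1-m^2R)G_{\cdot,-\cdot} = \e$ componentwise, so the definition of $\Gamma_R$ gives $\varphi\Lambda_- \leq \Gamma_R\max_x\abs{\e_x}$. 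Combining these two inversions with the direct bound on $\Lambda_g$ and using $\Gamma = \max\{\Gamma_S,\Gamma_R\}$ yields the single inequality
\[
\varphi \Lambda \;\prec\; \Gamma\,\varphi\Lambda^2 + \Gamma\sqrt{\frac{\Im m}{M\eta}} + \Gamma\sqrt{\frac{\Lambda}{M\eta}}.
\]

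To close the bootstrap I now invoke the definition \eqref{eq:definition_eta_E} of the spectral domain $\mathbf{S}$: the two inequalities $(M\eta)^{-1}\leq M^{-\gamma}/\Gamma^3$ and $(M\eta)^{-1}\leq M^{-2\gamma}/(\Gamma^4\Im m)$ translate into
\[
\Gamma\sqrt{\frac{\Im m}{M\eta}} \;\leq\; \frac{M^{-\gamma}}{\Gamma}, \qquad \Gamma\sqrt{\frac{\Lambda}{M\eta}} \;\leq\; \frac{\Lambda^{1/2} M^{-\gamma/2}}{\Gamma^{1/2}},
\]
so $\varphi\Lambda \prec \Gamma\Lambda^2 + M^{-\gamma}/\Gamma + \Lambda^{1/2}M^{-\gamma/2}/\Gamma^{1/2}$. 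Under the apriori $\Lambda \prec M^{-c}$ the quadratic term $\Gamma\Lambda^2$ is negligible (since $\Gamma$ is polynomially bounded on $\mathbf{S}$), and Young's inequality $\Lambda^{1/2}M^{-\gamma/2}/\Gamma^{1/2} \leq \tfrac{1}{2}\Lambda + \tfrac{1}{2}M^{-\gamma}/\Gamma$ absorbs the last term, leaving $\varphi\Lambda \prec M^{-\gamma}/\Gamma$. Relaxing the exponent from $\gamma$ to $\gamma/3$ provides the safety margin needed to propagate the bootstrap past the next grid point (so that Lipschitz slack does not violate the apriori), giving $\Lambda\prec M^{-\gamma/3}\Gamma^{-1}$ at $z_{k+1}$ and closing the induction.

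The main obstacle, and the genuinely new point compared to \cite{EJP2473}, is that the two self-consistent equations are coupled: the error $\Upsilon_x$ in \eqref{eq:self_const_2} contains off-diagonal blocks like $B_x$ and $Y_x$ whose natural control involves $\Lambda_-$, while the error $\e_x$ in \eqref{eq:self_averaging_G_x-x} depends on $\Lambda_d$ and $\Lambda_g$. One must therefore simultaneously invert both equations and verify that the single control parameter $\Gamma = \max\{\Gamma_S,\Gamma_R\}$ controls the combined stability. A secondary delicate point is the initialization: the equation for $G_{x,-x}$ is not self-inverting at fixed $\eta=10$, and one must bootstrap the elementary inequality \eqref{eq:inequality_Lambda-} (whose linear coefficient $\eta^{-2}$ is strictly less than $1$ only for $\eta$ of order one) before the self-consistent machinery can be started.
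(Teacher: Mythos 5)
Your bootstrap plan matches the paper's strategy: the paper proves Lemma \ref{Lem:apriori_estimate_Lambda_with_char_func} (a self-improving estimate under an indicator), Lemma \ref{Lem:estimate_Im_z=2} (initialization at a fixed large $\eta$ by first bootstrapping \eqref{eq:inequality_Lambda-} for $\Lambda_-$), and then invokes the continuity argument of Proposition 5.3 in \cite{EJP2473}, the only added observation being that $\Gamma = \max\{\Gamma_S,\Gamma_R\}$ is still Lipschitz. Your initialization and the simultaneous inversion of the two self-consistent equations are correct. However, there is a genuine gap in how you close the self-consistent inequality. You pass to the $\Gamma$-independent apriori ``$\Lambda \prec M^{-c}$ for some $c>0$'' and then assert that $\Gamma\Lambda^2$ is ``negligible since $\Gamma$ is polynomially bounded on $\mathbf S$.'' This does not follow. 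The inductive hypothesis $\Lambda(z_k) \prec M^{-\gamma/3}\Gamma^{-1}(z_k)$, after using that $\Gamma^{-1}$ is bounded above by a constant (first estimate in \eqref{eq:estimate_Gamma}), only yields $c$ slightly below $\gamma/3$, while the definition of $\mathbf S$ only guarantees $\Gamma \lesssim M^{(1-\gamma)/3}$. At these extremes $\Gamma\Lambda^2 \sim M^{(1-\gamma)/3 - 2\gamma/3}$ whereas the target $M^{-\gamma'}\Gamma^{-1}$ can be as small as $M^{-\gamma' - (1-\gamma)/3}$; for any $\gamma' \geq \gamma/3$ and $\gamma<1/2$ the quadratic term is \emph{not} dominated, so it cannot be dropped and cannot be absorbed into the left-hand side by your argument.

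The remedy is exactly what the paper's Lemma \ref{Lem:apriori_estimate_Lambda_with_char_func} encodes: do not discard the $\Gamma^{-1}$ factor in the apriori. Take $\varphi \defeq \mathbf 1(\Lambda \leq M^{-\gamma/4}\Gamma^{-1})$; then $\varphi\,\Gamma\Lambda^2 \leq M^{-\gamma/4}\varphi\Lambda$, which is absorbed into the left-hand side trivially, and the remaining two terms are handled by the definition of $\mathbf S$ and Young's inequality as you describe. The $\Gamma$-dependent apriori \emph{is} available in your bootstrap --- the inductive hypothesis gives $\Lambda(z_k) \prec M^{-\gamma/3}\Gamma^{-1}(z_k)$, and since both $\Lambda$ and $\Gamma$ are Lipschitz with polynomially bounded constants while $M^{-\gamma/3}\Gamma^{-1}$ is bounded below by a power of $M$, a polynomially fine grid gives $\Lambda(z_{k+1}) \leq M^{-\gamma/4}\Gamma^{-1}(z_{k+1})$ with high probability --- but it must be carried in this form, not weakened to a flat $M^{-c}$, before the quadratic term is treated.
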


\noindent Once we have proved the two subsequent lemmas the proof of Proposition \ref{pro:fourfold_basic_bound_Lambda} follows exactly as in \cite{EJP2473}. 

\begin{lem} \label{Lem:apriori_estimate_Lambda_with_char_func}
We have the estimate $ \mathbf 1(\Lambda \leq M^{-\gamma/4}\Gamma^{-1}) \Lambda \prec M^{-\gamma/2}\Gamma^{-1}$
uniformly in $\mathbf S$.
\end{lem}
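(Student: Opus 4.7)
The plan is to exploit both self-consistent equations in Lemma \ref{lem:self_consistent_eq} after plugging in the indicator $\varphi \defeq \mathbf 1(\Lambda \leq M^{-\gamma/4}\Gamma^{-1})$. Note that on the event $\{\varphi=1\}$ we have $\varphi \Lambda \leq M^{-\gamma/4}\Gamma^{-1} \leq C M^{-\gamma/4}$ thanks to $\Gamma \geq c$ from \eqref{eq:estimate_Gamma}, so the hypothesis $\varphi \Lambda \prec M^{-c}$ of Lemma \ref{lem:aux_estimates_fourfold}(i) is satisfied. Consequently
\[\varphi(\abs{\Upsilon_x}+\abs{\e_x}) \prec \varphi\Lambda^2 + \sqrt{\frac{\Im m + \Lambda}{M\eta}} \leq \frac{M^{-\gamma/2}}{\Gamma^2} + \sqrt{\frac{\Im m}{M\eta}} + \sqrt{\frac{\Lambda}{M\eta}}.\]
The definition of $\mathbf S$ in \eqref{eq:definition_eta_E} gives $(M\eta)^{-1} \Im m \leq M^{-2\gamma}\Gamma^{-4}$ and, combined with $\varphi\Lambda \leq M^{-\gamma/4}\Gamma^{-1}$, also $(M\eta)^{-1}\varphi\Lambda \leq M^{-5\gamma/4}\Gamma^{-4}$. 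Therefore $\varphi(\abs{\Upsilon_x}+\abs{\e_x}) \prec M^{-\gamma/2}\Gamma^{-2}$ uniformly in $x$ and $z \in \mathbf S$, and \eqref{eq:aux_est_lambda_o_Ups_x_phi} additionally yields $\varphi \Lambda_g \prec M^{-\gamma/2}\Gamma^{-1}$.

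Next I would linearise the self-consistent equation \eqref{eq:self_const_2}. Writing $v_x \defeq G_{xx}-m$ and using $(v_x+m)^{-1}-m^{-1} = -v_x/(m^2+mv_x)$, multiplying through by $m^2+mv_x$ and rearranging produces
\[(1-m^2 S)v_x = -m^2 \Upsilon_x + m v_x\Bigl(\sum_a s_{xa} v_a - \Upsilon_x\Bigr).\]
Under $\varphi$ the quadratic remainder is $\prec M^{-\gamma/4}\Gamma^{-1}\cdot(M^{-\gamma/4}\Gamma^{-1} + M^{-\gamma/2}\Gamma^{-2}) \prec M^{-\gamma/2}\Gamma^{-2}$, so the whole right-hand side is $\prec M^{-\gamma/2}\Gamma^{-2}$ in $\ell^\infty$. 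Inverting via $\Gamma_S = \norm{(1-m^2 S)^{-1}}_{\ell^\infty\to\ell^\infty} \leq \Gamma$ gives
\[\varphi \Lambda_d \prec \Gamma \cdot M^{-\gamma/2}\Gamma^{-2} = M^{-\gamma/2}\Gamma^{-1}.\]

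The analogous argument for $\Lambda_-$ is direct: the equation \eqref{eq:self_averaging_G_x-x} already reads $(1-m^2 R)G_{x,-x} = \e_x$ (viewing $G_{\cdot,-\cdot}$ as a vector indexed by $x\neq -x$), so applying $(1-m^2 R)^{-1}$ in $\ell^\infty$ and using $\Gamma_R \leq \Gamma$ gives
\[\varphi \Lambda_- \prec \Gamma_R\cdot M^{-\gamma/2}\Gamma^{-2} \leq M^{-\gamma/2}\Gamma^{-1}.\]
Combining the bounds on $\varphi\Lambda_d$, $\varphi\Lambda_g$ and $\varphi\Lambda_-$ yields $\varphi\Lambda \prec M^{-\gamma/2}\Gamma^{-1}$, which is the claim.

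The main difficulty I anticipate is the bookkeeping in the linearisation step: one must verify that the nonlinear remainder $m v_x(\sum_a s_{xa}v_a - \Upsilon_x)$ can truly be absorbed into the leading $m^2\Upsilon_x$ term without degrading the size beyond $M^{-\gamma/2}\Gamma^{-2}$, which relies crucially on the quantitative gap between the a priori bound $M^{-\gamma/4}\Gamma^{-1}$ enforced by $\varphi$ and the target $M^{-\gamma/2}\Gamma^{-1}$, as well as on the lower bound $\Gamma \geq c$. Everything else is essentially a transcription of the $\Gamma_S$-inversion argument in \cite{EJP2473}, now supplemented by the parallel $\Gamma_R$-inversion for the new counterdiagonal self-consistent equation.
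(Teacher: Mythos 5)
Your proposal is correct and follows essentially the same route as the paper: bound $\varphi(\abs{\Upsilon_x}+\abs{\e_x})$ and $\varphi\Lambda_g$ via Lemma \ref{lem:aux_estimates_fourfold}(i), invert the linearised equation \eqref{eq:self_const_2} with $\Gamma_S$ (the step the paper delegates to Lemma 5.4 of \cite{EJP2473}, which you spell out explicitly), invert \eqref{eq:self_averaging_G_x-x} with $\Gamma_R$, and close with the definition of $\mathbf S$ and the gap between $M^{-\gamma/4}$ and $M^{-\gamma/2}$. The only cosmetic point is that the $\Lambda$ under the square root in \eqref{eq:aux_est_lambda_o_Ups_x_phi} does not carry $\varphi$; multiplying the stochastic-domination bound once more by $\varphi$ (using $\varphi^2=\varphi$) turns it into $\sqrt{(\Im m+\varphi\Lambda)/(M\eta)}$, after which your deterministic estimates apply as written.
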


\begin{proof}
In this proof, we will use Lemma \ref{lem:aux_estimates_fourfold} (i) several times with $\varphi \defeq \mathbf 1(\Lambda \leq M^{-\gamma/4}\Gamma^{-1})$. 
Following the proof of Lemma 5.4 in \cite{EJP2473} we get \[ \varphi\Lambda_d \prec \varphi\Gamma_S\left(\Lambda^2+\sqrt\frac{\Im m +\Lambda}{M\eta}\right)\]
since $\abs{\Upsilon_x} \prec \varphi\Lambda^2 +  \sqrt{(\Im m +\Lambda)/M\eta}$ by \eqref{eq:aux_est_lambda_o_Ups_x_phi}. Moreover, because of \eqref{eq:aux_est_lambda_o_Ups_x_phi} and the first estimate 
in \eqref{eq:estimate_Gamma} we have 
\[ \varphi \Lambda_g \prec \varphi \Gamma_S\left(\Lambda^2 + \sqrt\frac{\Im m +\Lambda}{M\eta}\right).\]
Using \eqref{eq:self_averaging_G_x-x} we get
\[ \sum_{y\neq -y}  (1-m^2R_{xy})G_{y,-y} = \e_x\]
for all $x \neq -x$. Inverting $(1-m^2R)$ and using \eqref{eq:aux_est_mathcal_E_phi} yield
\begin{equation}
 \varphi \Lambda_-=\max_{x\neq -x} \varphi\abs{G_{x,-x}} \leq \Gamma_R \max_{x\neq -x} \varphi\abs{\e_x} \prec \varphi\Gamma_R \left( \Lambda^2 + \sqrt\frac{\Im m +\Lambda}{M\eta}\right).
\label{eq:appearance_Gamma_R}
\end{equation}
In total, we get 
\[ \varphi \Lambda \prec \varphi\Gamma\left( \Lambda^2 + \sqrt\frac{\Im m +\Lambda}{M\eta}\right)\]
as in (5.18) of \cite{EJP2473}. Employing the definitions of $\mathbf S$ and $\varphi$ as in the proof of Lemma 5.4 in \cite{EJP2473} establishes the claim.
\end{proof}

When estimating the off-diagonal terms $G_{x,-x}$ in \eqref{eq:appearance_Gamma_R} the control parameter $\Gamma_R$ appears naturally as the operator norm of $(1-m^2R)^{-1}$ in the same way as $\Gamma_S$ 
(which is called $\Gamma$ in \cite{EJP2473}) is used in \cite{EJP2473} to bound the differences $G_{xx}-m$. 

\begin{lem}\label{Lem:estimate_Im_z=2}
We have $\Lambda \prec M^{-1/2}$ uniformly in $z\in [-10,10]+2\mathrm i$. 
\end{lem}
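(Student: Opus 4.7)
The plan is to exploit the fixed-$\eta$ estimates from part (ii) of Lemma \ref{lem:aux_estimates_fourfold} at $\eta = 2$, where the deterministic bound $\abs{G_{ij}^{(\mathbb T)}} \leq \eta^{-1} = 1/2$ is small enough to absorb quadratic terms in $\Lambda_-$ into linear ones. Note that in contrast to the rest of the proof of Theorem \ref{thm:Main_Fourfold}, we cannot invoke the stability of $(1-m^2 S)^{-1}$ or $(1-m^2 R)^{-1}$ in a strong way; instead the spectral gap comes purely from $\eta$ being of order one.

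First I would bound $\Lambda_-$. Substituting $\eta = 2$ into \eqref{eq:inequality_Lambda-} gives
\[ \Lambda_- \leq \tfrac{1}{4}\Lambda_- + \tfrac{1}{4}\Lambda_-^2 + \epsilon \]
with $\epsilon \prec M^{-1/2}$. The deterministic a~priori bound $\Lambda_- \leq \eta^{-1} = 1/2$ from \eqref{eq:deterministic_bound_G_ij^T} lets me replace one factor of $\Lambda_-$ in the quadratic term, yielding $\Lambda_- \leq \tfrac{3}{8}\Lambda_- + \epsilon$, hence $\Lambda_- \leq \tfrac{8}{5}\epsilon \prec M^{-1/2}$. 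Estimate \eqref{eq:estimate_Lambda_g_eta_fixed} then immediately gives $\Lambda_g \prec M^{-1/2} + \Lambda_- \prec M^{-1/2}$, so $\Lambda_o \prec M^{-1/2}$.

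Next I would bound $\Lambda_d$ using the self-consistent equation \eqref{eq:self_const_2}. Applying the algebraic identity $(v_x + m)^{-1} - m^{-1} = -v_x/(m G_{xx})$ rewrites it as
\[ v_x = m\, G_{xx}\Bigl(\sum_a s_{xa} v_a - \Upsilon_x\Bigr). \]
At $\eta = 2$, the third estimate in \eqref{eq:estimate_m} gives $\abs{m} \leq \eta^{-1} = 1/2$, and \eqref{eq:deterministic_bound_G_ij^T} gives $\abs{G_{xx}} \leq 1/2$, so $\abs{m\, G_{xx}} \leq 1/4$. Meanwhile, estimate \eqref{eq:Upsilon_for_constant_eta} together with $\abs{h_{xx}} \prec s_{xx}^{1/2} \leq M^{-1/2}$ from \eqref{eq:estimate_h_xy_stoch_dom} shows $\abs{\Upsilon_x} \prec M^{-1/2} + \Lambda_o \prec M^{-1/2}$ uniformly in $x$. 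Using the row-stochasticity $\sum_a s_{xa}\abs{v_a} \leq \Lambda_d$ and taking the maximum over the finite index set (which is harmless for $\prec$), I obtain $\Lambda_d \leq \tfrac{1}{4}\Lambda_d + \tfrac{1}{4}\max_x \abs{\Upsilon_x}$, whence $\Lambda_d \prec M^{-1/2}$.

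Combining $\Lambda_d, \Lambda_o \prec M^{-1/2}$ yields $\Lambda \prec M^{-1/2}$ uniformly in $z \in [-10,10] + 2\mathrm i$. There is no genuine obstacle here; the only delicate point is bookkeeping the explicit constants so that both the coefficient $\eta^{-2}$ arising in \eqref{eq:inequality_Lambda-} and the coefficient $m G_{xx}$ appearing in the rewritten self-consistent equation are strictly contractive with room to spare to absorb the quadratic term. The specific choice $\eta = 2$ is made precisely to guarantee this, providing a clean base case for the iterative improvement in section 5.4.
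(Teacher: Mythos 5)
Your argument is correct and follows essentially the same route as the paper: you first extract $\Lambda_- \prec M^{-1/2}$ from \eqref{eq:inequality_Lambda-} at $\eta = 2$ using the deterministic a~priori bound to absorb the quadratic term, then bound $\Lambda_g$ and $\Upsilon_x$ via \eqref{eq:estimate_Lambda_g_eta_fixed} and \eqref{eq:Upsilon_for_constant_eta}, and finally close the $\Lambda_d$ estimate through the contraction $\abs{mG_{xx}} \leq 1/4$ in the self-consistent equation. The only difference is cosmetic: where the paper cites the reasoning of the analogous Lemma 5.5 in \cite{EJP2473} for the diagonal step, you spell out that contraction argument explicitly.
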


\begin{proof}
We use the bounds $\abs{G_{ij}^{(\mathbb T)}}\leq 1/\eta =1/2$ from \eqref{eq:deterministic_bound_G_ij^T} and $\abs{m}\leq 1/\eta=1/2$
from the third estimate in \eqref{eq:estimate_m}. In particular, they imply $\abs{v_x}=\abs{G_{xx}-m}\leq 1$ and $\abs{m^{-1}}\geq 2$.

By \eqref{eq:inequality_Lambda-} with $\eta=2$ we have \[\Lambda_- \leq  \frac{8}{5}\epsilon \prec M^{-1/2}.\]
Thus, \eqref{eq:estimate_Lambda_g_eta_fixed} implies $\Lambda_g \prec M^{-1/2}$. Hence, $\Lambda_o \prec M^{-1/2}$ and therefore $\abs{\Upsilon_x}\prec M^{-1/2}$ by \eqref{eq:Upsilon_for_constant_eta}. 

Following now the reasoning of the proof of Lemma 5.5 in \cite{EJP2473} we get $\Lambda \prec M^{-1/2}$.
\end{proof}

\begin{proof}[Proof of Proposition \ref{pro:fourfold_basic_bound_Lambda}]
The maximum of the two Lipschitz-continuous functions $\Gamma_S$ and $\Gamma_R$ is a Lipschitz-continuous function whose Lipschitz-constant is not bigger than the maximum of the original Lipschitz-constants.
Therefore, Proposition \ref{pro:fourfold_basic_bound_Lambda} can be proved exactly in the same way as Proposition 5.3 in \cite{EJP2473}.
\end{proof}

\subsection{Proof of the Main Result}

In the whole section let $\Psi$ be a deterministic control parameter satisfying
\begin{equation}
 cM^{-1/2} \leq \Psi \leq M^{-\gamma/3}\Gamma^{-1}.
\label{eq:definition_deterministic_control}
\end{equation}

The following proposition states that such deterministic bound on $\Lambda$ can always be improved. This self-improving mechanism is also present  
in Proposition 5.6 of \cite{EJP2473}.
\begin{pro} 
\label{pro:fourfold_iteration_step}
Let $\Psi$ satisfy \eqref{eq:definition_deterministic_control} and fix $\eps \in (0,\gamma/3)$. If 
$\Lambda \prec \Psi$ 
then $\Lambda \prec F(\Psi)$
with $$F(\Psi)\defeq M^{-\eps}\Psi +\sqrt{\frac{\Im m}{M\eta}}+\frac{M^\eps}{M\eta}.$$
\end{pro}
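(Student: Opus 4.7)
The proof is the iteration step analogous to Proposition 5.6 in \cite{EJP2473}, but it must now be executed in parallel on the diagonal quantity $v_x = G_{xx}-m$ and on the new counter-diagonal quantity $G_{x,-x}$ via the two self-consistent equations \eqref{eq:self_const_2} and \eqref{eq:self_averaging_G_x-x}. As a preliminary step, I note that the hypothesis $\Lambda \prec \Psi$ together with $\Psi \leq M^{-\gamma/3}\Gamma^{-1} \leq C M^{-\gamma/3}$ lets us take $\varphi\equiv 1$ in Lemma \ref{lem:aux_estimates_fourfold}(i); combined with the elementary bound $\sqrt{\Lambda/(M\eta)} \leq \Lambda + 1/(M\eta)$, this produces the uniform pointwise estimate
\[ \max\{|\Upsilon_x|,|\e_x|\} \prec \Psi^2 + \sqrt{\tfrac{\Im m}{M\eta}} + \tfrac{1}{M\eta}. \]

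\textbf{Diagonal entries.} Linearising the right-hand side of \eqref{eq:self_const_2} through $1/(v_x+m)-1/m = -v_x/m^2 + v_x^2/(m^2(v_x+m))$ gives the vector identity $(1-m^2 S) v = -m^2 \Upsilon + E$ with $E = O(\Lambda^2)$ componentwise, hence $v = -m^2(1-m^2 S)^{-1}\Upsilon + (1-m^2 S)^{-1}E$. The remainder is controlled in $\ell^\infty$ by $\Gamma_S \Lambda^2 \prec \Gamma \Psi^2 \leq M^{-\gamma/3}\Psi \leq M^{-\eps}\Psi$ using $\Psi\Gamma \leq M^{-\gamma/3}$. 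For the main term I split $\Upsilon_y$ into its algebraic pieces $A_y, B_y, C_y, Y_y$ (quadratic in Green-function off-diagonals, plus a single $h_{y,-y}$ factor) and the true fluctuation pieces $h_{yy}$ and $Z_y$. The algebraic pieces combine to $\prec F(\Psi)$ exactly as in the proof of Lemma \ref{Lem:apriori_estimate_Lambda_with_char_func}, the otherwise problematic $\Gamma_S\sqrt{\Im m/(M\eta)}$ contribution being absorbed by the very definition of $\mathbf S$ in \eqref{eq:definition_eta_E}. For the fluctuation pieces I apply the first two estimates of \eqref{eq:fluct_averag1} from Theorem \ref{thm:fluct_averag1} with weight $T = m^2(1-m^2 S)^{-1}$, which commutes with $S$: this replaces the naive $\Gamma_S\sqrt{(\Im m+\Lambda)/(M\eta)}$ by $\Gamma_S\Psi^2 \prec M^{-\eps}\Psi$, producing the desired $\Lambda_d \prec F(\Psi)$.

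\textbf{Counter-diagonal and generic off-diagonal entries.} Equation \eqref{eq:self_averaging_G_x-x} rewrites as $(1-m^2 R) G_{\cdot,-\cdot} = \e$ on the index set $\{y:y\neq -y\}$, so that $G_{x,-x} = \sum_y [(1-m^2R)^{-1}]_{xy}\, \e_y$, introducing the factor $\Gamma_R \leq \Gamma$. The algebraic pieces $\e_y^1,\e_y^3,\e_y^4$ are controlled pointwise by Lemma \ref{lem:aux_estimates_fourfold}(i) and absorbed into $F(\Psi)$ just as in the diagonal case, while the genuinely fluctuating $\e_y^2$ is controlled by the two new bounds (third estimate in \eqref{eq:fluct_averag1} and estimate \eqref{eq:fluct_averag2_G_k-k}) applied with the $R$-commuting weight $T = m^2(1-m^2R)^{-1}$; this yields $\Lambda_- \prec F(\Psi)$. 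The generic off-diagonal $\Lambda_g$ is handled directly from \eqref{eq:aux_est_lambda_o_Ups_x_phi} (absorbing $\sqrt{\Psi/(M\eta)}$ by AM-GM), giving $\Lambda_g \prec F(\Psi)$. Since $\Lambda = \max(\Lambda_d,\Lambda_g,\Lambda_-)$, this completes the argument.

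\textbf{Main obstacle.} The substantive novelty relative to \cite{EJP2473} is that, unlike the conventional off-diagonals, the counter-diagonal entries $G_{x,-x}$ must be improved through their own self-consistent equation, so the whole "stability plus fluctuation averaging" machinery has to be reinstalled a second time with $R$ and $\Gamma_R$ in place of $S$ and $\Gamma_S$. The difficulty is to verify that the new error $\e_x$ decomposes into pieces whose pointwise bounds, combined with an averaging against the $R$-commuting weight $(1-m^2 R)^{-1}$, produce the same $F(\Psi)$ emerging from the diagonal analysis. This is precisely what forces the use of $\Gamma = \max(\Gamma_S,\Gamma_R)$ in the definition of $\mathbf S$ and what makes the third estimate of \eqref{eq:fluct_averag1} and \eqref{eq:fluct_averag2_G_k-k} necessary inputs; once those are granted, both improvements proceed in a single iteration.
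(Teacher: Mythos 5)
Your high-level strategy --- running the improvement in parallel on $v_x$ and $G_{x,-x}$ through the two self-consistent equations, with fluctuation averaging supplying the gain --- agrees with the paper. The execution, however, diverges at a structural point, and the divergence is not innocuous. The paper never inverts $(1-m^2S)$ or $(1-m^2R)$ at this stage. It applies the fluctuation averaging \eqref{eq:fluct_averag2} with the weight $T=S$ to $\sum_a s_{xa}v_a$ and \eqref{eq:fluct_averag2_G_k-k} with $T=R$ to $\sum_{a\neq -a} r_{xa}G_{a,-a}$, which appear on the right-hand sides of the self-consistent equations, and bounds $\Upsilon_x$ and $\e_x$ pointwise. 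The crucial structural payoff is that the pointwise error, in particular $\sqrt{(\Im m+\Lambda)/(M\eta)}$, then appears \emph{without} a prefactor $\Gamma$, so after Young's inequality one lands in $F(\Psi)$.

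Your route has two concrete gaps. First, the claimed weight $T=m^2(1-m^2S)^{-1}$ (and likewise $m^2(1-m^2R)^{-1}$) does not satisfy \eqref{eq:def_weight_averaging}: the diagonal of $(1-m^2S)^{-1}$ is of order one, not $M^{-1}$, and its row sums are of order $\Gamma_S$, not one, so Theorem \ref{thm:fluct_averag1} is not applicable with that $T$. Moreover the theorem bounds averages of $\F_k G_{kk}$, $\F_k/G_{kk}$, $\F_k G_{k,-k}$, not of the composite error pieces such as $\e_y^2$. Second, the assertion that $\Gamma_S\sqrt{\Im m/(M\eta)}$ is ``absorbed by the definition of $\mathbf S$'' into $F(\Psi)$ is false. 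On $\mathbf S$, \eqref{eq:definition_eta_E} only yields $\Gamma\sqrt{\Im m/(M\eta)}\leq M^{-\gamma}/\Gamma\leq M^{-\gamma}/c$, while $F(\Psi)$ can be as small as $cM^{-\eps-1/2}$ (taking $\Psi\sim M^{-1/2}$); since $\gamma<1/2$ and $\eps>0$, we have $M^{-\gamma}\gg M^{-\eps-1/2}$, and comparison with the other two terms of $F(\Psi)$ fails as well. So the $\Gamma$ factor on the square-root term genuinely destroys the self-improvement. Both issues vanish if you follow the paper's route and invoke the fluctuation averaging directly on the sums in the self-consistent equations, before any inversion.
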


\begin{proof}
We will apply the results of Lemma \ref{lem:aux_estimates_fourfold} (i) with $\varphi=1$.
Using \eqref{eq:aux_est_lambda_o_Ups_x_phi} we get
\begin{equation}
\Lambda_g + \abs{\Upsilon_x} \prec \Lambda^2+\sqrt{\frac{\Im m +\Lambda}{M\eta}} \prec \Gamma\Psi^2+\sqrt{\frac{\Im m +\Psi}{M \eta}}
\label{eq:proof_iteration_Lambda_o_Upsilon_bound}
\end{equation}
because of the first estimate in \eqref{eq:estimate_Gamma}.
The self-consistent equation \eqref{eq:self_averaging_G_x-x} for $G_{x,-x}$ implies the estimate
\begin{equation}
\abs{G_{x,-x}} \leq \abs{m^2}\left|\sum_{a\neq -a} (\E h_{xa}^2) G_{a,-a}\right| + \abs{\e_x} \prec \Gamma\Psi^2 + \sqrt{\frac{\Im m +\Psi}{M \eta}}
\end{equation}
which holds uniformly in $x$. Here, we applied the fluctuation averaging \eqref{eq:fluct_averag2_G_k-k} for $G_{x,-x}$ with $t_{xa}=\E h_{xa}^2$ and \eqref{eq:estimate_m} 
to the first summand, $\abs{\E h_{xy}^2}\leq M^{-1}$, Lemma \ref{lem:aux_estimate_G_ij} and \eqref{eq:estimate_Gamma} to the second summand and \eqref{eq:aux_est_mathcal_E_phi} to $\abs{\e_x}$ 
and employed $\Gamma_R \leq \Gamma$ and \eqref{eq:estimate_Gamma} afterwards. 

Starting with these estimates the reasoning in the proof of Proposition 5.6 in \cite{EJP2473} yields
\[\Lambda \prec \Gamma \Psi^2 + \sqrt\frac{\Im m + \Lambda}{M\eta}.\]
The claim follows from applying Young's inequality and the condition $\Psi \leq M^{-\gamma/3}\Gamma^{-1}$ to the right-hand side of the previous estimate.
\end{proof}

In the following lemma we use the notation $[v]$ for the mean of a vector $v=(v_i)_i \in \C^N$, i.e.,
$$ [v]=\frac 1 N \sum_i v_i.$$

\begin{lem} \label{lem:fluctuation_averaging_fourfold}
If $\Psi$ is a deterministic control parameter such that $\Lambda \prec \Psi$ then we have $\left[\Upsilon\right]\in O_\prec(\Psi^2)$.  
\end{lem}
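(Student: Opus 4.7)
The plan is to split $[\Upsilon]$ according to the summands of $\Upsilon_x$ and control each averaged piece separately. The $x = -x$ contribution consists of at most two terms; each is $\prec \Psi$ by Lemma \ref{lem:aux_estimates_fourfold}(i), so it averages to $\prec N^{-1}\Psi \leq M^{-1} \prec \Psi^2$. For $x \neq -x$ I write $[\Upsilon] = [h_{\cdot\cdot}] + [A] + [B] - [C] - [Y] - [Z]$ and bound each average.

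Four of these averages are controlled \emph{pointwise} once the a priori bound $\Lambda \prec \Psi$ is invoked. Since the $h_{xx}$ are independent centered with variance at most $M^{-1}$, estimate (C.1) of \cite{EJP2473} gives $\abs{[h_{\cdot\cdot}]} \prec (NM)^{-1/2} \leq M^{-1} \prec \Psi^2$. The estimate \eqref{eq:estimate_A_x} and its analogue for $B_x$ yield $\abs{A_x}, \abs{B_x} \prec M^{-1} + \Lambda_o^2 \prec \Psi^2$, hence $\abs{[A]}, \abs{[B]} \prec \Psi^2$. For $C_x$ I use the representation \eqref{eq:second_representation_C_x}: the first term $(\abs{h_{x,-x}}^2 - s_{-x,x})G_{-x,-x}^{(x)}$ is $\prec M^{-1}$ pointwise by $\abs{h_{x,-x}}^2 \prec M^{-1}$ and $\abs{G_{-x,-x}^{(x)}} \prec 1$, while the remaining two summands are of the form $(G_{x,-x}/G_{xx})h_{-x,x}$ and are bounded by $\Lambda_- \cdot M^{-1/2} \prec \Psi \cdot \Psi = \Psi^2$, giving $\abs{C_x} \prec \Psi^2$ pointwise and hence $\abs{[C]} \prec \Psi^2$.

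The remaining two averages $[Z]$ and $[Y]$ require genuine cancellation. For $[Z]$, each summand is already an $\F_x$-fluctuation, so after rewriting $Z_x$ via Schur's formula as a linear combination of $\F_x(1/G_{xx})$ and pointwise $O_\prec(\Psi^2)$ remainders produced by the expansion \eqref{eq:expansion_expectation_Schur}, Theorem \ref{thm:fluct_averag1} with constant weights $t_{ik} = N^{-1}$ yields $\abs{[Z]} \prec \Psi^2$. For $[Y]$ I rewrite $Y_x = G_{-x,-x}^{(x)}\abs{U_x}^2$ with $U_x \defeq \sum_{a,c}^{(x,-x)} h_{xa}G_{ac}^{(x,-x)} h_{c,-x}$; the two factors in \eqref{eq:expansion_Y_x} are complex conjugates of each other by the fourfold symmetry. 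Decompose $U_x = \E_x U_x + \F_x U_x$. The partial expectation $\E_x U_x = \sum_a^{(x,-x)} s_{xa} G_{a,-a}^{(x,-x)}$ (only the pairing $c=-a$ survives because $h_{c,-x} = \bar h_{xa}$ when $c=-a$) has modulus $\prec \Lambda_- + M^{-1} \prec \Psi$ pointwise, so its square contributes $\prec \Psi^2$; the fluctuating part $\F_x U_x$ is controlled by a quadratic large deviation bound and the cross-term, together with $\abs{\F_x U_x}^2$, is averaged over $x$ using a variant of Theorem \ref{thm:fluct_averag1} applied to the resulting sums.

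The principal obstacle is the $[Y]$ step: the pointwise bound $\abs{Y_x} \prec (\Im m + \Lambda)/(M\eta)$ can exceed $\Psi^2$ on the spectral domain, so the gain must come from cancellation when averaging over $x$ rather than from estimating each summand. This mechanism is absent from \cite{EJP2473} because $Y_x$ has no counterpart without the fourfold symmetry; its control hinges on the pairing structure forced by that symmetry, which produces the explicit $\E_x U_x$ in terms of the off-diagonal entries $G_{a,-a}^{(x,-x)}$ already bounded by $\Lambda_-$.
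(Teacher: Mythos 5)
Your decomposition and the pointwise bounds for $[h]$, $[A]$, $[B]$, $[C]$ are fine (your pointwise bound $\abs{C_x}\prec\Psi^2$ is a legitimate alternative to the paper's bookkeeping), but the treatment of $[Y]$ contains a genuine gap, and the same gap re-enters your $[Z]$ step, since $\F_x Y_x$ is one of the ``remainders'' produced by the expansion \eqref{eq:expansion_expectation_Schur}. First, the identity $Y_x=G^{(x)}_{-x,-x}\abs{U_x}^2$ is false: the two bilinear factors in \eqref{eq:expansion_Y_x} are not complex conjugates of each other, because $G(z)$ is not Hermitian for $\eta>0$ (the fourfold symmetry only gives $G_{lb}=G_{-b,-l}$, not $G_{lb}=\overline{G_{bl}}$); relatedly, the surviving pairing in $\E_x U_x$ produces $r_{xa}=\E h_{xa}^2$ rather than $s_{xa}$, since $h_{-a,-x}=h_{xa}$. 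Second, and more importantly, your diagnosis that no pointwise bound $\abs{Y_x}\prec\Psi^2$ is available and that the gain must come from cancellation in the average over $x$ is incorrect, and the mechanism you invoke for that cancellation is not available: Theorem \ref{thm:fluct_averag1} applies to the specific quantities $\F_k(1/G_{kk})$, $\F_k G_{kk}$, $\F_k G_{k,-k}$ with deterministic weights, not to averages of quartic expressions such as $N^{-1}\sum_x G^{(x)}_{-x,-x}\abs{\F_x U_x}^2$, nor to cross terms carrying the random weight $\overline{\E_x U_x}$; a ``variant'' covering these would require redoing the whole expansion machinery of section \ref{sec:fluct_averag}, which you do not provide (and for the nonnegative-type term $\abs{\F_x U_x}^2$ no averaging gain can be expected in any case).

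The correct route --- and the paper's --- is pointwise. Under $\Lambda\prec\Psi$ each bilinear factor in \eqref{eq:expansion_Y_x} is $\prec\Psi$: the diagonal chaos is handled by \eqref{eq:estimate_h_xk^2G_k-k^x-x}, and for the off-diagonal chaos the quadratic large deviation bound (C.4) bounds its square by $\sum_{k\neq l}s_{xk}s_{xl}\abs{G^{(x,-x)}_{k,-l}}^2\prec\Lambda_o^2+M^{-1}\prec\Psi^2$, where one estimates $\abs{G^{(x,-x)}_{k,-l}}\prec\Lambda_o\prec\Psi$ for $k\neq-l$ by Lemma \ref{lem:aux_estimate_G_ij} and uses $s_{xl}\leq M^{-1}$, $\abs{G^{(x,-x)}_{kk}}\prec1$ for the terms with $k=-l$; no Ward identity, and hence no factor $(\Im m+\Lambda)/(M\eta)$, enters. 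Together with $\abs{G^{(x)}_{-x,-x}}\prec1$ this gives $\abs{Y_x}\prec\Psi^2$, and then $\abs{\E_x Y_x}\prec\Psi^2$ by Lemma \ref{lem:stoch_dom_partial_expectation}. The paper then simply writes $\Upsilon_x=A_x+B_x-s_{x,-x}\E_x G^{(x)}_{-x,-x}-\E_x Y_x+\F_x(1/G_{xx})$ (formula \eqref{eq:Upsilon_x_F_xG_xx^-1}) and needs the fluctuation averaging only for the last term with $t_{ik}=1/N$; with the pointwise bound on $Y_x$ your own decomposition would close as well, but as written your $[Y]$ step, and consequently your $[Z]$ step, does not.
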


\begin{proof}
If $x \neq -x$ then we obtain from Schur's complement formula \eqref{eq:Schur_formula} and the definition of $\Upsilon_x$ 
\begin{equation}
\Upsilon_x=A_x+B_x-s_{x,-x}\E_xG_{-x,-x}^{(x)}-\E_x Y_x +\mathbb F_x\frac{1}{G_{xx}}.
\label{eq:Upsilon_x_F_xG_xx^-1}
\end{equation}
The fluctuation averaging \eqref{eq:fluct_averag1} with $t_{ik}=1/N$ yields $[\mathbb{F}_xG_{xx}^{-1}]\in O_\prec(\Psi^2)$.
Obviously, we have $\abs{A_x}\prec \Psi^2$ and $\abs{B_x}\prec \Psi^2$ by Lemma \ref{lem:aux_estimate_G_ij}. 
Lemma \ref{lem:stoch_dom_partial_expectation}, Lemma \ref{lem:aux_estimate_G_ij} and \eqref{eq:estimate_h_xy_stoch_dom} imply 
$\abs{s_{x,-x}\E_xG_{-x,-x}^{(x)}} \prec M^{-1} \leq \Psi^2$ due to the first estimate in \eqref{eq:definition_deterministic_control}.

Using \eqref{eq:estimate_h_xk^2G_k-k^x-x} and the first two steps in \eqref{eq:bound_h_xk_G_kl^x-x_h_l-x_k_neq_l} with $\varphi=1$ we obtain
\begin{equation}
\left|\sum_{k,l}^{(x,-x)}h_{xk}G_{kl}^{(x,-x)}h_{l,-x}\right| \prec \Psi.
\label{eq:lem_est_fluct_averag_fourfold}
\end{equation}
Thus, the representation of $Y_x$ in \eqref{eq:expansion_Y_x} and the application of Lemma \ref{lem:aux_estimate_G_ij} yield $\abs{Y_x}\prec \Psi^2$.
Hence, Lemma \ref{lem:stoch_dom_partial_expectation} 
implies $\abs{\E_xY_x} \prec \Psi^2$. For $x=-x$ the relation \eqref{eq:Upsilon_x_F_xG_xx^-1} without the second to fourth term on the right hand side and $\abs{A_x} \prec \Psi^2$ hold true   
and $\abs{[\Upsilon]}\prec \Psi^2$ follows from \eqref{eq:Upsilon_x_F_xG_xx^-1}.
\end{proof}

\noindent Proposition \ref{pro:fourfold_basic_bound_Lambda}, Proposition \ref{pro:fourfold_iteration_step} and Lemma \ref{lem:fluctuation_averaging_fourfold} imply Theorem \ref{thm:Main_Fourfold} along the same 
lines as Proposition 5.3, Proposition 5.6 and Lemma 5.7 in \cite{EJP2473} finish the proof of Theorem 5.1 in \cite{EJP2473}.

\section{Proof of the Fluctuation Averaging} \label{sec:fluct_averag}

In this section, we verify the fluctuation averaging, i.e. Theorem \ref{thm:fluct_averag1} and Theorem \ref{thm:fluct_averag2}. To this end, we 
transfer the proof of the fluctuation averaging given in \cite{EJP2473} to our setting.
We only highlight the differences due to the special counterdiagonal terms $G_{x,-x}$.

We start with two preparatory lemmas. The following result is the analogue of Lemma B.1 in \cite{EJP2473} whose proof works in the current 
situation as well. Recall that $\E_x X = \E[X|H^{(x,-x)}]$ is the expectation conditioned on the minor $H^{(x,-x)}$ and $\F_x X = X - \E_x X$ for an integrable random variable 
$X$ (cf. definition \ref{def:minors_()} and \ref{def:partial_expectation}). 

\begin{lem} \label{lem:stoch_dom_partial_expectation}
Let $\Psi$ be a deterministic control parameter satisfying $\Psi \geq N^{-C}$ and let $X(u)$ be nonnegative random variables 
such that for every $p\in \N$ there exists a constant $c_p$ with $\E[X(u)^p]\leq N^{c_p}$ for all large $N$. If $X(u) \prec \Psi$ uniformly in $u$ then
\begin{equation*}
\E_x X(u)^n \prec \Psi^n,\qquad \mathbb F_x X(u)^n \prec \Psi^n,\qquad \E X(u)^n \prec \Psi^n
\end{equation*} 
uniformly in $u$ and in $x$.
\end{lem}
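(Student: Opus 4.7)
The plan is to prove the three statements in order: first the deterministic bound $\E X(u)^n \prec \Psi^n$, then the conditional version $\E_x X(u)^n \prec \Psi^n$ via Markov and Jensen, and finally $\F_x X(u)^n \prec \Psi^n$ by the triangle inequality. Uniformity in $u$ and $x$ will be automatic from the uniformity of the hypotheses.

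For the first statement, I would fix $\eps > 0$ and split the expectation according to whether $X(u) \leq N^{\eps/n}\Psi$ or not. The first piece contributes at most $N^\eps \Psi^n$ directly. On the complementary event, I apply the Cauchy--Schwarz inequality:
\[
\E\bigl[X(u)^n \mathbf 1(X(u) > N^{\eps/n}\Psi)\bigr] \leq \bigl(\E X(u)^{2n}\bigr)^{1/2}\bigl(\P(X(u) > N^{\eps/n}\Psi)\bigr)^{1/2}.
\]
The assumed moment bound gives $\E X(u)^{2n} \leq N^{c_{2n}}$, while the stochastic domination $X(u)\prec \Psi$ gives $\P(X(u) > N^{\eps/n}\Psi) \leq N^{-D}$ uniformly in $u$ for any $D$ and $N$ sufficiently large. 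Choosing $D$ large enough (depending on $c_{2n}$, $C$ and $n$) and using the hypothesis $\Psi \geq N^{-C}$ to absorb the $N$-power, this second contribution is bounded by $\Psi^n$. Thus $\E X(u)^n \leq 2N^\eps \Psi^n$ uniformly in $u$, which is the deterministic bound $\E X(u)^n \prec \Psi^n$.

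For the conditional statement, I would apply Markov's inequality at power $p \in \N$ to be chosen large. For $\eps > 0$,
\[
\P\bigl(\E_x X(u)^n > N^\eps \Psi^n\bigr) \leq N^{-p\eps}\Psi^{-np}\,\E(\E_x X(u)^n)^p \leq N^{-p\eps}\Psi^{-np}\,\E X(u)^{np},
\]
where the last step uses Jensen's inequality applied conditionally on $H^{(x,-x)}$. Invoking the first part with $n$ replaced by $np$ gives $\E X(u)^{np} \leq N^{\eps/2}\Psi^{np}$ for $N$ large, so the probability is at most $N^{-p\eps + \eps/2}$. Choosing $p$ large (depending on the target $D$) yields the claim. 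The bound for $\F_x X(u)^n$ then follows from $|\F_x X(u)^n| \leq X(u)^n + \E_x X(u)^n$, each term being dominated by $\Psi^n$.

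The only genuine subtlety is balancing the exponents in the Cauchy--Schwarz step so that the polynomial factor $N^{c_{2n}/2}$ can be absorbed into $\Psi^n$; this is precisely where the hypothesis $\Psi \geq N^{-C}$ enters, allowing one to pick $D \geq c_{2n} + 2nC$. Everything else is routine, and uniformity in $u$ and $x$ is preserved at each step because the underlying hypotheses ($X(u)\prec \Psi$ uniformly in $u$, and the moment bounds being $N$-dependent but $u$-independent) are uniform.
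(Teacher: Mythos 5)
Your proof is correct and takes essentially the same route as the paper, which for this lemma simply defers to Lemma B.1 of \cite{EJP2473}: there too the argument is to control $\E X^{np}$ by splitting on the event $\{X > N^{\eps'}\Psi\}$, using Cauchy--Schwarz with the high-moment bound and $\Psi \geq N^{-C}$ to absorb the bad piece, and then passing to $\E_x$ via Markov and Jensen and to $\F_x$ via the triangle inequality. Nothing to add.
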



This Lemma will be used throughout the following arguments. The trivial condition $\E[X(u)^p]\leq N^{c_p}$ will always be fulfilled. 
%
The following Lemma which replaces (B.5) in \cite{EJP2473} gives an auxiliary bound for estimating high moments of $\abs{\sum_k t_{ik}\F_k G_{kk}^{-1}}$
when there are bounds on $\Lambda = \max_{x,y} \abs{G_{xy}-\delta_{xy}m}$ and $\Lambda_o=\max_{x\neq y} \abs{G_{xy}}$ (cf. \eqref{eq:definition_Lambda}). 

\begin{lem} \label{lem:prep_fluct_3}
Let $\mathbf D$ be a spectral domain.
Suppose $\Lambda \prec \Psi$ and $\Lambda_o \prec \Psi_o$ for some deterministic control parameters $\Psi$ and $\Psi_o$ which 
satisfy \eqref{eq:deterministic_control_general}. Then for fixed $p \in \N$ we have 
\begin{equation}
\left| \mathbb F_x \left(G_{xx}^{(\mathbb T)}\right)^{-1} \right| \prec \Psi_o
\label{eq:prep_fluct3}
\end{equation}
uniformly in $\mathbb T \subset \N$, $\abs{\mathbb T} \leq p$, $x \notin \mathbb T \cup -\mathbb T$ and $z \in \mathbf D$.
\end{lem}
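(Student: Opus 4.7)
The plan is to follow the proof of (B.5) in \cite{EJP2473}, adapting it to the fact that in our setting $\E_x$ conditions on the smaller $\sigma$-algebra generated by $H^{(x,-x)}$ rather than $H^{(x)}$. I would first treat the case $x\neq -x$; the case $x=-x$ involves strictly fewer terms and is essentially identical to \cite{EJP2473}.

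\textbf{Setup.} Schur's formula \eqref{eq:Schur_formula} applied to $H^{(\mathbb T)}$, together with $\F_x z = 0$, gives
\[ \F_x \frac{1}{G_{xx}^{(\mathbb T)}} = h_{xx} - \F_x \sum_{a,b}^{(x,\mathbb T)} h_{xa} G_{ab}^{(x,\mathbb T)} h_{bx}. \]
Since $H^{(x,\mathbb T)}$ still contains row and column $-x$, its entries are not independent of the data we condition away. I would apply the resolvent identity \eqref{eq:resolvent_identity1} with pivot $-x$ inside the sum, mimicking the expansion \eqref{eq:expansion_expectation_Schur} used in the derivation of the self-consistent equation. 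This produces five pieces that I would bound in absolute value by $\Psi_o$ one by one; the application of $\F_x$ then costs nothing by Lemma \ref{lem:stoch_dom_partial_expectation}.

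\textbf{The easy terms.} The pure $a=b=-x$ contribution $h_{x,-x} G_{-x,-x}^{(x,\mathbb T)} h_{-x,x}$ is $\prec M^{-1}$ by \eqref{eq:estimate_h_xy_stoch_dom} and Lemma \ref{lem:aux_estimate_G_ij}. For the ``bulk'' piece $\sum_{a,b}^{(x,-x,\mathbb T)} h_{xa} G_{ab}^{(x,-x,\mathbb T)} h_{bx}$, the $G$-factor is independent of the $x$- and $-x$-rows, so I would split $a=b$ versus $a\neq b$ and apply (C.2) and (C.4) of \cite{EJP2473}: the diagonal piece contributes $\prec M^{-1/2}$ by (C.2) and \eqref{eq:condition_s_xy}, while the off-diagonal piece is at most $\sqrt{\sum_{a\neq b}s_{xa}s_{xb}|G_{ab}^{(x,-x,\mathbb T)}|^2}\prec\Psi_o$, using $|G_{ab}^{(x,-x,\mathbb T)}|\prec\Psi_o$ for every $a\neq b$ (including the counterdiagonal pairs $b=-a$, thanks to $\Lambda_-\prec\Psi_o$) together with $\sum_a s_{xa}\le 1$.

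\textbf{The new terms and main obstacle.} The remaining two cross-sums and the Schur correction all reduce to controlling the scalar
\[ \sum_{a}^{(x,-x,\mathbb T)} h_{xa}\, G_{a,-x}^{(x,\mathbb T)}. \]
A direct quadratic LDE is unavailable because $G^{(x,\mathbb T)}$ retains the randomness of row $-x$. The key step is to apply \eqref{eq:resolvent_identity2} in reverse and rewrite this sum as $-G_{x,-x}^{(\mathbb T)}/G_{xx}^{(\mathbb T)} - h_{x,-x}G_{-x,-x}^{(x,\mathbb T)}$, which by the a priori hypothesis $\Lambda_-\prec\Psi_o$, Lemma \ref{lem:aux_estimate_G_ij}, and \eqref{eq:estimate_h_xy_stoch_dom} is $\prec\Psi_o+M^{-1/2}\prec\Psi_o$. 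Multiplying by $|h_{-x,x}|\prec M^{-1/2}$ controls the two cross-sums, and squaring (using the fourfold symmetry $G_{-x,b}^{(x,\mathbb T)}=\overline{G_{b,-x}^{(x,\mathbb T)}}$) together with the bounded factor $(G_{-x,-x}^{(x,\mathbb T)})^{-1}$ controls the Schur correction by $\Psi_o^2$. Collecting contributions yields the claim for $x\neq -x$, and the case $x=-x$ follows from the bulk estimate alone. The main obstacle is precisely this cross-sum: it is the new feature imposed by the fourfold symmetry and is exactly what forces the a priori hypothesis $\Lambda_o\prec\Psi_o$ (rather than just $\Lambda\prec\Psi$) into the statement of the lemma.
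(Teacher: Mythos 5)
Your proposal is correct and follows essentially the paper's own route: Schur's formula plus the expansion around the pivot $-x$ as in \eqref{eq:expansion_expectation_Schur}, the large deviation bounds (C.2)/(C.4) of \cite{EJP2473} for the bulk bilinear form, Lemma \ref{lem:stoch_dom_partial_expectation} to dispose of $\F_x$, and the a priori bound $\Lambda_o\prec\Psi_o$ to absorb the counterdiagonal entries; your only variation is bounding the Schur correction by converting each cross-sum back to $G_{x,-x}^{(\mathbb T)}/G_{xx}^{(\mathbb T)}$ via \eqref{eq:resolvent_identity2} (the same trick the paper uses for $C_x$ in \eqref{eq:second_representation_C_x}) instead of the bilinear-form estimate \eqref{eq:lem_fluct_averag_est_Eh^2=0}, which is an equally valid route resting on the same hypothesis. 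One small slip: the identity $G_{-x,b}^{(x,\mathbb T)}=\overline{G_{b,-x}^{(x,\mathbb T)}}$ is not valid for complex $z$ (the adjoint relation produces $\bar z$, and the fourfold symmetry is not inherited by the asymmetric minor $(\mathbb T,x)$), but it is also not needed: the second factor $\sum_b^{(\mathbb T,x,-x)} G_{-x,b}^{(\mathbb T,x)}h_{bx}$ is bounded by exactly the same argument using the second identity in \eqref{eq:resolvent_identity2}, so the conclusion is unaffected.
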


\begin{proof}
If $x=-x$ then the proof of \eqref{eq:prep_fluct3} is exactly the same as the proof of (B.5) in \cite{EJP2473}. 
For $x\neq -x$ we start with \eqref{eq:Schur_formula}. 
Since $x,-x \notin \mathbb T$ we obtain as in the proof of \eqref{eq:expansion_expectation_Schur} by using the first resolvent identity 
\eqref{eq:resolvent_identity1} that 
\begin{align}
\sum_{a,b}^{(\mathbb T, x)} h_{xa}G_{ab}^{(\mathbb T, x)}h_{bx}=& C_x^{(\mathbb T)}
+\sum_{a,b}^{(\mathbb T,x,-x)} h_{xa}G_{ab}^{(\mathbb T,x,-x)}h_{bx} + \left(G_{-x,-x}^{(\mathbb T,x)}\right)^{-1}\sum_{a,b}^{(\mathbb T,x,-x)}
h_{xa}G_{a,-x}^{(\mathbb T,x)}G_{-x,b}^{(\mathbb T,x)}h_{bx},
\label{eq:exp_Schur_last_summand_T}
\end{align}
where we used the definition
$$C_x^{(\mathbb T)}\defeq h_{x,-x}G_{-x,-x}^{(\mathbb T,x)}h_{-x,x}+\sum_{a}^{(\mathbb T,x,-x)}h_{xa}G_{a,-x}^{(\mathbb T,x)}h_{-x,x}+
\sum_{b}^{(\mathbb T,x,-x)} h_{x,-x}G_{-x,b}^{(\mathbb T,x)}h_{bx}.$$
The assumptions of Lemma \ref{lem:stoch_dom_partial_expectation} are fulfilled for each term of the expansion in 
\eqref{eq:exp_Schur_last_summand_T} by \eqref{eq:estimate_h_xy_stoch_dom} and the second estimate in \eqref{eq:deterministic_bound_G_ij^T}. 

Similar to the proof of \eqref{eq:est_C_x} we get $\abs{C_x^{(\mathbb T)}}\prec M^{-1/2}\leq \Psi_o$ by \eqref{eq:deterministic_control_general}.
Using the first step in \eqref{eq:first_estimate_X_x} and the argument in \eqref{eq:est_sum_abs_h^2-s_G_ii} we get 
\begin{equation*}
\left|\mathbb F_x\sum_{a,b}^{(\mathbb T,x,-x)} h_{xa}G_{ab}^{(\mathbb T,x,-x)}h_{bx}\right| \leq  \left|\sum_{a\neq b}^{(\mathbb T,x,-x)} h_{xa}
G_{ab}^{(\mathbb T, x, -x)}h_{bx}\right|+\left|\sum_{a}^{(\mathbb T,x,-x)}\left(\abs{h_{xa}}^2-s_{xa}\right)G_{aa}^{(\mathbb T, x, -x)}\right| 
\prec \Psi_o
\end{equation*}
where we used that $\Psi_o$ fulfills \eqref{eq:deterministic_control_general}.
The estimate 
\begin{equation}
\left|\sum_{k,l}^{(\mathbb T,x,-x)} h_{xk}G_{kl}^{(\mathbb T,x,-x)}h_{l,-x}\right| \prec \Psi_o
\label{eq:lem_fluct_averag_est_Eh^2=0}
\end{equation}
which follows from adapting \eqref{eq:estimate_h_xk^2G_k-k^x-x} and the first step in \eqref{eq:bound_h_xk_G_kl^x-x_h_l-x_k_neq_l} implies
\begin{equation}
\left|\left(G_{-x,-x}^{(\mathbb T,x)}\right)^{-1}\sum_{a,b}^{(\mathbb T,x,-x)}
h_{xa}G_{a,-x}^{(\mathbb T,x)}G_{-x,b}^{(\mathbb T,x)}h_{bx}\right| \prec \Psi_o^2 \prec \Psi_o
\label{eq:estimate_fluct_averag_Y_x_general}
\end{equation}
using a similar representation as in \eqref{eq:expansion_Y_x} and Lemma \ref{lem:aux_estimate_G_ij}.
By Lemma \ref{lem:stoch_dom_partial_expectation} these estimates imply $$\left|\mathbb F_x  \sum_{a,b}^{(\mathbb T, x)} 
h_{xa}G_{ab}^{(\mathbb T, x)}h_{bx}\right|\prec \Psi_o.$$
Thus, the claim is obtained by applying Schur's complement formula \eqref{eq:Schur_formula} to $G_{xx}^{(\mathbb T)}$ 
and observing that $\abs{\mathbb F_x(h_{xx}-z)}=\abs{h_{xx}}\prec M^{-1/2} \leq \Psi_o$ as $h_{xx}$ is independent of $H^{(x,-x)}$ and $\E h_{xx}=0$.
\end{proof}

\begin{proof}[Proof of Theorem \ref{thm:fluct_averag2}] 
The proof is similar to the proof of Theorem 4.7 on pages 48 to 53 in \cite{EJP2473} so we only describe the changes needed to transfer this proof 
to its version for the fourfold symmetry. 

First, we use Lemma \ref{lem:prep_fluct_3} instead of (B.5). Moreover, we have to change some notions introduced in 
the proof of Theorem 4.7. In the middle of page 49, an equivalence relation on the set $\{1,\ldots,p\}$ is introduced which has to be substituted
by the following equivalence relation. Starting with $\mathbf k\defeq (k_1,\ldots,k_p)\in (\Z\slash N\Z)^p$ and $r,s \in \{1,\ldots,p\}$ we
define $r \sim s$ if and only if $k_r =k_s$ or $k_r=-k_s$. As in \cite{EJP2473} the summation over all $\mathbf k$ is regrouped with respect 
to this equivalence relation and the notion of ``lone'' labels has to be understood with respect to this equivalence relation. We use the same notation
$\mathbf k_L$ for the set of summation indices corresponding to lone labels. Differing from the definition in \cite{EJP2473} we call a
resolvent entry $G_{xy}^{(\mathbb T)}$ with $x,y \notin \mathbb T$ \emph{maximally expanded} if $\mathbf k_L \cup - \mathbf k_L \subset 
\mathbb T \cup \{x,y\}$. Correspondingly, we denote by $\mathcal A$ the set of monomials in the off-diagonal entries $G_{xy}^{(\mathbb T)}$ with 
$\mathbb T \subset \mathbf k_L \cup -\mathbf k_L$, $x\neq y$ and $x,y \in \mathbf k \backslash \mathbb T$ (considering $\mathbf k$ as a subset of 
$\Z\slash N\Z$) and the inverses of diagonal entries $1/G_{xx}^{(\mathbb T)}$ with $\mathbb T \subset \mathbf k_L \cup -\mathbf k_L$ and 
$x \in \mathbf k \backslash \mathbb T$. With these alterations the algorithm can be applied as in \cite{EJP2473}. In the proof of (B.15) 
the assertion $(*)$ has to be replaced by  

\vspace*{0.1cm}
\begin{tabular}{cl}
$(*)$ & For each $s \in L$ there exists $r=\tau(s) \in \{1,\ldots,p\}\backslash\{s\}$ such that the monomial $A^r_{\sigma_r}$\\ 
&contains a resolvent entry with lower index $k_s$ or $-k_s$.  
\end{tabular}
\vspace*{0.1cm}

To prove this claim, we suppose by contradiction that there is $s \in L$ such that $A^r_{\sigma_r}$ does not contain $k_s$ and $-k_s$ as
lower index for all $r\in \{1,\ldots,p\}\backslash\{s\}$. Without loss of generality we assume $s=1$. This implies that each resolvent entry
in $A_{\sigma_r}^r$ contains $k_1$ and $-k_1$ as upper index since $A_{\sigma_r}^r$ is maximally expanded for all $r\in\{2,\ldots,p\}$.
Therefore, $A_{\sigma_r}^r$ is independent of $k_1$ as defined in Definition \ref{def:independent}. Using \eqref{eq:independent_expectation}
and proceeding as in \cite{EJP2473} concludes the proof of  $(*)$. 

Following verbatim the remaining steps in the proof of Theorem 4.7 in \cite{EJP2473} establishes the assertion of Theorem \ref{thm:fluct_averag2}.
\end{proof}

\noindent Now, we deduce Theorem \ref{thm:fluct_averag1} from Theorem \ref{thm:fluct_averag2}. 
\begin{proof}[Proof of Theorem \ref{thm:fluct_averag1}]
The first estimate in \eqref{eq:fluct_averag1} follows from Theorem \ref{thm:fluct_averag2} directly by setting $\Psi_o\defeq \Psi$ and using $\Lambda_o \leq \Lambda \prec \Psi_o$.

To verify the second estimate in \eqref{eq:fluct_averag1} we use the fourth estimate in Lemma \ref{lem:aux_estimate_G_ij} which implies 
\begin{equation}
\abs{\F_x G_{xx}^{(\mathbb T)}} =\abs{\F_x \left(G_{xx}^{(\mathbb T)}-m\right)} \prec \Psi.
\label{eq:est_F_x_G_xx_T}
\end{equation}
Now, following the proof of Theorem \ref{thm:fluct_averag2} verbatim with $\Psi_o\defeq \Psi$ and replacing the usage of Lemma \ref{lem:prep_fluct_3}
by \eqref{eq:est_F_x_G_xx_T} yield the second estimate in \eqref{eq:fluct_averag1}.

Similarly, the third estimate in \eqref{eq:fluct_averag1} is proved by following the proof of Theorem \ref{thm:fluct_averag2} verbatim with $\Psi_o\defeq \Psi$ and 
Lemma \ref{lem:prep_fluct_3} replaced by 
\[ \abs{\F_x G_{x,-x}^{(\mathbb T)}} \prec \Lambda_o \prec \Psi\]
for $x\neq -x$ which is a consequence of Lemma \ref{lem:aux_estimate_G_ij} and Lemma \ref{lem:stoch_dom_partial_expectation}.

Next, we establish \eqref{eq:fluct_averag2}. We start from Schur's complement formula \eqref{eq:Schur_formula} with $\mathbb T=\emptyset$ and use \eqref{eq:semicircle2} to get
\begin{equation}
\frac{1}{G_{xx}}=\frac{1}{m}+h_{xx}-\left(\sum_{k,l}^{(x)}h_{xk}G_{kl}^{(x)}h_{lx}-m\right).
\label{eq:Schur_manipulated}
\end{equation}
Using Lemma \ref{lem:aux_estimate_G_ij} with $\varphi=1$ and the first estimate in \eqref{eq:estimate_m} we get 
$$\left|\frac{1}{G_{xx}}-\frac{1}{m}\right|=\left|\frac{G_{xx}-m}{G_{xx}m}\right|\prec \abs{G_{xx}-m}\prec \Psi.$$
Thus, $\abs{h_{xx}-\left(\sum_{k,l}^{(x)}h_{xk}G_{kl}^{(x)}h_{lx}-m\right)}\prec \Psi$ as well. Therefore, we can expand the inverse of the right-hand 
side of \eqref{eq:Schur_manipulated} around $1/m$ which yields 
\begin{equation}
v_x\defeq G_{xx}-m=m^2\left(-h_{xx}+\sum_{k,l}^{(x)}h_{xk}G_{kl}^{(x)}h_{lx}-m\right) +g_x
\label{eq:exp_v_x_fluct}
\end{equation}
with error terms $g_x$ such that $\abs{g_x}\prec \Psi^2$ uniformly in $x$.
By \eqref{eq:expansion_expectation_Schur}, \eqref{eq:expansion_term4}, \eqref{eq:definition_C_x} and \eqref{eq:definition_Y_x} we have for $x\neq -x$ the representation
\begin{equation}
\sum_{k,l}^{(x)}h_{xk}G_{kl}^{(x)}h_{lx}=\sum_a s_{xa} G_{aa}-A_x-B_x-s_{-x,x}G_{-x,-x}^{(x)}+Z_x+Y_x+C_x+s_{-x,x}G_{-x,-x}^{(x)}.
\label{eq:estimate_fluct_aux_2}
\end{equation}
Taking the expectation $\E_x$ of \eqref{eq:exp_v_x_fluct} we want to prove that 
\begin{equation}
\E_x v_x = m^2 \sum_a s_{xa} v_a +f_x,
\label{eq:representation_E_x_v_x}
\end{equation}
where $\abs{f_x}\prec \Psi^2$ uniformly in $x$. From \eqref{eq:expansion_term4} we get that the sum of the first four summands on the right-hand side 
of \eqref{eq:estimate_fluct_aux_2} 
is $H^{(x,-x)}$-measurable. Therefore, it suffices to show that all summands except the first one on the right-hand 
side of \eqref{eq:estimate_fluct_aux_2} are bounded by $\Psi^2$ uniformly in $x$. For $A_x$ and $B_x$ this follows directly from their definitions 
in \eqref{eq:definition_A_x_B_x}. Since $Z_x=\F_x X_x$ for some random variable $X_x$ we get $\E_x Z_x=0$. The representation 
\eqref{eq:second_representation_C_x} for $C_x$ and Lemma \ref{lem:aux_estimate_G_ij} yield
$\abs{C_x} \prec M^{-1} +M^{-1/2}\Psi \prec \Psi^2$
by \eqref{eq:deterministic_control_general}.  The bound \eqref{eq:estimate_fluct_averag_Y_x_general} with $\mathbb T=\emptyset$ gives $\abs{Y_x}\prec \Psi^2$
uniformly in $x$. 
If $x=-x$ then the argumentation in \cite{EJP2473} can be applied. 
This finishes the proof of \eqref{eq:representation_E_x_v_x}. 

Therefore, since $\E_x+\F_x=1$ we have
\begin{align}
w_a \defeq \sum_x t_{ax}v_x& =\sum_x t_{ax}\E_x v_x +\sum_x t_{ax}\F_x v_x = m^2\sum_{x,y} t_{ax}s_{xy}v_y + F_a \nonumber\\
&=m^2 \sum_{x,y}s_{ax}t_{xy} v_y +F_a = m^2 \sum_x s_{ax} w_x +F_a \label{eq:representation_w_a},
\end{align}
where we used \eqref{eq:representation_E_x_v_x} with the notation $F_a \defeq \sum_x t_{ax}(f_x + \F_x v_x)$ in the third step and in the fourth step that $T$ and $S$ commute.
Note that $\abs{F_a} \prec \Psi^2$ uniformly in $a$ as $\abs{\sum_x t_{ax}\F_x v_x}=\abs{\sum_x t_{ax}\F_xG_{xx}} \prec \Psi^2$ 
by the second estimate in \eqref{eq:fluct_averag1}. Introducing the vectors $\mathbf w \defeq (w_a)_{a\in \Z\slash N\Z}$ and $\mathbf F\defeq 
(F_a)_{a\in \Z\slash N\Z}$ and writing \eqref{eq:representation_w_a} in matrix form we get
$$\mathbf w= m^2 S\mathbf w +\mathbf F.$$
Inverting the last equation yields
$$\mathbf w = (1-m^2S)^{-1} \mathbf F.$$
Recalling the definition \eqref{eq:definition_Gamma} we have
$$\norm{\mathbf w}_\infty \leq \Gamma_S \norm{\mathbf F}_\infty \prec \Gamma_S \Psi^2$$
since $\abs{F_a}\prec \Psi^2$ uniformly in $a$ is equivalent to $\norm{\mathbf F}_\infty \prec\Psi^2$. This proves \eqref{eq:fluct_averag2}.

In order to prove \eqref{eq:fluct_averag2_G_k-k} it suffices to verify 
\begin{equation}
\E_x G_{x,-x} = m^2\sum_{a\neq -a} (\E h_{xa}^2) G_{a,-a} + f_x 
\label{eq:representation_E_x_G_x-x}
\end{equation}
with $\abs{f_x}\prec \Psi^2$ uniformly in $x$. Then \eqref{eq:fluct_averag2_G_k-k} follows from the same reasoning as in the proof of \eqref{eq:fluct_averag2} with $S$ replaced by $R$ and 
\[w_x\defeq \sum_{a\neq -a}t_{xa}G_{a,-a}.\] 
To compute the partial expectation $\E_x G_{x,-x}$ we use the expansion 
\begin{align}
G_{x,-x}= & ~~m^2 \sum_a^{(x,-x)}(\E h^2_{xa})G_{a,-a}^{(x,-x)} + m^2 \sum_{a\neq b}^{(x,-x)}h_{xa}G_{a,-b}^{(x,-x)}h_{xb} +m^2 \sum_a^{(x,-x)}\left(h_{xa}^2-\E h^2_{xa}\right)G_{a,-a}^{(x,-x)} \nonumber \\
 & + (m^2-G_{xx}G_{-x,-x}^{(x)})h_{x,-x} -m^2 h_{x,-x} +(G_{xx}G_{-x,-x}^{(x)}-m^2)\sum_{a\neq b}^{(x,-x)}h_{xa}G_{a,-b}^{(x,-x)}h_{xb} \nonumber \\
 & +(G_{xx}G_{-x,-x}^{(x)}-m^2)\sum_a^{(x,-x)}h^2_{xa}G_{a,-a}^{(x,-x)}
\label{eq:expansion_G_x-x_fluct_averag}
\end{align}
which follows from the resolvent identities in a similar way as \eqref{eq:self_averaging_G_x-x}.

The first summand in \eqref{eq:expansion_G_x-x_fluct_averag} is $H^{(x,-x)}$-measurable. Using \eqref{eq:resolvent_identity1} twice and adding the two missing terms we obtain the first 
summand on the right-hand side of \eqref{eq:representation_E_x_G_x-x}. The error terms originating from the usage of the resolvent identities and the added terms are obviously dominated by $\Psi^2$.
The partial expectations with respect to $H^{(x,-x)}$ of the second and the fifth term vanish. 
For the remaining terms we use Lemma \ref{lem:stoch_dom_partial_expectation}.  First, 
$\abs{m^2-G_{xx}G_{-x,-x}^{(x)}}\prec \Psi$ because of the triangle inequality, Lemma \ref{lem:aux_estimate_G_ij} and the second estimate in \eqref{eq:estimate_m}.
Thus, using \eqref{eq:estimate_h_xy_stoch_dom} and \eqref{eq:def_determ_con_para} for the fourth term, the first step in \eqref{eq:bound_h_xk_G_kl^x-x_h_l-x_k_neq_l} for the sixth term and \eqref{eq:estimate_h_xk^2G_k-k^x-x} 
for the seventh term 
we get that these summands are dominated by $\Psi^2$. Similarly to \eqref{eq:est_sum_abs_h^2-s_G_ii} we see that the third summand is dominated by $\Psi^2$ using the Large Deviation Bound (C.2) in 
\cite{EJP2473} and the first estimate in Lemma \ref{lem:aux_estimate_G_ij}. Lemma \ref{lem:stoch_dom_partial_expectation} establishes \eqref{eq:representation_E_x_G_x-x} which finishes the proof of 
Theorem \ref{thm:fluct_averag1}.
\end{proof}

\bibliographystyle{amsplain}
\bibliography{literatur}

\end{document}